\documentclass[a4paper]{article} 
\usepackage{amsmath, amssymb, amsthm}
\usepackage[mathscr]{euscript}     
\usepackage{dsfont} 
\usepackage[usenames,dvipsnames]{color}
\usepackage{graphicx} 
\usepackage{qip_compatible} 
\usepackage[section]{thmenvironments} 
\usepackage{layoutcommands} 
\usepackage{underscore} 
\usepackage[shortcuts]{extdash} 
\usepackage{authblk} 
\usepackage{appendix} 
\usepackage{microtype} 
\usepackage{enumitem} 
\setenumerate[1]{label=\textbf{\arabic*.}, ref=\arabic*}
\setenumerate[2]{label=\theenumi.\arabic*., ref=\theenumi.\arabic*}
\setenumerate[3]{label=\theenumi.\theenumii.\arabic*., ref=\theenumi.\theenumii.\arabic*}
\usepackage[margin=10pt,font=small,labelfont=bf,labelsep=endash,hypcap=true]{caption}
\usepackage[margin=10pt,font=small,labelfont=bf,labelformat=parens,labelsep=space,hypcap=false]{subcaption}
\usepackage{tikz} 
\usetikzlibrary{fit} 

\tikzset{sArrow/.style={->,>=stealth,thick}}
\tikzset{gArrow/.style={->,>=stealth,thick,gray}}
\tikzset{arrowLabel/.style={auto}}
\tikzset{blocResource/.style={draw,minimum width=4cm,minimum height=1.9cm}}
\tikzset{brnode/.style={minimum width=3.4cm,minimum height=.2cm}}
\tikzset{largeResource/.style={draw,minimum width=3.736cm,minimum height=3.25cm}}
\tikzset{medResource/.style={draw,minimum width=3.736cm,minimum height=1.75cm}}
\tikzset{thinResource/.style={draw,minimum width=1.618*2cm,minimum height=1cm}}
\tikzset{protocol/.style={draw,minimum width=1.545cm,minimum height=2.5cm}}
\tikzset{protocolLong/.style={draw,minimum height=1cm,minimum width=2.8cm}}
\tikzset{pnode/.style={minimum width=1cm,minimum height=.5cm}}
\tikzset{simulator/.style={draw,minimum width=2.8cm,minimum height=1.7cm}}

\usepackage[linktoc=all]{hyperref} 
\usepackage[open,numbered]{bookmark} 
\usepackage{breakurl} 
\usepackage{cite} 
\usepackage{hyperlinks} 
\usepackage{eprint} 

\usepackage{algorithm}
\floatname{algorithm}{Protocol}
\usepackage{complexity}
%
%
%


\usepackage{xy}
\xyoption{matrix}
\xyoption{frame}
\xyoption{arrow}
\xyoption{arc}

\usepackage{ifpdf}
\ifpdf
\else
\PackageWarningNoLine{Qcircuit}{Qcircuit is loading in Postscript mode.  The Xy-pic options ps and dvips will be loaded.  If you wish to use other Postscript drivers for Xy-pic, you must modify the code in Qcircuit.tex}
\xyoption{ps}
\xyoption{dvips}
\fi

\entrymodifiers={!C\entrybox}

\newcommand{\ket}[1]{{\left\vert{#1}\right\rangle}}
\newcommand{\qw}[1][-1]{\ar @{-} [0,#1]}
\newcommand{\qwx}[1][-1]{\ar @{-} [#1,0]}
\newcommand{\cw}[1][-1]{\ar @{=} [0,#1]}

\newcommand{\gate}[1]{*+<.6em>{#1} \POS ="i","i"+UR;"i"+UL **\dir{-};"i"+DL **\dir{-};"i"+DR **\dir{-};"i"+UR **\dir{-},"i" \qw}
\newcommand{\meter}{*=<1.8em,1.4em>{\xy ="j","j"-<.778em,.322em>;{"j"+<.778em,-.322em> \ellipse ur,_{}},"j"-<0em,.4em>;p+<.5em,.9em> **\dir{-},"j"+<2.2em,2.2em>*{},"j"-<2.2em,2.2em>*{} \endxy} \POS ="i","i"+UR;"i"+UL **\dir{-};"i"+DL **\dir{-};"i"+DR **\dir{-};"i"+UR **\dir{-},"i" \qw}





\newcommand{\control}{*!<0em,.025em>-=-<.2em>{\bullet}}

\newcommand{\ctrl}[1]{\control \qwx[#1] \qw}

\newcommand{\targ}{*+<.02em,.02em>{\xy ="i","i"-<.39em,0em>;"i"+<.39em,0em> **\dir{-}, "i"-<0em,.39em>;"i"+<0em,.39em> **\dir{-},"i"*\xycircle<.4em>{} \endxy} \qw}

\newcommand{\multigate}[2]{*+<1em,.9em>{\hphantom{#2}} \POS [0,0]="i",[0,0].[#1,0]="e",!C *{#2},"e"+UR;"e"+UL **\dir{-};"e"+DL **\dir{-};"e"+DR **\dir{-};"e"+UR **\dir{-},"i" \qw}
\newcommand{\ghost}[1]{*+<1em,.9em>{\hphantom{#1}} \qw}

\newcommand{\gategroup}[6]{\POS"#1,#2"."#3,#2"."#1,#4"."#3,#4"!C*+<#5>\frm{#6}}

\newcommand{\lstick}[1]{*!R!<.5em,0em>=<0em>{#1}}


\newcommand{\Qcircuit}{\xymatrix @*=<0em>}



\makeatletter
\newcommand{\vast}{\bBigg@{3}}
\newcommand{\Vast}{\bBigg@{6.5}}
\makeatother
\newcommand{\djj}{d\kern-0.4em\char"16\kern-0.1em}

\hypersetup{
  pdftitle={Composable security of delegated quantum computation},
  pdfauthor={Vedran Dunjko, Joseph Fitzsimons, Christopher Portmann, Renato Renner},
  pdfstartview={FitH}
}



\newcommand{\err}{\text{err}}
\newcommand{\ok}{\text{ok}}
\newcommand{\inn}{\text{in}}
\newcommand{\out}{\text{out}}
\newcommand{\blind}{\text{blind}}
\newcommand{\verif}{\text{verif}}
\newcommand{\ctrlZ}{\text{ctrl-}Z}
\newcommand{\aR}{\mathscr{R}}
\newcommand{\aS}{\mathscr{S}}
\newcommand{\aT}{\mathscr{T}}
\newcommand{\aD}{\mathscr{D}}

\title{Composable security of delegated quantum computation}

\author[1,2]{Vedran Dunjko\thanks{Now at: Institute for Theoretical
    Physics, University of Innsbruck, Technikerstra{\ss}e 25, A-6020
    Innsbruck, Austria. \mailto{vedran.dunjko@uibk.ac.at}}}
\author[3,4]{Joseph F.\ Fitzsimons\email{joe.fitzsimons@nus.edu.sg}}
\author[5,6]{Christopher Portmann\email{chportma@phys.ethz.ch}}
\author[5]{Renato Renner\email{renner@phys.ethz.ch}}

\affil[1]{School of Informatics, University of Edinburgh, Edinburgh EH8 9AB, U.K.}
\affil[2]{Division of Molecular Biology, Ru\djj er Bo\v{s}kovi\'{c}
  Institute, Bijeni\v{c}ka cesta 54, P.P. 180, 10002 Zagreb, Croatia.}
\affil[3]{Singapore University of Technology and Design, 20 Dover Drive, Singapore 138682.}
\affil[4]{Centre for Quantum Technologies, National University of Singapore, Block S15, 3 Science Drive 2, Singapore 117543.}
\affil[5]{Institute for Theoretical Physics, ETH Zurich, 8093 Zurich, Switzerland.}
\affil[6]{Group of Applied Physics, University of Geneva, 1211 Geneva, Switzerland.}

\date{\today}

\begin{document}

\pdfbookmark[1]{Title page}{titlepage}

\maketitle

\begin{abstract}
  Delegating difficult computations to remote large computation
  facilities, with appropriate security guarantees, is a possible
  solution for the ever\-/growing needs of personal computing
  power. For delegated computation protocols to be usable in a larger
  context \--- or simply to securely run two protocols in parallel
  \--- the security definitions need to be composable. Here, we define
  composable security for delegated quantum computation.  We
  distinguish between protocols which provide only \emph{blindness}
  \--- the computation is hidden from the server \--- and those that
  are also \emph{verifiable} \--- the client can check that it has
  received the correct result. We show that the composable security
  definition capturing both these notions can be reduced to a
  combination of several distinct ``trace\-/distance\-/type'' criteria
  \--- which are, individually, non\-/composable security definitions.

  Additionally, we study the security of some known delegated quantum
  computation protocols, including Broadbent, Fitzsimons and Kashefi's
  Universal Blind Quantum Computation protocol. Even though these
  protocols were originally proposed with insufficient security
  criteria, they turn out to still be secure given the stronger
  composable definitions.
\end{abstract}


\newpage
\phantomsection
\pdfbookmark[1]{\contentsname}{sec:toc}
\tableofcontents
\newpage

\section{Introduction}
\label{sec:intro}

\subsection{Background}
\label{sec:intro.background}

It is unknown in what form quantum computers will be built. One
possibility is that large quantum servers may take a role similar to
that occupied by massive superclusters today. They would be available
as important components in large information processing clouds,
remotely accessed by clients using their home-based simple
devices. The issue of the security and the privacy of the computation
is paramount in such a setting.

Childs~\cite{Chi05} proposed the first such delegated quantum
computation (DQC) protocol, which hides the computation from the
server, i.e., the computation is blind. This was followed by Arrighi
and Salvail~\cite{AS06}, who introduced a notion of verifiability \---
checking that the server does what is expected \--- but only for a
restricted class of public functions. In recent years, this problem
has gained a lot of interest, with many papers proposing new
protocols, e.g.,
\cite{BFK09,ABE10,MDK10,DKL12,MF12,MF13,FK12,Mor12,SKM13,MK13,GMMR13,CMK13,MPDF13,Mor14},
and even small\-/scale experimental realizations~\cite{BKBFZW12,BFKW13}.

However, with the exception of recent work by Broadbent, Gutoski and
Stebila~\cite{BGS13}, none of the previous DQC papers consider the
\emph{composability} of the protocol. They prove security by showing
that the states held by the client and server fulfill some local
condition: the server's state must not contain any information about
the input and the client's final state must either be the correct
outcome or an error flag. Even though this means that the server
cannot \--- from the information leaked during a single execution of
the protocol in an isolated environment \--- learn the computation or
produce a wrong output without being detected, it does not guarantee
any kind of security in any realistic setting. In particular, if a
server treats two requests simultaneously or if the delegated
computation is used as part of a larger protocol (such as the quantum
coins of Mosca and Stebila~\cite{MS10}), these works on DQC cannot be
used to infer security. A \emph{composable security} framework must be
used for a protocol to be secure in an arbitrary environment. In the
following, we use the expression \emph{local} to denote the
non\-/composable security conditions previously used for DQC. This
term is chosen, because these criteria consider the state of a (local)
subsystem, instead of the global system as seen by a distinguisher in
composable security.\footnote{Standard terms for various forms of
  non\-/composable security, e.g., \emph{stand\-/alone} or
  \emph{sequential}, have precise definitions which do not apply to
  these security criteria.}

In fact, exactly these local properties have been proven to be
insufficient to define secure communication. There exist protocols
which are shown to both encrypt and authenticate messages by
fulfilling local criteria equivalent to the ones used in DQC \--- the
scheme is secure if the eavesdropper obtains no information about the
message from the ciphertext and authentic if the receiver either gets
the original message or an error flag. But if the eavesdropper learns
whether the message was transmitted faithfully or not, she learns some
information about this message~\cite{BN00,Kra01,MT10}. Since any
secure communication protocol can be seen as delegated computation for
the identity operation \--- Eve is required to apply the identity
operation to the message, but may cheat and try to learn or modify it
\--- there is a strict gap between security of DQC and previously used
local criteria.\footnoteremember{fn:gap}{An alternative example of
  this gap is as follows. The task is to compute a witness for a
  positive instance of an $\NP$ problem, and we do so with the
  following protocol: the server simply picks a witness at random and
  sends it to the client. Although the protocol does not achieve
  completeness, it appears to be sound: the protocol obviously does
  not leak any information about the input, since no information is
  sent from the client to the server. The client can also verify that
  the solution received is correct, and never accepts a wrong
  answer. But if the server ever learns whether the witness was
  accepted \--- e.g., it is composed with another protocol which makes
  this information public \--- he learns something about the input. If
  there are only two choices for the input with distinct witnesses, he
  learns exactly which one was used.}

Composable frameworks have the further advantage that they require the
interaction between different entities to be modeled explicitly, and
often make hidden assumptions apparent. For example, it came as a
surprise when Barrett et al.~\cite{BCK13} showed that device
independent quantum key distribution (DIQKD) is insecure if untrusted
devices (with internal memory) are used more than once. It is however
immediate when one models the security of DIQKD in a composable
framework, that existing security proofs make the assumption that
devices are used only once. Another example, the security definitions
of zero\-/knowledge protocols~\cite{Gol01} and coin
expansion~\cite{HMU06} make the assumption that the dishonest party
executes his protocol without interaction with the
environment.\footnote{The security definitions for these two problems
  are instances of what is generally known as \emph{stand\-/alone
    security}~\cite{Gol04}.} By explicitly modeling this
restriction,\footnote{This can be done by introducing a resource \---
  e.g., a trusted third party \--- that runs whatever circuits Alice
  and Bob give it in an isolated system, then returns the transcript
  of the protocol to both players.} these proofs can be lifted to a
composable framework. This has been used by, e.g., Unruh~\cite{Unr11},
who explicitly limits the number of parallel executions of a protocol
to achieve security in the bounded storage model.

Correctly defining the security of a cryptographic task is fundamental
for a protocol and proof to have any usefulness or even meaning. In
this paper we solve this problem for DQC, which has been open since the
first version of Childs's work~\cite{Chi05} was made available in 2001.



\subsection{Scope and security of DQC}
\label{sec:intro.scope}

A common feature of all DQC protocols is that the client, while not
being capable of full\-/blown quantum computation, has access to
limited quantum\-/enriched technology, which she needs to interact
with the server.  One of the key points upon which the different DQC
protocols vary, is the complexity and the technical feasibility of the
aforementioned quantum\-/enriched technology.  In particular, in the
proposal of Childs~\cite{Chi05}, the client has quantum memory, and
the capacity to perform local Pauli operations. The protocol of
Arrighi and Salvail~\cite{AS06} requires the client to have the
ability to generate relatively involved superpositions of multi-qubit
states, and perform a family of multi-qubit measurements.  Aharonov,
Ben-Or and Eban~\cite{ABE10}, for the purposes of studying quantum
prover interactive proof systems, considered a DQC protocol in which
the client has a constant\-/sized quantum computer. The blind DQC
protocol proposed by Broadbent, Fitzsimons and Kashefi\cite{BFK09} has
arguably the lowest requirements on the client. In particular, she
does not need any quantum memory,\footnote{This holds in the case of
  classical input and output. If quantum inputs and/or outputs are
  considered, then the client has to be able to apply a quantum
  one-time pad to the input state, and also decrypt a quantum one-time
  pad of the output state.} and is only required to prepare single
qubits in separable states randomly chosen from a small finite set
analogous to the BB84 states.\footnote{The states needed by the
  protocol of \cite{BFK09} are
  $\{(\sket{0}+e^{ik\pi/4}\sket{1})/\sqrt{2}\}_k$ for $k \in
  \{0,\dotsc,7\}$.} Alternatively, Morimae and
Fujii~\cite{MF13,Mor14} propose a DQC protocol in which the client
only needs to measure the qubits she receives from the server to
perform the computation.

A second important distinction between these protocols is in the types
of problems the protocol empowers the client to solve. Most protocols,
e.g., \cite{Chi05,ABE10,BFK09,FK12,MF13,Mor14}, allow a client to
perform universal quantum computation, whereas in \cite{AS06} the
client is restricted to the evaluation of
random\-/verifiable\footnote{Roughly speaking, a function $f$ is
  random\-/verifiable if pairs of instances and solutions $(x,f(x))$
  can be generated efficiently, where $x$ is sampled according to the
  uniform distribution from the function's domain.} functions.

Finally, an important characteristic of these protocols is the flavor
of security guaranteed to the client. Here, one is predominantly
interested in two distinct features: privacy of computation (generally
referred to as blindness) and verifiability of computation. Blindness
characterizes the degree to which the computational input and output,
and the computation itself, remain hidden from the server. This is the
main security concern of, e.g., \cite{Chi05,BFK09,MF13}. Verifiability
ensures that the client has means of confirming that the final output
of the computation is correct. In addition to blindness, some form of
verifiability is given by, e.g., \cite{AS06,ABE10,FK12,Mor14}. These
works do however not concern themselves with the cryptographic
soundness of their security notions. In particular, none of them
consider the issue of composability of DQC. A notable exception is the
recent work of Broadbent, Gutoski and Stebila~\cite{BGS13}, who,
independently from our work, prove that a variant of the DQC protocol
of Aharonov, Ben-Or and Eban~\cite{ABE10} provides composable
security.\footnote{The work of Broadbent et al.~\cite{BGS13} is on
  one-time programs. Their result on the composability of DQC is
  obtained by modifying their main one-time program protocol and
  security proof so that it corresponds to a variant of the DQC
  protocol from \cite{ABE10}.}

\subsection{Composable security}
\label{sec:intro.comp}

The first frameworks for defining composable security were proposed
independently by Canetti~\cite{Can01,Can13} and by Backes, Pfitzmann
and Waidner~\cite{PW01,BPW04,BPW07}, who dubbed them \emph{Universally
  Composable (UC) security} and \emph{Reactive Simulatability},
respectively. These security notions have been extended to the quantum
setting by Ben-Or and Mayers~\cite{BM04} and
Unruh~\cite{Unr04,Unr10}.

More recently, Maurer and Renner proposed a new composable framework,
Abstract Cryptography (AC)~\cite{MR11}. Unlike its predecessors
that use a bottom\-/up approach to defining models of computation,
algorithms, complexity, efficiency, and then security of cryptographic
schemes, the AC approach is top\-/down and axiomatic, where lower
abstraction levels inherit the definitions and theorems (e.g., a
composition theorem) from the higher level, but the definition or
concretization of low levels is not required for proving theorems at
the higher levels. In particular, it is not hard\-/coded in the
security notions of AC whether the underlying computation model is
classical or quantum, and this framework can be used equally for both.

Even though these frameworks differ considerably in their approach,
they all share the common notion that composable security is defined
by the distance between the real world setting and an ideal setting
in which the cryptographic task is accomplished in some perfect
way. We use AC in this work, because it simplifies the security
definitions by removing many notions which are not necessary at that
level of abstraction. But the same results could have been proven
using another framework, e.g., a quantum version of UC
security~\cite{Unr10}.

\subsection{Results}
\label{sec:intro.new}

In this paper, we define a composable framework for analyzing the
security of delegated quantum computing, using the aforementioned AC
framework~\cite{MR11}. We model DQC in a generic way, which is
independent of the computing requirements or universality of the
protocol, and encompasses to the best of our knowledge all previous
work on DQC. We then define composable blindness and composable
verifiability in this framework. The security definitions are thus
applicable to any DQC protocol fitting in our model.

We study the relations between local security criteria used in
previous works \cite{Chi05,AS06,ABE10,BFK09,MF13,FK12,Mor14} and
composable security of DQC. We show that by strengthening the existing
notion of local\-/verifiability, we can close the gap between these
local criteria and composable security of DQC. To do this we introduce
the notion of \emph{independent} local\-/verifiability. Intuitively,
this captures the idea that the acceptance probability of the client
should not depend on the input or computation performed, but rather
only on the activities of the (dishonest) server.  Our main
theorem is as follows.

\begin{thm}
\label{thm:main}
If a DQC protocol implementing a unitary\footnote{Any quantum
  operation can be written as a unitary on a larger system,
  effectively allowing this theorem to apply to all quantum
  operations, see \remref{rem:sa.comp}.} transformation provides
$\eps_{\text{bl}}$\=/local\-/blindness and
$\eps_{\text{ind}}$\=/independent
$\eps_{\text{ver}}$\=/local\-/verifiability for all inputs
$\psi_{A_CA_Q}$, where $A_C$ is classical and $A_Q$ is quantum, then
it is $\delta N^2$\=/secure, where $\delta =
4\sqrt{2\eps_{\text{ver}}} + 2\eps_{\text{bl}} + 2\eps_{\text{ind}}$
and $N = \dim \hilbert_{A_Q}$.
\end{thm}

Note that by choosing the parameters such that $\delta$ is
exponentially small in the size of the quantum input ($\log N$)
negates the factor $N^2$ blow-up in the overall error (see also
\remref{rem:failure}).

Proving that a DQC protocol is secure then reduces to proving that
these local criteria are satisfied.\footnote{This is similar in
  nature to the result on the composable security of quantum key
  distribution (QKD)~\cite{PR14}, 
  which shows that a QKD protocol that satisfies definitions of
  \emph{robustness}, \emph{correctness} and \emph{secrecy} is secure
  in a composable sense. These individual notions are all expressed
  with trace\-/distance\-/type criteria, e.g., a QKD protocol is
  $\eps$\=/secret if $(1-p_{\text{abort}})\trnorm{\rho_{KE} - \tau_{K}
    \tensor \rho_E} \leq \eps,$ where $p_{\text{abort}}$ is the
  probability of aborting, $\rho_{KE}$ the joint state of the final
  key and the eavesdropper's system and $\tau_K$ is the fully mixed
  state. To prove that a QKD protocol is secure, it is thus sufficient
  to prove that it satisfies these individual notions.} For instance,
the protocols of Fitzsimons and Kashefi~\cite{FK12} and
Morimae~\cite{Mor14} are shown to satisfy definitions of
local\-/correctness, local\-/blindness and local\-/verifiability,
equivalent to the ones considered here. To prove that these protocols
are secure, it only remains to show that they also satisfy the
stronger notion of \emph{independent} local\-/verifiability introduced
in this work, which we sketch in \appendixref{app:app}.

Finally, we analyze the security of two protocols \--- Broadbent,
Fitzsimons and Kashefi \cite{BFK09} and Morimae and Fujii~\cite{MF13}
\--- that do not provide any form of verifiability, so the generic
reduction cannot be used. Instead we directly prove that both these
protocols satisfy the definition of composable blindness, without
verifiability (in Theorems~\ref{thm:bfk} and \ref{thm:oneway} on
pages~\pageref{thm:bfk} and \pageref{thm:oneway}).

Interestingly \--- and somewhat unexpectedly \--- even though the
local security definitions used in previous works are insufficient to
guarantee composable security, the previously proposed protocols
studied in this work are all still secure given the stronger security
notions.

\subsection{Other related work}
\label{sec:intro.related}

The blind DQC protocol of \cite{BFK09} has been getting considerable
attention in both the experimental and theoretical scientific
community. Due to the relatively modest requirements on the client, a
small\-/scale experimental realization of this protocol has already
been demonstrated \cite{BKBFZW12}. And even more recently, an
experimental demonstration of the protocol of \cite{FK12} \--- which
includes verifiability \--- has been performed as well \cite{BFKW13}.

Various theoretical modifications of this protocol have been
proposed. For instance, the settings where the client does only
measurements~\cite{MF13,Mor14}, where the client uses weak coherent
pulses \cite{DKL12}, or the server uses different types of
computational resource states \cite{MDK10} have been studied. A DQC
protocol for continuous\-/variable quantum computation has been
proposed~\cite{Mor12}, as well as protocols in the circuit
\cite{GMMR13} and ancilla\-/driven \cite{SKM13} quantum computation
models. To improve the efficiency of these protocols, fault tolerant
computation has been directly embedded in
them~\cite{MF12,CMK13}. Alternatives which minimize the communication
complexity between the client and server have also been
studied~\cite{GMMR13,MPDF13}. Fisher et al.~\cite{FBSYLPJR14} have
investigated the related problem of quantum computation on encrypted
data, in which the computation is public and only the input-output are
to be kept secret.

Subsequent to this work, Morimae and Koshiba~\cite{MK13} gave a direct
composable security proof for the protocol from~\cite{Mor14}. They
obtain tighter bounds on the probability of failure than what one can
obtain using the generic reduction from local criteria proven in this
work.

The prospects of delegated quantum computation with suitable security
properties go beyond the purpose of solving computational problems for
clients.  In \cite{ABE10,AV13} verifiable quantum computation has been
linked to quantum complexity theory, and to the fundamental problem of
the feasibility of falsifying quantum
mechanics~\cite{Vaz07}. Reichardt et al.~\cite{RUV13} use an
alternative model of DQC with two non\-/communicating but entangled
servers to achieve verifiable quantum dynamics, and from this they
also prove that $\QIP=\MIP^*$. The privacy properties of secure DQC
have also been exploited in \cite{MS10}, where DQC is suggested as a
component of the verification step of unforgeable quantum coins.

It is worth mentioning that the questions of secure delegated
computation have initially been addressed in the context of classical
client\-/server scenarios. Abadi, Feigenbaum and Killian~\cite{AFK87}
considered the problem of ``computing with encrypted data'', where for
a function $f$, an instance $x$ can be efficiently encrypted into
$z=E_k(x)$ in such a way that the client can recover $f(x)$
efficiently from $k$ and $f(z)$ computed by the server. There they
showed that no \NP\=/hard function can be computed while maintaining
information\-/theoretic privacy, unless the polynomial hierarchy collapses at
the third level~\cite{AFK87}.  

A related, but distinct branch of research into the problem of
securely delegating difficult and time\-/consuming computations was
also studied in the framework of (computationally secure) public\-/key
cryptography, essentially from its very beginnings~\cite{RAD78}. Even
in this setting, this problem known as fully homomorphic encryption,
was only solved recently~\cite{Gen09}. Though the goal of the fully
homomorphic encryption program was to achieve delegated computation in
which the communication between the server and the client is
independent from the size of the desired computation. In contrast, in
all DQC proposals, the communication is essentially proportional to
the computation size; the client is however limited to operations
which are not sufficient for performing the desired computation
efficiently.\footnote{The client cannot perform the computation in
  polynomial time, assuming $\BQP \neq \BPP$.}

\subsection{Structure of this paper}
\label{sec:intro.structure}

In \secref{sec:ac} we introduce the AC framework that we use to model
security. In \secref{sec:systems} we then instantiate the abstract
systems from \secref{sec:ac} with the appropriate quantum systems and
metrics used in this work. In \secref{sec:dqc} we explain delegated
quantum computation, and model composable security for such
protocols. In \secref{sec:verif} we show that composable verifiability
(which encompasses blindness) is equivalent to the distance between
the real protocol and some ideal map that simultaneously provides both
local\-/blindness and local\-/verifiability. This map is however still
more elaborate than local criteria used in previous works. In
\secref{sec:standalone} we break this map down into individual notions
of local\-/blindness and independent local\-/verifiability, and prove
that these are sufficient to achieve security. In \secref{sec:blind}
we prove that some existing protocols are composably blind, in
particular, that of Broadbent, Kashefi and Fitzsimons~\cite{BFK09}.


\section{Abstract cryptography}
\label{sec:ac}

\subsection{Overview}
\label{sec:ac.overview}

To model security we use Maurer and Renner's~\cite{MR11} Abstract
Cryptography (AC) framework (for a more detailed introduction to
AC, we refer to \cite{PR14}). The traditional approach to defining
security can be seen as \emph{bottom\-/up}. One first defines (at a
low level) a computational model (e.g., a Turing machine or a
circuit). Based on this, the concept of an algorithm for the model and
a communication model (e.g., based on tapes) are defined. After this,
notions of complexity, efficiency, and finally the security of a
cryptosystem can be defined. The AC framework uses a \emph{top\-/down}
approach: in order to state definitions and develop a theory, one
starts from the other end, the highest possible level of abstraction
\--- the composition of abstract systems \--- and proceeds downwards,
introducing in each new lower level only the minimal necessary
specializations.

To clarify this point further, one may consider an example from
mathematics, that of group theory and the specialized problem of
matrix multiplication. In the bottom\-/up approach, one would start
explaining how matrices are multiplied, and then based on this find
properties of the matrix multiplication. In contrast to this, the AC
approach would correspond to first defining the (abstract)
multiplication group and prove theorems already on this level. The
matrix multiplication would then be introduced as a special case of
the multiplicative group, for which, naturally, all the theorems
proven on the group\-/theory level also hold.

On a high level of abstraction, a cryptographic protocol can be viewed
as (approximately) constructing some resource $\aS$ out of other
resources $\aR$. For example, a one-time pad constructs a secure
channel out of a secret key and an authentic channel; a quantum key
distribution protocol constructs an almost perfect shared secret key
out of a classical authentic channel and an insecure quantum
channel. If some protocol $\pi$ uses a resource $\aR$ to construct a
resource $\eps$\=/close to $\aS$, we write
\begin{equation} \label{eq:cc} \aR \xrightarrow{\pi,\eps} \aS.
\end{equation}
For the construction to be composable, we need the following
conditions fulfilled:
\begin{align*}
\aR \xrightarrow{\pi,\eps} \aS\ \textup{and}\ \aS
  \xrightarrow{\pi',\eps'} \aT & \implies \aR
  \xrightarrow{\pi' \circ \pi,\eps+\eps'} \aT \\
\aR \xrightarrow{\pi,\eps} \aS \ \textup{and}\  \aR'
  \xrightarrow{\pi',\eps'} \aS' & \implies \aR \| \aR'
  \xrightarrow{\pi|\pi',\eps+\eps'} \aS \| \aS'
\end{align*}
where $\aR\|\aR'$ is a parallel composition of resources, and $\pi'
\circ \pi$ and $\pi|\pi'$ are sequential and parallel composition of
protocols, respectively.

In \secref{sec:ac.security} we provide a security definition
which satisfies these conditions. Intuitively, the resource $\aR$
along with the protocol $\pi$ are part of the \emph{real} or
\emph{concrete} world, and the resource $\aS$ is some \emph{ideal
  abstraction} of the resource we want to build. \eqnref{eq:cc} is
then satisfied if an adversary could, in an ideal world where the
ideal resource is available, achieve anything that she could achieve
in the real world. This argument involves, as a thought experiment,
simulator systems which transform the ideal resource into the real
world system consisting of the real resource and the protocol.

\subsection{Resources, converters and distinguishers}
\label{sec:ac.resources}

In this section we define (on a high level of abstraction) the
elements present in \eqnref{eq:cc}, namely resources $\aR,\aS$, a
protocol $\pi$, and a pseudo\-/metric allowing us to define the
failure measure $\eps$.

Depending on what model of computing is instantiated at a lower level,
a resource can be modeled as a random system in the classical
case~\cite{Mau02,MPR07}, or, if the underlying system is quantum, as a
sequence of CPTP maps with internal memory (e.g., quantum
strategies~\cite{GW07} and combs~\cite{CDP09}).\footnote{In
  \secref{sec:systems} we define two-party protocols and quantum
  metrics on this level.} However, in order to define the security of
a protocol, it is not necessary to go down to this level of detail, a
resource can be modeled in more abstract terms.\footnote{In particular,
  on this level of abstraction it is not relevant whether the
  underlying system is classical or quantum.} A resource is an
(abstract) system with interfaces specified by a set $\cI$ (e.g., $\cI
= \{A,B,E\}$). Each interface $i \in \cI$ is accessible to a user $i$
and provides her or him with certain functionalities. Furthermore, a
dishonest user might have access to more functionalities than an
honest one, and these should be clearly marked as such (e.g., a filter
covers these functionalities for an honest player, and a dishonest
user removes the filter to access them). We call these
\emph{guaranteed} and \emph{filtered} functionalities. For example, a
key distribution resource is often modeled as a resource which either
produces a secret key or an error flag.\footnote{This is the best one
  can achieve in certain settings, e.g., quantum key distribution,
  since an adversary can cut the communication channels and prevent a
  key from being generated.} This resource has no guaranteed
functionalities at Eve's interface, but may provide her with the
filtered functionality of preventing a key being generated. Alice's
interface guarantees that she gets a secret key (or an error flag),
but it may also provide her with the filtered functionality of
choosing what key is generated.

A protocol $\pi = \{\pi_i\}_{i \in \cI}$ is a set of \emph{converters}
$\pi_i$, indexed by the set of interfaces $\cI$. A converter is an
(abstract) system with only two interfaces, an \emph{outside}
interface and an \emph{inside} interface. The outside interface is
connected to the outside world, it receives the inputs and produces
the outputs. The inside interface is connected to the resources used.

In \figref{fig:otp.real.cor} we illustrate this by connecting a
one-time pad protocol to a resource $\aR$ consisting of a secret key
and an authentic channel. Let $\pi=(\pi_A,\pi_B,\pi_E)$ be a one-time
pad protocol, and $\pi_A$ be Alice's part of the protocol: $\pi_A$ is
connected at the inner interface to a resource generating a secret key
and to an authentic channel (for this example, we assume that neither
the ideal key nor the authentic channel produce an error, they both
always generate a key and deliver the message, respectively), both of
which we combine together as the resource $\aR$. At the outer
interface it receives some message $x$, it gets a key $k$ from the key
resource, and sends $x \xor k$ down the authentic channel. Bob's part
of the protocol $\pi_B$ receives $y$ from the authentic channel and
$k$ from the key resource at its inner interface, and outputs $y \xor
k$ at the outer interface. Note that the protocol also specifies an
honest behavior for Eve, $\pi_E$, which consists in not listening to
the communication channel, i.e., it is a converter with no
functionalities at the outer interface and which blocks the leaks from
the authentic channel at the inner interface.

\begin{figure}[htb]
\begin{centering}

\begin{tikzpicture}\small

\def\t{4.913} 
\def\u{3.14} 
\def\v{.75}
\def\vv{2.5}
\def\ev{-3.4}

\node[pnode] (a1) at (-\u,\v) {};
\node[pnode] (a2) at (-\u,0) {};
\node[pnode] (a3) at (-\u,-\v) {};
\node[protocol] (a) at (-\u,0) {};
\node (atext) at (-\u,.3) {\footnotesize $y = x \xor k$};
\node[yshift=-2,above right] at (a.north west) {$\pi_A$};
\node (alice) at (-\t,0) {Alice};

\node[pnode] (b1) at (\u,\v) {};
\node[pnode] (b2) at (\u,0) {};
\node[pnode] (b3) at (\u,-\v) {};
\node[protocol] (b) at (\u,0) {};
\node (btext) at (\u,.3) {\footnotesize $x = y \xor k$};
\node[yshift=-2,above right] at (b.north west) {$\pi_B$};
\node (bob) at (\t,0) {Bob};

\node[largeResource] (R) at (0,.125) {};
\node[yshift=-2,above right] at (R.north west) {$\aR$};
\node[thinResource] (keyBox) at (0,\v) {};
\node[draw] (key) at (0,\v) {key};
\node[yshift=-2,above right] at (keyBox.north west) {\footnotesize Secret key};
\node[thinResource] (channel) at (0,-\v) {};
\node[yshift=-1.5,above right] at (channel.north west) {\footnotesize
  Authentic channel};
\node[protocolLong] (e) at (0,-\vv) {};
\node[xshift=2,below left] at (e.north west) {$\pi_E$};
\node (junc) at (e |- a3) {};
\node (eleft) at (-.5,-\vv) {};
\node (eright) at (.5,-\vv) {};
\draw[ultra thick] (eleft.center) to (eright.center);
\node (eve) at (0,\ev) {Eve};

\draw[sArrow] (key) to node[auto,swap,pos=.3] {$k$} (a1);
\draw[sArrow] (key) to node[auto,pos=.3] {$k$} (b1);

\draw[sArrow] (alice) to node[auto,pos=.35] {$x$} (a2);
\draw[sArrow] (b2) to node[auto,pos=.6] {$x$} (bob);

\draw[sArrow] (a3) to node[pos=.1,auto] {$y$} node[pos=.9,auto] {$y$} (b3);
\draw[sArrow] (junc.center) to node[pos=.59,auto] {$y$} (e.center);

\end{tikzpicture}

\end{centering}
\caption[One-time pad]{\label{fig:otp.real.cor}The concrete setting of the one-time
  pad with Eve's honest protocol $\pi_E$. Alice has access to the left
  interface, Bob to the right interface and Eve to the lower
  interface. The converters $(\pi_A,\pi_B,\pi_E)$ of the one-time pad
  protocol are connected to the resource $\aR$ consisting of a secret
  key and an authentic channel.}
\end{figure}
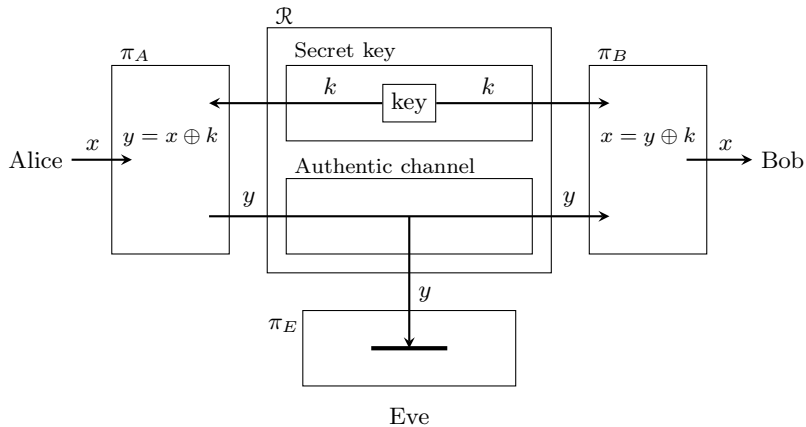

Converters connected to resources build new resources with the same
interface set, and we write either $\pi_i \aR$ or $\aR\pi_i$ to denote
the new resource with the converter $\pi_i$ connected at the interface
$i$.\footnote{There is no mathematical difference between $\pi_i \aR$
  and $\aR\pi_i$. It sometimes simplifies the notation to have the
  converters for some players written on the right of the resource and
  the ones for other players on the left, instead of all on the same
  side, hence the two notations.}

Filters, which cover the cheating interface when a player is honest,
can also be modeled as converters.

To measure how close two resources are, we define a pseudo\=/metric on
the space of resources. We do this with the help of a
\emph{distinguisher}. For $n$\=/interface resources a distinguisher
$\aD$ is a system with $n+1$ interfaces, where $n$ interfaces connect
to the interfaces of a resource $\aR$ and the other (outside)
interface outputs a bit. For a class of distinguishers $\mathfrak{D}$,
the induced pseudo\-/metric, the distinguishing advantage,
is \[d(\aR,\aS) \coloneqq \max_{\aD \in \mathfrak{D}} \Pr\left[\aD\aR =
  1\right] - \Pr\left[\aD\aS = 1\right],\] where $\aD\aR$ is the
binary random variable corresponding to $\aD$ connected to
$\aR$.\footnote{In this work we study information\-/theoretic security,
  and therefore the only class of distinguishers that we consider is
  the set of all distinguishers.} If $d(\aR,\aS) \leq \eps$, we say
that the two resources are $\eps$-close and sometimes write $\aR
\close{\eps} \aS$; or $\aR = \aS$ if $\eps = 0$.

\subsection{Security}
\label{sec:ac.security}

We now have introduced all the notions used in the generic security
definition:\footnote{In \cite{MR11} this definition is given on a
  higher level of abstraction. However for the particular case of
  filtered resources, \defref{def:cc} is equivalent.}
\begin{deff}[See \cite{MR11}]
\label{def:cc}
Let $\aR_\phi = (\aR,\phi)$ and $\aS_\psi=(\aS,\psi)$ be pairs of a
resource ($\aR$ and $\aS$) with interfaces $\cI$ and a filter ($\phi$
and $\psi$). We say that a protocol $\pi$ (securely) constructs
$\aS_\psi$ out of $\aR_\phi$ within $\eps$, and write $\aR_\phi
\xrightarrow{\pi,\eps} \aS_\psi$, if there exist converters $\sigma =
\{\sigma_i\}_{i \in \cI}$ \--- which we call simulators \--- such
that,
\begin{equation}
  \label{eq:cc.def} \forall \cP \subseteq \cI, \quad d(\pi_\cP \phi_\cP\aR,\sigma_{\cI
    \setminus \cP}\psi_\cP\cS) \leq \eps,\end{equation}
where for $x = \{x_i\}_{i \in \cI}$, $x_{\cP} \coloneqq \{x_i\}_{i \in \cP}$.\end{deff}

We illustrate this definition in the case of the one-time pad. In this
example, we wish to construct a secure channel $\aS$, which is
depicted in \figref{fig:otp.ideal.cor} and defined as follows (for
simplicity, we assume that Alice and Bob are always honest, and ignore
their filtered functionalities): $\aS$ takes a message $x$ at the
$A$\=/interface, leaks the message length $|x|$ at the
$E$\=/interface, and outputs $x$ at the $B$\=/interface. This resource
captures the desired notion of a secure channel, because it only leaks
the message size, and does not provide the adversary with any
functionality to falsify the message. We model explicitly that the
message size leak at the $E$\=/interface is not a guaranteed
functionality by depicting it in gray in
\figref{fig:otp.ideal.cor}. We additionally draw the filter converter
$\psi_E$, which covers the cheating interface and can be removed by a
dishonest player. $\psi_E$ has no functionalities at the outer interface,
and blocks this message size leak at the inner interface. In the
general case, these filters can be defined for all
interfaces.\footnote{We only denote Eve's filter explicitly in the
  following, since Alice and Bob's filters are trivial (the
  identity).}

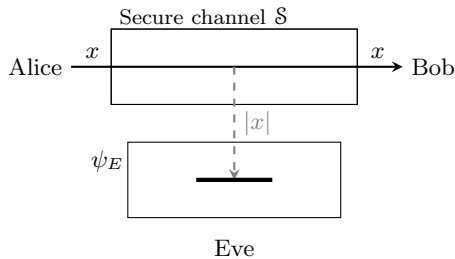
\begin{figure}[htb]
\begin{centering}

\begin{tikzpicture}\small

\def\t{2.618} 
\def\u{-.75}
\def\v{.75}
\def\w{-2.45} 
\def\ev{-1.65} 

\node[thinResource] (channel) at (0,\v) {};
\node[yshift=-1.5,above right] at (channel.north west) {{\footnotesize Secure channel} $\aS$};
\node[thinResource] (channel) at (0,\v) {};
\node (alice) at (-\t,\v) {Alice};
\node (bob) at (\t,\v) {Bob};

\node[protocolLong] (filter) at (0,\u) {};
\node[xshift=2,below left] at (filter.north west) {$\psi_E$};

\draw[sArrow] (alice) to node[pos=.065,auto] {$x$}
node[pos=.925,auto] {$x$} (bob);
\draw[gArrow,dashed] (0,\v) to node[pos=.5,auto] {$|x|$} (filter.center);
\node (eleft) at (-.5,\u) {};
\node (eright) at (.5,\u) {};
\draw[ultra thick] (eleft.center) to (eright.center);
\node (eve) at (0,\ev) {Eve};

\end{tikzpicture}

\end{centering}
\caption[Secure channel]{\label{fig:otp.ideal.cor}A secure channel from Alice to
  Bob. Alice has access to the left interface, Bob to the right
  interface and Eve to the lower interface. A filter $\psi_E$ covers
  Eve's cheating functionality.}
\end{figure}

The \emph{correctness} of the protocol $\pi$ is captured by measuring
the distance between $\pi\aR$, the combination of the entire honest
protocol with the resources (\figref{fig:otp.real.cor}), and
$\psi\aS$, the ideal resource with all filtered functionalities
obstructed (\figref{fig:otp.ideal.cor}). In the case of the one-time
pad, we have $d(\pi_A\pi_B\pi_E\aR,\psi_E\aS) = 0$: since the
resources $\pi_A\pi_B\pi_E\aR$ and $\psi_E\aS$ both simply take a
message $x$ as input at the $A$\=/interface and output the same
message at the $B$\=/interface, no distinguisher can notice a
difference.

If a player $i$ cheats, she does not (necessarily) follow her protocol
$\pi_i$, but can interact arbitrarily with her interface. We thus
remove the corresponding protocol converters from the real setting to
model the resulting resource, which we depict for the one-time pad in
\figref{fig:otp.real.sec}. \emph{Security} of the protocol in the
presence of a cheating party $i$ is achieved if this player is not
able to accomplish more than what is allowed by her interface of the
ideal resource with the filter removed. This is the case if there
exists a simulator converter $\sigma_i$, independent from the cheating
strategy, that, when plugged into the $i$\=/interface of the ideal
resource $\aS$, can convert between the interaction with the corrupt
player (or distinguisher) and the filtered functionalities of the
resource, such that the real and ideal worlds are
indistinguishable. For example, in the case of the one-time pad and a
dishonest Eve, a cipher $y$ is leaked at the $E$\=/interface, whereas
in the ideal setting, only the message length is leaked. The simulator
$\sigma_E$ therefore must recreate a cipher given the message
length. It does this by simply generating a random string $y$ of the
corresponding length and outputting it at its outer interface. This is
illustrated in \figref{fig:otp.ideal.sec}. It is not hard to verify
that with this simulator, $d(\pi_A\pi_B\aR,\sigma_E\aS) = 0$, since
the resources $\pi_A\pi_B\aR$ and $\sigma_E\aS$ both take a message
$x$ at their $A$\=/interface, which they output at their
$B$\=/interface, and output a completely random string of the same
length at their $E$\=/interface.

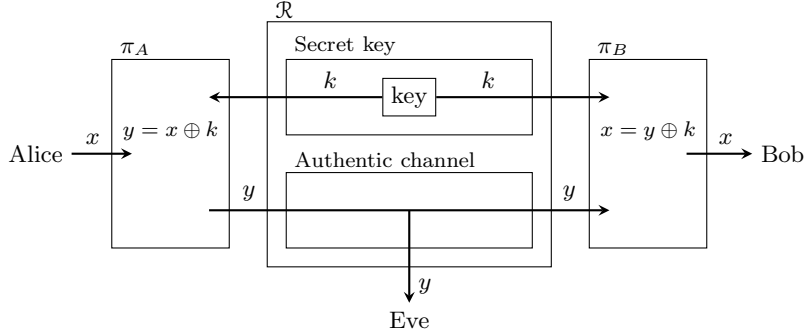
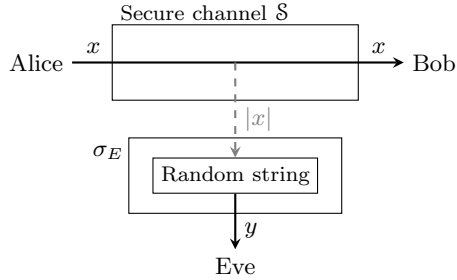
\begin{figure}[htb]
\begin{subfigure}[b]{\textwidth}
\begin{centering}

\begin{tikzpicture}\small

\def\t{4.913} 
\def\u{3.14} 
\def\v{.75}
\def\vv{2.2}

\node[pnode] (a1) at (-\u,\v) {};
\node[pnode] (a2) at (-\u,0) {};
\node[pnode] (a3) at (-\u,-\v) {};
\node[protocol] (a) at (-\u,0) {};
\node (atext) at (-\u,.3) {\footnotesize $y = x \xor k$};
\node[yshift=-2,above right] at (a.north west) {$\pi_A$};
\node (alice) at (-\t,0) {Alice};

\node[pnode] (b1) at (\u,\v) {};
\node[pnode] (b2) at (\u,0) {};
\node[pnode] (b3) at (\u,-\v) {};
\node[protocol] (b) at (\u,0) {};
\node (btext) at (\u,.3) {\footnotesize $x = y \xor k$};
\node[yshift=-2,above right] at (b.north west) {$\pi_B$};
\node (bob) at (\t,0) {Bob};

\node[largeResource] (R) at (0,.125) {};
\node[yshift=-2,above right] at (R.north west) {$\aR$};
\node[thinResource] (keyBox) at (0,\v) {};
\node[draw] (key) at (0,\v) {key};
\node[yshift=-2,above right] at (keyBox.north west) {\footnotesize Secret key};
\node[thinResource] (channel) at (0,-\v) {};
\node[yshift=-1.5,above right] at (channel.north west) {\footnotesize
  Authentic channel};
\node (eve) at (0,-\vv) {Eve};
\node (junc) at (eve |- a3) {};

\draw[sArrow] (key) to node[auto,swap,pos=.3] {$k$} (a1);
\draw[sArrow] (key) to node[auto,pos=.3] {$k$} (b1);

\draw[sArrow] (alice) to node[auto,pos=.35] {$x$} (a2);
\draw[sArrow] (b2) to node[auto,pos=.6] {$x$} (bob);

\draw[sArrow] (a3) to node[pos=.1,auto] {$y$} node[pos=.9,auto] {$y$} (b3);
\draw[sArrow] (junc.center) to node[pos=.82,auto] {$y$} (eve);

\end{tikzpicture}

\end{centering}
\caption{\label{fig:otp.real.sec}The concrete resource resulting from honest Alice and Bob
  running their one-time pad protocols $(\pi_A,\pi_B)$ with a secret
  key and authentic channel.}
\end{subfigure}

\vspace{12pt}

\begin{subfigure}[b]{\textwidth}
\begin{centering}

\begin{tikzpicture}\small

\def\t{2.618} 
\def\u{-.75}
\def\v{.75}
\def\w{-1.95} 

\node[thinResource] (channel) at (0,\v) {};
\node[yshift=-1.5,above right] at (channel.north west) {{\footnotesize
  Secure channel} $\aS$};
\node (alice) at (-\t,\v) {Alice};
\node (bob) at (\t,\v) {Bob};

\node[protocolLong] (sim) at (0,\u) {};
\node[xshift=1.5,below left] at (sim.north west) {$\sigma_E$};
\node[draw] (rand) at (0,\u) {\footnotesize Random string};

\draw[sArrow] (alice) to node[pos=.065,auto] {$x$}
node[pos=.925,auto] {$x$} (bob);
\draw[gArrow,dashed] (0,\v) to node[pos=.6,auto] {$|x|$} (rand);

\node (eve) at (0,\w) {Eve};
\draw[sArrow] (rand) to node[pos=.65,auto] {$y$} (eve);

\end{tikzpicture}

\end{centering}
\caption{\label{fig:otp.ideal.sec}The ideal resource $\aS$ constructed
  by the one-time pad for an honest Alice and Bob, and a simulator
  $\sigma_E$ plugged into Eve's interface.}
\end{subfigure}
\caption[Real and ideal settings for the one-time
pad]{\label{fig:otp.sec}The real and ideal settings for the one-time
  pad with a cheating Eve. Alice has access to the left interface, Bob
  to the right interface and Eve to the lower interface. Since these
  resources are indistinguishable, the one-time pad provides perfect
  security.}
\end{figure}

\defref{def:cc} requires $2^n$ inequalities to be satisfied in a model
with $n$ players, i.e., one for every possible subset of dishonest
players. In practice however, if we are only interested in modeling
security when a given set of players is known to always be honest \---
e.g., Alice and Bob are honest in the one-time pad example \--- then
it is sufficient to consider only the corresponding inequalities from
\eqnref{eq:cc.def}. This is equivalent to giving those players
arbitrary filtered functionalities, and reflects the fact that we do
not place any restrictions on what these players might achieve, were
they to be dishonest.

\begin{rem}
  \label{rem:cc}
  Abstract cryptography (AC) differs from universal composability
  (UC)~\cite{Can13,Unr10} in many conceptual and mathematical ways. In
  particular, the AC requirement that there exist distinct simulators
  at each interface instead of merging all dishonest players into one
  entity make it strictly more powerful than UC: this allows dishonest
  players with restricted cooperation to be modeled as a feature of
  the ideal resource, and thus directly capture notions such as
  coercibility~\cite{MR11}.

  However, in the special case of one dishonest player,
  \eqnref{eq:cc.def} is equivalent to what one obtains by modeling the
  same problem with UC. Since the rest of this work deals with
  delegated quantum computation, a two-party protocol with one
  dishonest player, the same results could have been obtained using
  the UC framework.
\end{rem}

\section{Quantum systems}
\label{sec:systems}

In \secref{sec:ac} resources and converters were introduced as
abstract systems. Here we model them explicitly for the special case
of two-party protocols considered in the rest of the work. In
\secref{sec:systems.notation} we first briefly define the notation and
some basic concepts that we use.\footnote{For a more detailed
  introduction to quantum information theory we refer to
  \cite{NC00,Wat11}.} In \secref{sec:systems.two} we then model
two-party protocols. And finally in \secref{sec:systems.dist} we
define several metrics which correspond to the distinguishing
advantage for specific resources.

\subsection{Notation and basic concepts}
\label{sec:systems.notation}

\hilbert\ always denotes a finite\-/dimensional Hilbert space. We denote
by $\lo[B]{A}$ the set of linear operators from $\hilbert_A$ to
$\hilbert_B$, by $\lo{}$ the set of linear operators from
$\hilbert$ to itself, and by $\po{}$ the
subset of positive semi-definite operators. We define the set of
normalized quantum states $\no{} \coloneqq \{ \rho \in \po{} : \tr
\rho = 1\}$ and the set of subnormalized quantum states $\sno{}
\coloneqq \{ \rho \in \po{}: \tr \rho \leq 1\}$. We write
$\hilbert_{AB} = \hilbert_A \tensor \hilbert_B$ for a bipartite
quantum system and $\rho_{AB} \in \sno{AB}$ for a bipartite quantum
state. $\rho_A = \trace[B]{\rho_{AB}}$ and $\rho_B =
\trace[A]{\rho_{AB}}$ denote the corresponding reduced density
operators.

The set of feasible maps between two systems $A$ and $B$ is the set of
all completely positive, trace\-/preserving (CPTP) maps $\cE : \lo{A}
\to \lo{B}$. By the Kraus representation, such a map can always be
given by a set of linear operators $\{E_k \in \lo[B]{A}\}_k$ with
$\sum_k \hconj{E}_kE_k = \I_A$. We then have $\cE(\rho) = \sum_k E_k
\rho \hconj{E}_k$. We also consider trace non\-/increasing maps \---
in particular, to describe the evolution of a system conditioned on a
specific measurement outcome \--- i.e., maps with operators $E_k$ such
that $\sum_k \hconj{E}_kE_k \leq \I_A$. Though when unspecified, we
always mean trace\-/preserving maps. For a quantum state $\rho \in
\sno{AC}$ and a map $\cE : \lo{A} \to \lo{B}$, $\cE(\rho)$ is
shorthand for $\left(\cE \tensor \id_C\right) (\rho)$, where $\id_C$
is the identity on system $C$.

Throughout this paper we mostly use the standard notation for common
quantum gates, for instance $X$ and $Z$ denote the Pauli-$X$ and
Pauli-$Z$ operators. We will additionally often refer to the the
parametrized phase gate $Z_{\theta} = \proj{0}+ e^{i \theta} \proj{1}$,
and the two-qubit controlled-$Z$ gate $\ctrlZ =
\proj{00}+\proj{01}+\proj{10} - \proj{11}$. 

\subsection{Two-party protocols}
\label{sec:systems.two}

A two-party protocol can in general be modeled by a sequence of CPTP
maps $\{\cE_i : \lo{AC} \to \lo{AC} \}_i$ and $\{\cF_i : \lo{CB} \to
\lo{CB} \}_i$, where $A$ and $B$ are Alice and Bob's registers, and
$C$ represents a communication channel.\footnote{One could consider a
  more general two-party setting, where the players have access to
  other resources than a channel, e.g., public randomness. But since
  in the rest of this work we are interested only in protocols where
  the players have no other resource than a channel, we also consider
  only this case here.} Initially Alice and Bob place their inputs in
their registers, and the channel $C$ is in some fixed state
$\sket{0}$. The players then apply successively their maps to their
respective registers and the channel. For example, in the first round
Alice applies $\cE_1$ to the joint system $AC$, and sends $C$ to Bob,
who applies $\cF_1$ to $CB$, and returns $C$ to Alice. Then she
applies $\cE_2$, etc.

In the AC terminology introduced in \secref{sec:ac}, the messages sent
on the channel $C$ correspond to messages leaving a converter at the
inner interface and being sent through a channel resource $\aR$ to the
other player. The inputs are initially received by the converters at
the outer interfaces, and the final contents of the $A$ and $B$
registers is output at the outer interface once the last map of the
protocol has been applied. This is illustrated in
\figref{fig:prot.real}.

\begin{figure}[htb]
\begin{centering}

\begin{tikzpicture}[opnode/.style={minimum width=1.05cm,minimum height=.6cm}]
\small

\def\t{4.1}
\def\u{2.6}
\def\v{.9}

\node[draw,opnode] (a1) at (-\u,0) {$\cE_1$};
\node[draw,opnode] (a2) at (-\u,-\v) {$\cE_2$};
\node[opnode] (a3) at (-\u,-2*\v) {};
\node[opnode] (a35) at (-\u,-2.12*\v) {$\vdots$};
\node[opnode] (a4) at (-\u,-2.5*\v) {};
\node[draw,opnode] (a5) at (-\u,-3.5*\v) {$\cE_N$};
\node[draw,inner sep=.25cm,fit=(a1)(a5)] (a) {};
\node[yshift=-2,above right] at (a.north west) {$\pi_A$};
\node (alice1) at (-\t,0) {};
\node (alice2) at (-\t,-3.5*\v) {};
\node (alice) at (-\t-.4,-1.75*\v) {Alice};

\node[draw,opnode] (b1) at (\u,0) {$\cF_1$};
\node[draw,opnode] (b2) at (\u,-\v) {$\cF_2$};
\node[opnode] (b3) at (\u,-2*\v) {};
\node[opnode] (b35) at (\u,-2.12*\v) {$\vdots$};
\node[opnode] (b4) at (\u,-2.5*\v) {};
\node[draw,opnode] (b5) at (\u,-3.5*\v) {$\cF_{N}$};
\node[draw,inner sep=.25cm,fit=(b1)(b5)] (b) {};
\node[yshift=-2,above right] at (b.north west) {$\pi_B$};
\node (bob1) at (\t,0) {};
\node (bob2) at (\t,-3.5*\v) {};
\node (bob) at (\t+.4,-1.75*\v) {Bob};

\node[opnode] (c1) at (0,0) {};
\node[opnode] (c2) at (0,-1.5*\v) {};
\node[opnode] (c35) at (0,-2.12*\v) {$\vdots$};
\node[opnode] (c4) at (0,-3*\v) {};
\node[opnode] (c5) at (0,-3.5*\v) {};
\node[draw,minimum width=2.2cm,inner sep=.25cm,fit=(c1)(c5)] (c) {};
\node[yshift=-2,above right] at (c.north west) {$\aR$};

\draw[sArrow] (alice1.center) to node[auto,pos=.4] {$\psi_A$} (a1);
\draw[sArrow] (bob1.center) to node[auto,pos=.4,swap] {$\psi_B$} (b1);
\draw[sArrow] (a1) to (b1);
\draw[sArrow] (a1) to (a2);
\draw[sArrow] (b1) to (a2);
\draw[sArrow] (b1) to (b2);
\draw[sArrow] (a2) to (b2);
\draw[sArrow] (a2) to (a3);
\draw[sArrow] (b2) to (c2);
\draw[sArrow] (b2) to (b3);
\draw[sArrow] (c4) to (a5);
\draw[sArrow] (a4) to (a5);
\draw[sArrow] (b4) to (b5);
\draw[sArrow] (a5) to (b5);
\draw[sArrow] (a5) to node[auto,swap,pos=.6] {$\rho_A$} (alice2.center);
\draw[sArrow] (b5) to node[auto,pos=.6] {$\rho_B$} (bob2.center);

\end{tikzpicture}

\end{centering}
\caption[Generic two-party protocol]{\label{fig:prot.real}A generic
  two-party protocol. Alice has access to the left interface and Bob
  to the right interface. The protocol $(\pi_A,\pi_B)$ consists in
  sequences of maps. The channel resource $\aR$ simply transmits the
  messages between the players.}
\end{figure}
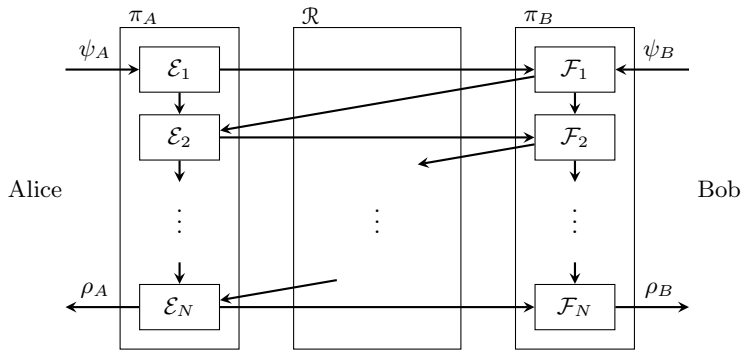

For a protocol with $N$ rounds, the resource $\pi_i\aR$, corresponding
to one of the players' protocol plugged into the channel resource
$\aR$, has been called a \emph{quantum strategy} by Gutoski and
Watrous~\cite{GW07,Gut12} and a \emph{quantum $N$\=/comb} by
Chiribella, D'Ariano and Perinotti~\cite{CDP09}. In particular, these
authors derived independently a concise representation of
combs/strategies in terms of the Choi\-/Jamio\l{}kowski
isomorphism. They also define the appropriate distance measure between
two combs/strategies, corresponding to the optimal distinguishing
advantage, which we sketch in the next section.

\subsection{Distance measures}
\label{sec:systems.dist}

The trace distance between two states $\rho$ and $\sigma$ is given by
$D(\rho,\sigma) = \frac{1}{2} \trnorm{\rho-\sigma}$, where
$\trnorm{\cdot}$ denotes the trace norm and is defined as $\trnorm{A}
\coloneqq \tr \sqrt{\hconj{A}A}$.  If $D(\rho,\sigma) \leq \eps$, we
say that the two states are $\eps$-close and often write $\rho
\close{\eps} \sigma$. This corresponds to the distinguishing advantage
between two resources $\aR$ and $\aS$, which take no input and produce
$\rho$ and $\sigma$, respectively, as output: the probability of a
distinguisher guessing correctly whether he holds $\aR$ or $\aS$ is
exactly $\frac{1}{2}+\frac{1}{2}D(\rho,\sigma)$. In \appendixref{app:dist}
we define the generalized trace distance and the purified distance, which
are more appropriate for characterizing the distance between
subnormalized states.

Another common metric which corresponds to the distinguishing
advantage between resources of a certain type is the diamond norm. If
the resources $\aR$ and $\aS$ take an input $\rho \in \no{A}$ and
produce an output $\sigma \in \no{B}$, the distinguishing advantage
between these resources is the diamond distance between the correspond
maps $\cE,\cF : \lo{A} \to \lo{B}$. A distinguisher can generate a
state $\rho_{AR}$, input the $A$ part to the resource, and try to
distinguish between the resulting states $\cE(\rho_{AR})$ and
$\cF(\rho_{AR})$. We have $d(\aR,\aS) = \diamond(\cE,\cF) =
\frac{1}{2} \dianorm{\cE - \cF}$, where \[\dianorm{\Phi} \coloneqq
\max \{\trnorm{\left(\Phi \tensor \id_R\right) (\rho)} : \rho \in
\no{AR}\}\] is the diamond norm. Note that the maximum of the diamond
norm can always be achieved for a system $R$ with $\dim\hilbert_R =
\dim\hilbert_A$. Here too, we sometimes write $\cE \close{\eps} \cF$
if two maps are $\eps$-close.

If the resources considered are halves of two player protocols, say
$\pi_i\aR$ or $\pi_j\aR$, the above reasoning can be generalized for
obtaining the distinguishing advantage. The distinguisher can first
generate an initial state $\rho \in \no{AR}$ \--- which for
convenience we define as a map on no input $\rho \coloneqq \cD_0()$
\--- and input the $A$ part of the state into the resource. It
receives some output $\rho_{CR}$ from the resource, can apply some
arbitrary map $\cD_1 : \lo{CR} \to \lo{CR}$ to the state, and input
the $C$ part of the new state in the resource. Let it repeat this
procedure with different maps $\cD_i$ until the end of the protocol,
after which it holds one of two states: $\varphi_{AR}$ if it had access
to $\pi_i\aR$ and $\psi_{AR}$ if it had access to $\pi_j\aR$. The
trace distance $D(\varphi_{AR},\psi_{AR})$ defines the advantage the
distinguisher has of correctly guessing whether it was interacting
with $\pi_i\aR$ or $\pi_j\aR$, and by maximizing this over all
possible initial inputs $\rho_{AR} = \cD_0()$, and all subsequent maps
$\{\cD_{i} : \lo{CR} \to \lo{CR}\}_i$, the distinguishing advantage
between these resources becomes
\begin{equation}
\label{eq:dqc.dist}
d(\pi_i\aR,\pi_j\aR) = \max_{\{\cD_i\}_i}
D(\varphi_{AR},\psi_{AR}). \end{equation}
This has been studied by both Gutoski~\cite{Gut12} and Chiribella et
al.~\cite{CDP09}, and we refer to their work for more details.

\section{Delegated quantum computation}
\label{sec:dqc}

In the (two-party) delegated quantum computation (DQC) model, Alice
asks a server, Bob, to execute some quantum computation for her.
Intuitively, Alice plays the role of a client, and Bob the part of a
computationally more powerful server. Alice has several security
concerns. She wants the protocol to be blind, that is, she wants the
server to execute the quantum computation without learning anything
about the input other than what is unavoidable, e.g.,
an upper bound on its size, and possibly whether the output is
classical or quantum. She may also want to know if the result sent to
her by Bob is correct, which we refer to as verifiability.

In \secref{sec:dqc.model} we model the ideal resource that a DQC
protocol constructs and the structure of a generic DQC protocol. And
in \secref{sec:dqc.security} we apply the generic AC security
definition (\defref{def:cc}) to DQC.

\subsection{DQC model}
\label{sec:dqc.model}

\subsubsection{Ideal resource}
\label{sec:dqc.model.ideal}

To model the security (and correctness) of a delegated quantum computation
protocol, we need to model the ideal delegated computation resource $\aS$
that we wish to build. We start with an ideal resource that provides
blindness, and denote it $\aS^{\blind}$.

The task Alice wants to be executed is provided as an input to the
resource $\aS^{\blind}$ at the $A$\=/interface. It could be modeled
as having two parts, some quantum state $\psi_{A_1}$ and a classical
description $\Phi_{A_2}$ of some quantum operation that she wants to
apply to $\psi$, i.e., she wishes to compute $\Phi(\psi)$. This can
alternatively be seen as applying a universal computation $\cU$ to the
input $\psi_{A_1} \tensor \proj{\Phi}_{A_2}$. We adopt this view in
the remainder of this paper, and model the resource as performing some
fixed computation $\cU$ on an input $\psi_A$ that may be part quantum
and part classical.\footnoteremember{fn:classical.input}{Alternatively,
  the input can be modeled as entirely quantum, and both Alice and the
  ideal resource first measure the part of the input that should be
  classical, before executing $\pi_A$ and the universal computation
  $\cU$, respectively. This corresponds to plugging an extra
  measurement converter into the $A$\=/interfaces of both the real and
  ideal systems (that converts the quantum input into a
  classical\-/quantum input), which can only decrease the distance
  between the real and ideal systems, i.e., increase the
  security.}

  Any DQC protocol must reveal to the server an upper bound on the
  size of the computation it is required to execute. Other information
  might also be made intentionally available, such as whether the
  output of the computation is classical or quantum. Although one
  could imagine a generic DQC model in which these ``permitted leaks''
  are entangled with the rest of the input, we restrict our
  considerations to classical information, i.e., a subsystem of the
  input $\psi_A$ is classical\footnoterecall{fn:classical.input} and
  contains a string $\ell^{\psi_A} \in \{0,1\}^*$ that is copied and
  provided to the server Bob at the start of the protocol, so that he
  may set up the required resources and programs for the
  computation. Alternatively, this string can be taken to be some
  fixed publicly available information, not modeled explicitly. We do
  so in the following sections to simplify the notation, but prefer
  make it explicit in this section so as not to hide the fact that
  some information about the input is always given to the server.

The ideal resource $\aS^{\blind}$ thus takes this input $\psi_A$ at its
$A$\=/interface, and, if Bob does not activate his filtered
functionalities \--- which can be modeled by a bit $b$, set to $0$ by
default, and which a simulator $\sigma_B$ can flip to $1$ to signify
that it is activating the cheating interface \--- $\aS^{\blind}$ outputs
$\cU(\psi_A)$. This ensures both correctness and universality (in the
case where $\cU$ is a universal computation). Alternatively, $\aS^{\blind}$
can be restricted to work for inputs corresponding to a certain class
of computational problems, if we desire a construction only designed
for such a class.

If the cheating $B$\=/interface is activated, the ideal resource
outputs a copy of the string $\ell^{\psi_A}$ at this interface. Bob
also has another filtered functionality, one which allows him to
tamper with the final output. The most general operation he could
perform is to give $\aS^{\blind}$ a quantum state $\psi_B$ \--- which
could be entangled with Alice's input $\psi_A$ \--- along with the
description of some map $\cE : \lo{AB} \to \lo{A}$, and ask it to
output $\cE(\psi_{AB})$ at Alice's interface. Since $\aS^{\blind}$
only captures blindness, but says nothing about Bob's ability to
manipulate the final output, we define it to perform this operation
and output any $\cE(\psi_{AB})$ at Bob's request. This is depicted in
\figref{fig:dqc.ideal.b} with the filtered functionalities in gray.

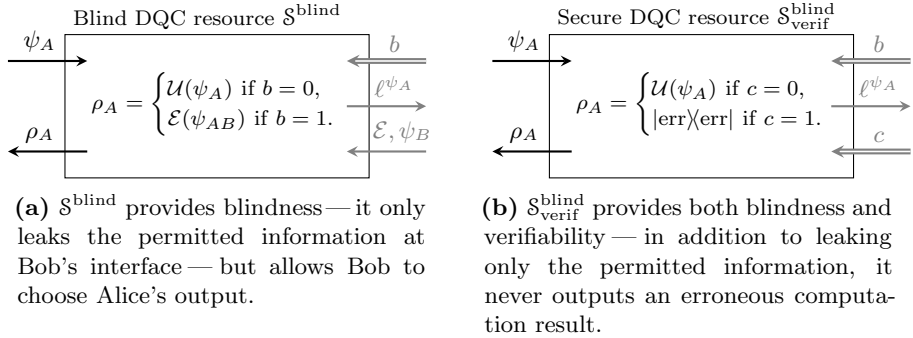
\begin{figure}[htb]
  \subcaptionbox{\label{fig:dqc.ideal.b}$\aS^{\blind}$ provides blindness
    \--- it only leaks the permitted information at Bob's interface
    \--- but allows Bob to choose Alice's output.}[.5\textwidth][l]{
\begin{tikzpicture}\small

\def\u{2.75}
\def\v{.6}

\node[brnode] (n1) at (0,\v) {};
\node[brnode] (n2) at (0,0) {};
\node[brnode] (n3) at (0,-\v) {};
\node[blocResource] (S) at (0,0) {};
\node[yshift=-1.5,above right] at (S.north west) {{\footnotesize
  Blind DQC resource} $\aS^{\blind}$};
\node[text width=3.3cm] at (0,.15) {\footnotesize \[ \rho_A = \begin{cases}
    \cU(\psi_A) \text{ if $b=0$,} \\ \cE(\psi_{AB})
    \text{ if $b=1$.} \end{cases} \]};

\node (a1) at (-\u,\v) {};
\node (a3) at (-\u,-\v) {};
\node (b1) at (\u,\v) {};
\node (b2) at (\u,0) {};
\node (b3) at (\u,-\v) {};

\draw[sArrow] (a1.center) to node[auto,pos=.4] {$\psi_A$} (n1);
\draw[sArrow] (n3) to node[auto,swap,pos=.6] {$\rho_A$} (a3.center);

\draw[gArrow,double] (b1.center) to node[auto,swap,pos=.4] {$b$} (n1);
\draw[gArrow] (n2) to node[auto,pos=.6] {$\ell^{\psi_A}$} (b2.center);
\draw[gArrow] (b3.center) to node[auto,swap,pos=.3] {$\cE,\psi_B$} (n3);

\end{tikzpicture}
} \subcaptionbox{\label{fig:dqc.ideal.v}$\aS^{\blind}_{\verif}$ provides both
  blindness and verifiability \--- in addition to leaking only the
  permitted information, it never outputs an erroneous computation
  result.}[.5\textwidth][r]{
\begin{tikzpicture}\small

\def\u{2.75}
\def\v{.6}

\node[brnode] (n1) at (0,\v) {};
\node[brnode] (n2) at (0,0) {};
\node[brnode] (n3) at (0,-\v) {};
\node[blocResource] (S) at (0,0) {};
\node[yshift=-1.5,above right] at (S.north west) {{\footnotesize
  Secure DQC resource} $\aS^{\blind}_{\verif}$};
\node[text width=3.3cm] at (0,.15) {\footnotesize \[ \rho_A = \begin{cases}
    \cU(\psi_A) \text{ if $c=0$,} \\ \proj{\err}
    \text{ if $c=1$.} \end{cases} \]};

\node (a1) at (-\u,\v) {};
\node (a3) at (-\u,-\v) {};
\node (b1) at (\u,\v) {};
\node (b2) at (\u,0) {};
\node (b3) at (\u,-\v) {};

\draw[sArrow] (a1.center) to node[auto,pos=.4] {$\psi_A$} (n1);
\draw[sArrow] (n3) to node[auto,swap,pos=.6] {$\rho_A$} (a3.center);

\draw[gArrow,double] (b1.center) to node[auto,swap,pos=.4] {$b$} (n1);
\draw[gArrow] (n2) to node[auto,pos=.6] {$\ell^{\psi_A}$} (b2.center);
\draw[gArrow,double] (b3.center) to node[auto,swap,pos=.4] {$c$} (n3);

\end{tikzpicture}
}
\caption[Ideal DQC resources]{\label{fig:dqc.ideal}Ideal DQC
  resources. The client Alice has access to the left interface, and
  the server Bob to the right interface. The double\-/lined input
  flips a bit set by default to $0$. The functionalities provided at
  Bob's interface are grayed to signify that they are accessible only
  to a cheating server. If Bob is honest, this interface is obstructed
  by a filter, which we denote by $\bot_B$ in the following.}
\end{figure}

\begin{deff}
\label{def:dqc.b}
The ideal DQC resource $\aS^{\blind}$ which provides both correctness
and blindness takes an input $\psi_A$ at Alice's interface, but no
honest input at Bob's interface. Bob's filtered interface has a
control bit $b$, set by default to $0$, which he can flip to activate
the other filtered functionalities. The resource $\aS^{\blind}$ then
outputs the permitted leak $\ell^{\psi_A}$ at Bob's interface, and accepts two
further inputs, a state $\psi_B$ and map description $\proj{\cE}$. If
$b = 0$, it outputs the correct result $\cU(\psi_A)$ at Alice's
interface; otherwise it outputs Bob's choice,
$\cE\left(\psi_{AB}\right)$.
\end{deff}

A DQC protocol is verifiable if it provides Alice with a mechanism to
detect a cheating Bob and output an error flag {\tt err} instead of
some incorrect computation. This is modeled by weakening Bob's
filtered functionality: an ideal DQC resource with verifiability,
$\aS^{\blind}_{\verif}$, only allows Bob to input one classical bit $c$, which
specifies whether the output should be $\cU(\psi_A)$ or some error
state $\sket{\err}$, which by construction is orthogonal to the space
of valid outputs. The ideal resource thus never outputs a wrong
computation. This is illustrated in \figref{fig:dqc.ideal.v}.

\begin{deff}
\label{def:dqc.v}
The ideal DQC resource $\aS^{\blind}_{\verif}$ which provides correctness,
blindness and verifiability takes an input $\psi_A$ at Alice's
interface, and two filtered control bits $b$ and $c$ (set by default
to $0$). If $b = 0$, it simply outputs $\cU(\psi_A)$ at Alice's
interface. If $b = 1$, it outputs the permitted leak $\ell^{\psi_A}$ at
Bob's interface, then reads the bit $c$, and conditioned on its value,
it either outputs $\cU(\psi_A)$ or $\sket{\err}$ at Alice's interface.
\end{deff}

\subsubsection{Concrete setting}
\label{sec:dqc.model.concrete}

In the concrete (or real) setting, the only resource that Alice and
Bob need is a (two-way) communication channel $\aR$. Alice's protocol
$\pi_A$ receives $\psi_A$ as an input on its outside interface. It
then communicates through $\aR$ with Bob's protocol $\pi_B$, and
produces some final output $\rho_A$. For the sake of generality we
assume that the operations performed by $\pi_A$ and $\pi_B$, and the
communication between them, are all quantum. Of course, a protocol is
only useful if Alice has very few quantum operations to perform, and
most of the communication is classical. However, to model security, it
is more convenient to consider the most general case possible, so that
it applies to all possible protocols.

As described in \secref{sec:systems.two}, their protocols can be
modeled by a sequence of CPTP maps $\{\cE_i : \lo{AC} \to \lo{AC}
\}_{i=1}^N$ and $\{\cF_i : \lo{CB} \to \lo{CB} \}_{i=1}^{N-1}$. We
illustrate a run of such a protocol in \figref{fig:dqc.real}. This is
a special case of \figref{fig:prot.real} in which Bob has neither
input nor output. The entire system consisting of the protocol
$(\pi_A,\pi_B)$ and the channel $\aR$ is a map which transforms
$\psi_A$ into $\rho_A$. If both players played honestly and the
protocol is correct, this should result in $\rho_A = \cU(\psi_A)$.

\begin{figure}[htb]
\begin{centering}

\begin{tikzpicture}[opnode/.style={minimum width=1.05cm,minimum height=.6cm}]
\small

\def\t{4.1}
\def\u{2.6}
\def\v{.9}

\node[draw,opnode] (a1) at (-\u,0) {$\cE_1$};
\node[draw,opnode] (a2) at (-\u,-\v) {$\cE_2$};
\node[draw,opnode] (a3) at (-\u,-2*\v) {$\cE_3$};
\node[opnode] (a4) at (-\u,-3*\v) {};
\node[opnode] (a45) at (-\u,-3.12*\v) {$\vdots$};
\node[opnode] (a5) at (-\u,-3.5*\v) {};
\node[draw,opnode] (a6) at (-\u,-4.5*\v) {$\cE_N$};
\node[draw,inner sep=.25cm,fit=(a1)(a6)] (a) {};
\node[yshift=-2,above right] at (a.north west) {$\pi_A$};
\node (alice1) at (-\t,0) {};
\node (alice2) at (-\t,-4.5*\v) {};

\node[opnode] (b0) at (\u,0) {};
\node[draw,opnode] (b1) at (\u,-.5*\v) {$\cF_1$};
\node[draw,opnode] (b2) at (\u,-1.5*\v) {$\cF_2$};
\node[opnode] (b3) at (\u,-2.5*\v) {};
\node[opnode] (b35) at (\u,-2.62*\v) {$\vdots$};
\node[opnode] (b4) at (\u,-3*\v) {};
\node[draw,opnode] (b5) at (\u,-4*\v) {$\cF_{N-1}$};
\node[opnode] (b6) at (\u,-4.5*\v) {};
\node[draw,inner sep=.25cm,fit=(b0)(b6)] (b) {};
\node[yshift=-2,above right] at (b.north west) {$\pi_B$};

\node[opnode] (c0) at (0,0) {};
\node[opnode] (c3) at (0,-2.25*\v) {};
\node[opnode] (c4) at (0,-2.85*\v) {$\vdots$};
\node[opnode] (c5) at (0,-3.75*\v) {};
\node[opnode] (c6) at (0,-4.5*\v) {};
\node[draw,minimum width=2.2cm,inner sep=.25cm,fit=(c0)(c6)] (c) {};
\node[yshift=-2,above right] at (c.north west) {$\aR$};

\draw[sArrow] (alice1.center) to node[auto,pos=.4] {$\psi_A$} (a1);
\draw[sArrow] (a1) to (b1);
\draw[sArrow] (a1) to (a2);
\draw[sArrow] (b1) to (a2);
\draw[sArrow] (b1) to (b2);
\draw[sArrow] (a2) to (b2);
\draw[sArrow] (a2) to (a3);
\draw[sArrow] (b2) to (a3);
\draw[sArrow] (b2) to (b3);
\draw[sArrow] (a3) to (c3);
\draw[sArrow] (a3) to (a4);
\draw[sArrow] (c5) to (b5);
\draw[sArrow] (a5) to (a6);
\draw[sArrow] (b4) to (b5);
\draw[sArrow] (b5) to (a6);
\draw[sArrow] (a6) to node[auto,swap,pos=.6] {$\rho_A$} (alice2.center);

\end{tikzpicture}

\end{centering}
\caption[Generic DQC protocol]{\label{fig:dqc.real}A generic run of a
  DQC protocol. Alice has access to the left interface and Bob to the
  right interface. The entire system builds one CPTP operation which
  maps $\psi_A$ to $\rho_A$.}
\end{figure}
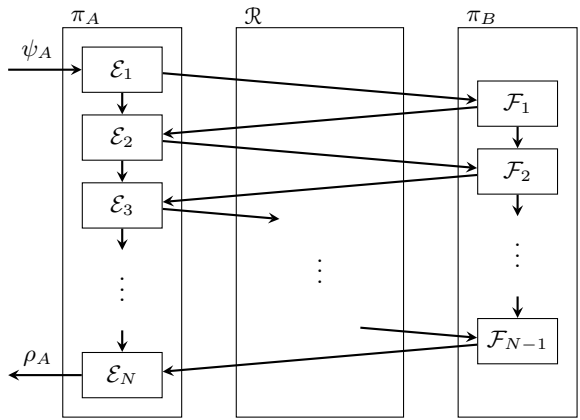

In the following, when we refer to a DQC protocol, we simply mean any
protocol satisfying the model of \figref{fig:dqc.real}. Whether the
protocol actually performs delegated quantum computation depends on
whether it satisfies the correctness condition, which we define in
\secref{sec:dqc.security}.

\subsection{Security of DQC}
\label{sec:dqc.security} 

Since we are interested in modeling a cheating server Bob, but do not
care what happens if the client Alice does not follow her protocol, it
is sufficient to take from \defref{def:cc} the equations corresponding
to an honest Alice. Applying this to the DQC model from the previous
section, we get that a protocol $\pi$ constructs a blind quantum
computation resource $\aS^{\blind}$ from a communication channel $\aR$ within
$\eps$ if there exists a simulator $\sigma_B$ such that
\begin{equation} \label{eq:dqc.b}
  \pi_A\aR\pi_B \close{\eps} \aS^{\blind}\bot_B \qquad \text{and} \qquad
  \pi_A\aR \close{\eps} \aS^{\blind}\sigma_B,
\end{equation}
where $\bot_B$ is a filter which obstructs Bob's cheating
interface.\footnote{From now on, we write all the converters plugged
  in the $A$\=/interfaces on the left of the resources and those
  plugged in the $B$\=/interfaces on the right.} The fist condition in
\eqnref{eq:dqc.b} captures the correctness of the protocol, and we say
that a protocol provides \emph{$\eps$\=/correctness} if this condition
is fulfilled. The second condition, which we illustrate in
\figref{fig:dqc.sec}, measures the security. If it is fulfilled, we
have \emph{$\eps$\=/blindness}. If $\eps = 0$ we say that we have
\emph{perfect blindness}.

\begin{figure}[htb]
\begin{centering}

\begin{tikzpicture}[
      genResource/.style={draw,minimum width=2cm,minimum height=2.4cm},
      innGenResource/.style={minimum width=1.4cm,minimum height=.2cm},
      genConverter/.style={draw,minimum width=1cm,minimum height=2.4cm},
      innGenConverter/.style={minimum width=.4cm,minimum height=.2cm}]

\small

\def\t{-2.5} 
\def\u{.75} 
\def\v{-1.25} 
\def\w{2.5} 
\def\x{.3} 
\def\sh{6.8cm} 

\node[xshift=.52*\sh] at (0,0) {\Large $\close{\eps}$};

\node (in) at (\t,3*\x) {};
\node (out) at (\t,-3*\x) {};

\node[genConverter] (prot) at (\v,0) {};
\node[yshift=-1.5,above right] at (prot.north west) {$\pi_A$};
\node[innGenConverter] (c1) at (\v,3*\x) {};
\node[innGenConverter] (c3) at (\v,\x) {};
\node[innGenConverter] (c4) at (\v,-\x) {};
\node[innGenConverter] (c6) at (\v,-3*\x) {};

\draw[sArrow] (in.center) to node[auto,pos=.4] {$\psi_A$} (c1);
\draw[sArrow] (c6) to node[auto,pos=.6,swap] {$\rho_A$} (out.center);

\node[genResource] (R) at (\u,0) {};
\node[yshift=-1.5,above right] at (R.north west) {$\aR$};

\node (o1) at (\w,3*\x) {};
\node (o2) at (\w,\x) {};
\node (o3) at (\w,-\x) {};
\node (o4) at (\w,-3*\x) {};

\draw[sArrow] (c1) to (o1.center);
\draw[sArrow] (o2.center) to (c3);
\draw[sArrow] (c4) to (o3.center);
\draw[sArrow] (o4.center) to (c6);


\node[xshift=\sh] (iin) at (-\w,3*\x) {};
\node[xshift=\sh] (iout) at (-\w,-3*\x) {};

\node[xshift=\sh,genResource] (iS) at (-\u,0) {};
\node[yshift=-1.5,above right] at (iS.north west) {$\aS$};
\node[xshift=\sh,innGenResource] (is1) at (-\u,3*\x) {};
\node[xshift=\sh,innGenResource] (is2) at (-\u,-3*\x) {};

\draw[sArrow] (iin.center) to node[auto,pos=.4] {$\psi_A$} (is1);
\draw[sArrow] (is2) to node[auto,pos=.6,swap] {$\rho_A$} (iout.center);

\node[xshift=\sh,genConverter] (isim) at (-\v,0) {};
\node[yshift=-1.5,above right] at (isim.north west) {$\sigma_B$};
\node[xshift=\sh,innGenConverter] (ic1) at (-\v,3*\x) {};
\node[xshift=\sh,innGenConverter] (ic3) at (-\v,\x) {};
\node[xshift=\sh,innGenConverter] (ic4) at (-\v,-\x) {};
\node[xshift=\sh,innGenConverter] (ic6) at (-\v,-3*\x) {};

\draw[gArrow] (is1) to (ic1);
\draw[gArrow] (ic6) to (is2);

\node[xshift=\sh] (io1) at (-\t,3*\x) {};
\node[xshift=\sh] (io2) at (-\t,\x) {};
\node[xshift=\sh] (io3) at (-\t,-\x) {};
\node[xshift=\sh] (io4) at (-\t,-3*\x) {};

\draw[sArrow] (ic1) to (io1.center);
\draw[sArrow] (io2.center) to (ic3);
\draw[sArrow] (ic4) to (io3.center);
\draw[sArrow] (io4.center) to (ic6);

\end{tikzpicture}

\end{centering}
\caption[$\eps$-security of DQC]{\label{fig:dqc.sec}An illustration of
  the second terms of \eqnsref{eq:dqc.b} and \eqref{eq:dqc.v}. If a
  distinguisher cannot guess with advantage greater than $\eps$
  whether it is interacting with the real construct on the left or the
  ideal construct on the right, the two are $\eps$\=/close and the
  protocol $\eps$\=/secure against a cheating Bob.}
\end{figure}
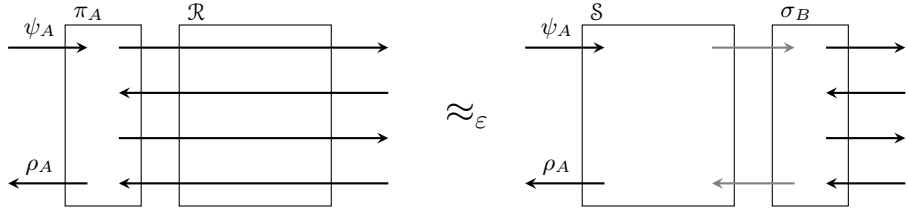

Likewise in the case of verifiability, the ideal resource $\aS^{\blind}_{\verif}$ is
constructed by $\pi$ from $\aR$ if there exists a simulator $\sigma_B$
such that,
\begin{equation}  \label{eq:dqc.v}
  \pi_A\aR\pi_B \close{\eps} \aS^{\blind}_{\verif}\bot_B \qquad \text{and} \qquad
  \pi_A\aR \close{\eps} \aS^{\blind}_{\verif}\sigma_B.
\end{equation}
The first condition from \eqnref{eq:dqc.v} is identical to the first
condition of \eqnref{eq:dqc.b}, and captures $\eps$\=/correctness. The
second condition in \eqnref{eq:dqc.v} (also illustrated by
\figref{fig:dqc.sec}) guarantees both blindness and verifiability, and
if it is satisfied we say that the we have
\emph{$\eps$\=/blind\-/verifiability}.

Note that the exact metrics used to distinguish between the resources
from \eqnsref{eq:dqc.b} and \eqref{eq:dqc.v} are defined in
\secref{sec:systems.dist}. $\pi_A\aR\pi_B$ and $\aS\bot_B$ \--- as can
be seen from their depictions in Figures~\ref{fig:dqc.real} and
\ref{fig:dqc.ideal} (with a filter blocking the cheating interface of
the latter) \--- are resources which implement a single map, so the
diamond distance corresponds to the distinguishing
advantage. $\pi_A\aR$ and $\aS\sigma_B$ are half of two-party
protocols, so the distinguishing metric corresponds to the distance
between quantum strategies/combs introduced by Gutoski and
Watrous~\cite{GW07,Gut12} and Chiribella et al.~\cite{CDP09}, and
described in \secref{sec:systems.dist}.

\section{Blind and verifiable DQC}
\label{sec:verif}

Finding a simulator to prove the security of a protocol can be
challenging. In this section we reduce the task of proving that a DQC
protocol constructs the ideal resource $\aS^{\blind}_{\verif}$ to proving that
the map implemented by the protocol is close to some ideal map that
intuitively provides some form of
local\-/blindness\-/and\-/verifiability. The converse also holds: any
protocol which constructs $\aS^{\blind}_{\verif}$ must be close to this ideal
map.

A malicious server Bob will not apply the CPTP maps assigned to him by
the protocol, but his own set of cheating maps $\{\cF_i : \lo{CB} \to
\lo{CB}\}_{i=1}^{N-1}$. Furthermore, he might hold (the $B$ part of) a
purification of Alice's input, $\psi_{ABR}$. Intuitively, a protocol
provides local\-/blindness\footnoteremember{fn:sa}{We provide formal
  definitions of local\-/blindness and local\-/verifiability in
  \secref{sec:sa.defs}.}  if the final state held by Bob could have
been generated by a local map on his system \--- say, $\cF$ \---
independently from Alice's input, but which naturally depends on his
behavior given by the maps $\{\cF_i \}_i$. It provides
local\-/verifiability\footnoterecall{fn:sa} if the final state held by
Alice is either the correct outcome or some error flag. Combining the
two gives an ideal map of the from $\cU \tensor \cF^{\ok} + \cE^{\err}
\tensor \cF^{\err}$, where $\cF^{\ok}$ and $\cF^{\err}$ break $\cF$
down in two maps which result in the correct outcome and an error
flag, respectively.

\begin{deff}[local-blind-verifiability]
\label{def:verif.bv}
We say that a DQC protocol provides
$\eps$\=/local\-/blind\-/verifiability, if, for all adversarial
behaviors $\{\cF_i\}_i$, there exist two completely positive, trace
non\-/increasing maps $\cF_B^{\ok}$ and $\cF_B^{\err}$, such that
\begin{equation} \label{eq:verif.bv} \cP_{AB} \close{\eps} \cU_A
  \tensor \cF_B^{\ok} + \cE_A^{\err} \tensor
  \cF_B^{\err},\end{equation} where $\cP_{AB} : \lo{AB} \to \lo{AB}$
is the map corresponding to a protocol run with Alice behaving
honestly and Bob using his cheating operations $\{\cF_i\}_i$, and
$\cE^{\err}_A$ discards the $A$ system and produces an error flag
$\proj{\err}$ orthogonal to all possible valid outputs. We say that
the protocol provides $\eps$\=/local\-/blind\-/verifiability for a
set of initial states $\cB$, if \eqnref{eq:verif.bv} holds when
applied to these states, i.e., for all $\psi_{ABR} \in \cB$,
\begin{equation*} \cP_{AB} \left(\psi_{ABR}\right) \close{\eps}
  \left(\cU_A \tensor \cF_B^{\ok} + \cE_A^{\err} \tensor
    \cF_B^{\err}\right) \left(\psi_{ABR}\right).\end{equation*}
\end{deff}

\begin{rem}
  \label{rem:preprocessing}
  For simplicity, this definition assumes the allowed leaks (e.g.,
  input size, computation size) to be fixed, and applies to all
  protocols $\cP_{AB}$ tailored for inputs with an identical leak
  (e.g., identical size). These leaks could be explicitly modeled by
  allowing the maps $\cF_B^{\ok}$ and $\cF_B^{\err}$ to depend on
  them.
\end{rem}

We now prove that it is both necessary and sufficient for a DQC
protocol to satisfy \defref{def:verif.bv} to be blind\-/verifiable,
i.e., to satisfy the second condition of \eqnref{eq:dqc.v}. In order
to construct $\aS^{\blind}_{\verif}$, a DQC protocol also needs to be
$\eps$\=/correct, that is, satisfy the first condition from
\eqnref{eq:dqc.v}. We show in \appendixref{app:cor} that this is
fulfilled, if, when Bob behaves honestly, \eqnref{eq:verif.bv} is
satisfied for $\cF^{\ok}_B = \id_B$ and $\cF^{\err}_B = 0$.

\begin{thm}
\label{thm:verif.sec}
Any DQC protocol which provides $\eps$\=/local-blind\-/verifiability
is $2\eps$\=/blind\-/verifiable. And any DQC protocol which is
$\eps$\=/blind\-/verifiable provides
$\eps$\=/local\-/blind\-/verifiability.
\end{thm}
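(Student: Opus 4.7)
The plan is to prove both implications separately, using the key structural observation that the ideal map in Definition~\ref{def:verif.bv} has a tensor-product form: Alice's side is either $\cU(\psi_A)$ or $\sket{\err}$ (which lie in orthogonal subspaces), while Bob's side is the output of maps $\cF_B^{\ok}, \cF_B^{\err}$ acting only on his own register and hence independent of $\psi_A$. This product structure is exactly what allows a simulator to operate correctly without ever seeing Alice's input.

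For the forward direction (blind verifiability implies composable security), given any cheating strategy $\{\cF_i\}_i$ used by the distinguisher at Bob's interface, I construct a simulator $\sigma_B$ that internally emulates a run of $\pi_A$ against the distinguisher using a fixed dummy input $\phi_A$ (say $\sket{0}$), presenting at its outer interface exactly the quantum messages that $\pi_A$ would produce. After the emulation, $\sigma_B$ measures its internal Alice register with the two-outcome projective measurement $\{\proj{\err}, \I - \proj{\err}\}$ and sets the control bit $c$ of $\aS^{bv}$ to $1$ or $0$ accordingly. By stand-alone $\eps$-blind verifiability applied to the emulation, the joint simulated state is $\eps$-close to $(\cU_A \tensor \cF_B^{\ok} + \cE_A^{\rr} \tensor \cF_B^{\err})(\phi_A \tensor \psi_B)$; since the $\ok/\err$ probabilities and Bob's conditional states in that expression depend only on $\psi_B$, the overall map realized by $\aS^{bv}\sigma_B$ on the actual input $(\psi_A, \psi_B)$ is $\eps$-close to $(\cU_A \tensor \cF_B^{\ok} + \cE_A^{\rr} \tensor \cF_B^{\err})(\psi_A \tensor \psi_B)$. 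Combining this with the stand-alone bound on $\cP_{AB}$ via the triangle inequality yields $\pi_A\aR \close{2\eps} \aS^{bv}\sigma_B$.

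For the reverse direction (composable security implies blind verifiability), given an arbitrary adversarial behavior $\{\cF_i\}_i$, I consider the distinguisher that plugs these maps directly into Bob's outer interface of both the real and ideal systems. The real interaction produces the map $\cP_{AB}$ by definition. On the ideal side, because the simulator $\sigma_B$ provided by $\eps$-security never sees $\psi_A$, conditioning on its output bit $c$ decomposes Bob's marginal into two subnormalized CP maps $\cF_B^{\ok}$ (for $c=0$) and $\cF_B^{\err}$ (for $c=1$) that sum to a trace-preserving map; the joint output then has exactly the form $(\cU_A \tensor \cF_B^{\ok} + \cE_A^{\rr} \tensor \cF_B^{\err})(\psi_A \tensor \psi_B)$. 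The $\eps$-security bound on the overall comb distance, specialized to this distinguisher, thus immediately yields stand-alone $\eps$-blind verifiability.

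The main subtlety is in the forward direction, where I must justify that the simulator's projective measurement on its emulated Alice register correctly reproduces the $\ok/\err$ split of Bob's state up to error $\eps$. This uses that the range of $\cU$ and the one-dimensional span of $\sket{\err}$ are orthogonal, so the measurement perfectly separates the two branches when applied to the ideal decomposition, combined with the fact that diamond-norm distance is non-increasing under the post-processing consisting of the measurement and $\aS^{bv}$'s conditional output. Thus the $\eps$ gap between $\cP_{AB}(\phi_A \tensor \psi_B)$ and the ideal decomposition at $\phi_A$ is preserved through $\sigma_B$'s operation and combines cleanly with the stand-alone bound at the true input to give the final factor of two.
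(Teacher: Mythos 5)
Your proof is correct and follows essentially the same route as the paper: the simulator internally runs $\pi_A$ on a dummy input, measures its emulated Alice register against $\proj{\err}$ to set the bit $c$, and the input\-/independence of $\cF_B^{\ok},\cF_B^{\err}$ plus monotonicity of the distance under the final post\-/processing gives $\eps$\=/closeness to the ideal decomposition at the true input, whence the triangle inequality yields $2\eps$; the converse likewise matches the paper's conditioning of the simulator\--distinguisher map on the output bit. No gaps to report.
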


To show that local\-/blind\-/verifiability implies
blind\-/verifiability we use a standard ``dummy input'' argument: the
simulator runs Alice's protocol with a dummy input, and notifies the
ideal resource to abort if the simulation aborts. The converse is
immediate after writing up the combined actions of the distinguisher
and simulator as maps.

\begin{proof}
  We start by showing that local\-/blind\-/verifiability is sufficient
  for a DQC protocol to be blind\-/verifiable, i.e., there exists a
  simulator $\sigma_B$ such that the two resources in
  \figref{fig:dqc.sec} are $2\eps$\=/close. To do this, we define
  $\sigma_B$ to work as follows. It sets the bit $b = 1$, receives the
  permitted leaks $\ell^{\psi_A}$ from $\aS^{\blind}_{\verif}$, picks
  any input $\psi_{B}$ compatible with this information, and runs the
  protocol $\pi_A$ on this input with its internal register, which we
  denote by $B$. After the last step, it projects the state it holds
  in $B$ on $\proj{\err}$ and $I-\proj{\err}$, and sends $c=0$ to
  $\aS^{\blind}_{\verif}$ if no error was detected, otherwise it sends
  $c=1$. As defined in \defref{def:dqc.v}, $\aS^{\blind}_{\verif}$
  then either outputs the correct result or an error flag depending on
  the value of $c$.

  As described in \secref{sec:systems.dist}, the most general
  operation the distinguisher can perform to distinguish between the
  resources $\pi_A\aR$ and $\aS^{\blind}_{\verif}\sigma_B$, is to choose some initial
  state $\psi_{AR}$, send $\psi_A$ to the system with which it is
  interacting, apply some operations $\{\cF_i : \lo{CR} \to \lo{CR}
  \}_{i=1}^{N-1}$ each time it receives some message on the channel
  $C$, and return each time the new state in $C$.

  Let $\rho^{\psi}_{AR}$ be the final state when the distinguisher is
  interacting with $\pi_A\aR$. By \eqnref{eq:verif.bv},\footnote{In
    the real system, Alice (holding $A$) runs the protocol with the
    distinguisher (holding $R$). With these indices
    \eqnref{eq:verif.bv} reads $\cP_{AR} \close{\eps} \cU_A \tensor
    \cF_R^{\ok} + \cE_A^{\err} \tensor \cF_R^{\err}$.} this state is
  $\eps$-close to
  \[ \tau^{\psi}_{AR} \coloneqq \left( \cU \tensor \cF^{\ok} \right)
  (\psi_{AR}) + \proj{\err} \tensor \cF^{\err} (\psi_{R}),\] for some
  $\cF^{\ok}$ and $\cF^{\err}$ which depend only on $\{\cF_i\}_i$, not
  on $\psi_{AR}$.

  When the distinguisher is interacting with $\aS\sigma_B$ and using
  the same operations $\{\cF_i\}_i$ and initial state $\psi_{AR}$, let
  $\alpha^{\psi}_{ARB}$ be the state of the system at the end of the
  subroutine $\pi_A$ and before sending the bit $c$ to
  $\aS^{\blind}_{\verif}$. Then, using
  \eqnref{eq:verif.bv},\footnote{In the ideal system, the simulator
    (holding $B$) runs the protocol with the distinguisher (holding
    $R$). With these indices \eqnref{eq:verif.bv} reads $\cP_{BR}
    \close{\eps} \cU_B \tensor \cF_R^{\ok} + \cE_B^{\err} \tensor
    \cF_R^{\err}$.} we find that $\alpha^{\psi}_{ARB}$ is $\eps$-close
  to
  \[ \gamma^{\psi}_{ARB} \coloneqq \left( \id_A \tensor \cF^{\ok}
    \tensor \cU \right) (\psi_{AR} \tensor \psi_{B}) + \left(
    \id_A \tensor \cF^{\err}\right) (\psi_{AR}) \tensor \proj{\err}.\]

  The final operation performed by $\aS^{\blind}_{\verif}$ to generate the output can be
  seen as a map $\cS$, which conditioned on $B$ being an error,
  deletes $B$ and overwrites $A$ with an error, and conditioned
  on $B$ being a valid output, deletes $B$ and applies
  $\cU$ to the system $A$. Since a map can only decrease the distance
  between two states, the final state of the system after this
  operation, $\phi^{\psi}_{AR} \coloneqq
  \cS(\alpha^{\psi}_{ARB})$, is $\eps$-close to
  $\cS(\gamma^{\psi}_{ARB}) = \tau^{\psi}_{AR}$. By the triangle
  inequality we thus have $\rho^{\psi}_{AR} \close{2\eps}
  \phi^{\psi}_{AR}$.

  We now prove the converse. If the protocol is
  $\eps$\=/blind\-/verifiable, there exists a simulator $\sigma_B$
  such that $\pi_A\aR \close{\eps}\aS\sigma_B$. A distinguisher
  interacting with one of the two systems chooses an initial state
  $\psi_{AR}$, and applies operations $\cF_i : \lo{CR} \to \lo{CR}$ to
  the messages received on the channel $C$ and the system $R$.

  Consider now the interaction of the simulator and the
  distinguisher. Since the simulator deletes its internal memory when
  it terminates, and outputs only a single bit $c$ notifying the ideal
  resource to output the correct result or an error flag, the combined
  action of the two can be seen as a CPTP map $\cF : \lo{R} \to
  \{0,1\} \times \lo{R}$. Conditioning on the output $\{0,1\}$, we
  explicitly define two trace non\-/increasing maps
  $\cF^{\ok},\cF^{\err} : \lo{R} \to \lo{R}$, i.e., $\cF(\rho) =
  \proj{0} \tensor \cF^{\ok}(\rho) + \proj{1} \tensor
  \cF^{\err}(\rho)$. Since the ideal blind and verifiable DQC resource
  outputs the correct result upon receiving $0$, and an error flag
  otherwise, the joint map of ideal resource, simulator and
  distinguisher is given by $\cU \tensor \cF^{\ok} + \cE^{\err}
  \tensor \cF^{\err}$. And this map must be $\eps$-close to the real
  map, otherwise the distinguisher would have an advantage greater
  than $\eps$.
\end{proof}

\section{Reduction to local criteria}
\label{sec:standalone}

Although the notion of local\-/blind\-/verifiability defined in the
previous section captures the security of DQC in a single equation, it
is still more elaborate than existing definitions found in the
literature, that treat blindness and verifiability separately.

In \secref{sec:sa.defs} we provide separate definitions for these
local notions, and strengthen local\-/verifiability by requiring that
the server Bob be able to infer on his own whether the client Alice
will reject his response \--- learning whether Alice did reject will
then not provide him with any information that he could not obtain on
his own. In \secref{sec:sa.comp} we show that in the case where Bob
does not hold a state entangled with the input (e.g., when the input is
entirely classical), these notions are sufficient to obtain
local\-/blind\-/verifiability with a similar error parameter. In the
case where Bob's system is entangled to Alice's input, we show that
the same holds, albeit with an error increased by a factor $\left(\dim
  \hilbert_{A_Q}\right)^2$, where $A_Q$ is the subsystem of Alice's
input which is quantum.

This can be used to show that the protocol of Fitzsimons and
Kashefi~\cite{FK12} and Morimae~\cite{Mor14}, which have already been
analyzed using (insufficient) local criteria, are secure. We provide a
proof sketch of the missing steps for both these protocols
in \appendixref{app:app}.

\subsection{Local-blindness and independent local-verifiability}
\label{sec:sa.defs}

Local\-/blindness can be seen as a simplification of local\-/blind\-/
verifiability, in which we ignore Alice's outcome and only
check that Bob's system could have been generated locally, i.e., is
independent from Alice's input (and output).

\begin{deff}[Local-blindness]
\label{def:sa.b}
A DQC protocol provides $\eps$\=/local\-/blindness, if, for all
adversarial behaviors $\{\cF_i\}_i$, there exists a CPTP map $\cF :
\lo{B} \to \lo{B}$ such that 
\begin{equation} \label{eq:sa.b} \tr_A \circ \cP_{AB} \close{\eps} \cF
  \circ \tr_A,\end{equation} where $\circ$ is the composition of maps,
$\tr_A$ the operator that trace out the $A$\=/system, and $\cP_{AB} :
\lo{AB} \to \lo{AB}$ is the map corresponding to a protocol run with
Alice behaving honestly and Bob using his cheating operations
$\{\cF_i\}_i$. We say that the protocol provides
$\eps$\=/local\-/blindness for a set of initial states $\cB$, if
\eqnref{eq:sa.b} holds when applied to these states, i.e., for all
$\psi_{ABR} \in \cB$, \[ \tr_A \circ \cP_{AB} (\psi_{ABR})\close{\eps}
\cF \circ \tr_A (\psi_{ABR}).\]
\end{deff}

Likewise, local\-/verifiability can also be seen as a simplification
of local\-/blind\-/verifiability, in which we ignore Bob's system and
only check that Alice holds either the correct outcome or an error
flag $\sket{\err}$, which by construction is orthogonal to any
possible valid output. In the following we define
local\-/verifiability only for the case where Bob's system is not
entangled to Alice's input, since otherwise the correct outcome
depends on Bob's actions, and cannot be modeled by describing Alice's
system alone.\footnote{The resulting definition is equivalent to that
  of \cite{FK12} and non\-/composable authentication
  definitions~\cite{BCGST02}, which bound the probability of
  projecting the outcome on the space of invalid results.}


\begin{deff}[Local\-/verifiability] \label{def:sa.v} A DQC
  protocol provides $\eps$\=/local\-/verifiability, if, for all
  adversarial behaviors $\{\cF_i\}_i$ and all initial states $\psi_{AR_1}
  \tensor \psi_{R_2B}$, there exists a $0 \leq p^\psi \leq 1$ such
  that
  \begin{equation} \label{eq:sa.v.td} \rho^{\psi}_{AR_1} \close{\eps}
    p^\psi \left(\cU \tensor \id_{R_1}\right)(\psi_{AR_1}) + (1 -
    p^\psi) \proj{\err} \tensor \psi_{R_1}, \end{equation} where
  $\rho^{\psi}_{AR_1}$ is the final state of Alice and the first part
  of the reference system. We say that the protocol provides
  $\eps$\=/local\-/verifiability for a set $\cB$ of initial states in
  product form, if \eqnref{eq:sa.v.td} holds for all $\psi_{AR_1}
  \tensor \psi_{R_2B} \in \cB$.
\end{deff}

As mentioned in \secref{sec:intro}, local\-/blindness and
local\-/verifiability together do not provide the security guarantees
one expects from DQC. This seems to be because the verification
procedure can depend on the input (as in the example from
\footnoteref{fn:gap}), and thus if Bob learns the result of this
measurement, he learns something about the input. This motivates us to
define a stronger notion, in which Bob can reconstruct on his own
whether the output will be accepted \--- the outcome of Alice's
verification procedure must thus be independent of her input. To do
this, we introduce a new qubit in a system $\bar{B}$, which contains a
copy of the information whether Alice accepts or rejects, i.e., for a
final state
\begin{equation}
\label{eq:sa.ind.final}
\rho^{\psi}_{ARB} = \phi^{\ok}_{ARB} + \proj{\err}
\tensor \phi^{\err}_{RB},
\end{equation}
we define
\begin{equation}
\label{eq:sa.ind.post} \rho^{\psi}_{ARB\bar{B}} \coloneqq 
\phi^{\ok}_{ARB} \tensor \proj{\ok} + \proj{\err} 
\tensor \phi^{\err}_{RB} \tensor \proj{\err}.
\end{equation}
Note that \eqnref{eq:sa.ind.post} can be generated from
\eqnref{eq:sa.ind.final} by introducing a system $\bar{B}$ in the
state $\ket{\ok}$ and changing its value to $\ket{\err}$ conditioned
on $A$ being in the state $\ket{\err}$. Let $\cQ_{A\bar{B}} : \lo{A}
\to \lo{A\bar{B}}$ be such an operation, i.e.,
$\rho^{\psi}_{ARB\bar{B}} =
\cQ_{A\bar{B}}(\rho^{\psi}_{ARB})$. \eqnref{eq:sa.ind.final} can then
be recovered from \eqnref{eq:sa.ind.post} by tracing out the system
$\bar{B}$.

The notion of verifiability is strengthened by
additionally requiring that leaking this system $\bar{B}$ to the
adversary does not provide him with more information about the input,
i.e., Bob could (using alternative maps) generate the system $\bar{B}$
on his own.

\begin{deff}
\label{def:sa.ind}
A DQC protocol provides $\bar{\eps}$\=/independent
$\eps$\=/local\-/verifiability, if, in addition to providing
$\eps$\=/local\-/verifiability, for all adversarial behaviors $\{\cF_i
: \lo{CB} \to \lo{CB}\}_i$ there exist alternative maps $\{\cF'_i :
\lo{CB\bar{B}} \to \lo{CB\bar{B}}\}_i$ (for an initially empty system
$\bar{B}$), such that
\begin{equation} \label{eq:sa.ind} \tr_A \circ \cQ_{A\bar{B}} \circ
  \cP_{AB} \close{\bar{\eps}} \tr_{A} \circ
  \cP'_{AB\bar{B}},\end{equation} where $\circ$ is the composition of
maps, $\cP_{AB} : \lo{AB} \to \lo{AB}$ and $\cP'_{AB\bar{B}} : \lo{AB}
\to \lo{AB\bar{B}}$ are the maps corresponding to runs of the protocol
with Alice being honest and Bob using maps $\{\cF_i\}_i$ and
$\{\cF'_i\}_i$ respectively, and $\cQ_{A\bar{B}} : \lo{A} \to
\lo{A\bar{B}}$ is a map which generates from $A$ a system $\bar{B}$
holding a copy of the information whether Alice accepts or rejects. We
say that a protocol provides $\bar{\eps}$\=/independent
$\eps$\=/local\-/verifiability for a set of initial states $\cB$, if
the same conditions hold for all states in $\cB$, i.e., if we have
$\eps$\=/local\-/verifiability for $\cB$, and if for all $\psi_{ABR}
\in \cB$, \[ \tr_A \circ \cQ_{A\bar{B}} \circ \cP_{AB} (\psi_{ABR})
\close{\bar{\eps}} \tr_{A} \circ \cP'_{AB\bar{B}} (\psi_{ABR}).\]
\end{deff}

\begin{rem}
  \label{rem:sa.ind} By the triangle inequality, if a protocol
  provides both $\eps$\=/local\-/blindness and
  $\bar{\eps}$\=/independent $\eps'$\=/local\-/verifiability, then
  there exists a map $\cF' : \lo{B} \to \lo{B\bar{B}}$ such that
  \begin{equation}
    \label{eq:sa.ind.rem}
    \tr_A \circ \cQ_{A\bar{B}} \circ \cP_{AB} \close{\eps+\bar{\eps}} \cF' \circ \tr_A.
  \end{equation}
\end{rem}

\subsection{Reduction}
\label{sec:sa.comp}

We first show in \lemref{lem:sa.comp} that in the special case of
initial states which are not entangled between Alice and Bob's systems
(e.g., the input is classical), local\-/blindess and independent
local\-/verifiability are sufficient to achieve
local\-/blind\-/verifiability. In \thmref{thm:sa.comp} we then
generalize this to any initial state.

\begin{rem} \label{rem:sa.comp} The two proofs in this section only
  hold for protocols that construct a DQC resource for which the
  implemented operation $\cU$ is unitary. Since any quantum operation
  can be written as a unitary on a larger system~\cite{NC00}, this
  effectively allows the theorems to apply to any CPTP operation $\cE$
  as long as the necessary qubits for the unitary implementation are
  appended to the in- and outputs. For example, instead of defining
  universal computation as a unitary, most papers \--- e.g.,
  \cite{BFK09,FK12,MF13,Mor14} \--- describe how to perform any
  (arbitrary) unitary operation $U_x$ on any arbitrary input
  $\rho_{\text{in}}$. By appending the description $x$ of the unitary
  $U_x$ to the input and output, this is equivalent to applying the
  unitary transformation $\cU \coloneqq \sum_x U_x \tensor \proj{x}$
  to the input $\rho_{\text{in}} \tensor \proj{x}$.
\end{rem}

\begin{lem}\label{lem:sa.comp}  If a DQC protocol implementing a
  unitary transformation provides
  $\eps_{\text{bl}}$\=/local\-/blindness and
  $\eps_{\text{ind}}$\=/independent
  $\eps_{\text{ver}}$\=/local\-/verifiability for any pure initial
  state of the form $\psi_{AR_1} \tensor \psi_{R_2B}$, then the
  protocol provides $\delta$\=/local\-/blind\-/verifiability with
  $\delta = 2\sqrt{2\eps}_{\text{ver}} + \eps_{\text{bl}} +
  \eps_{\text{ind}}$ for these initial states in product form.
\end{lem}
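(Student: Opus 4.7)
The plan is to construct explicit candidates $\cF^\ok_B,\cF^\err_B$ directly from the alternative cheating map $\cF':\lo{B}\to\lo{\bar A B}$ provided by Remark~\ref{rem:sa.ind}. Writing $\{P^\ok_{\bar A},P^\err_{\bar A}\}$ for Alice's acceptance projection, I set $\cF^\ok_B(\rho)\coloneqq\tr_{\bar A}[P^\ok_{\bar A}\cF'(\rho)P^\ok_{\bar A}]$ and $\cF^\err_B$ analogously; both are trace-non-increasing, depend only on $\{\cF_i\}$, and sum (after $\tr_{\bar A}$) to a CPTP map. With this choice, the target ideal map $\cU_A\otimes\cF^\ok_B+\cE^{\rr}_A\otimes\cF^\err_B$ and the real output $\rho^\psi_{ABR_1R_2}=\xi^\ok_{ABR_1R_2}+\proj{\err}_A\otimes\sigma^\err_{BR_1R_2}$ (with $\xi^\ok=\tr_{\bar A}[\phi^\ok]$ and $\sigma^\err=\tr_{A\bar A}[\phi^\err]$) both split into an $\ok$-branch on the non-$\err$ subspace of $A$ and an $\err$-branch on $\proj{\err}_A$, so the total trace distance decomposes cleanly as the sum of two branch-wise distances, which I bound separately.

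For the $B$-side, applying the instrument $\Lambda(\rho)\coloneqq\proj{\ok}_C\otimes\tr_{\bar A}[P^\ok\rho P^\ok]+\proj{\err}_C\otimes\tr_{\bar A}[P^\err\rho P^\err]$ to both sides of the bound $\tr_A[\varphi^\psi_{A\bar A BR_1R_2}]\close{\eps_{\text{bl}}+\eps_{\text{ind}}}\cF'(\psi_{R_2B})\otimes\psi_{R_1}$ obtained from Remark~\ref{rem:sa.ind}, and using orthogonality in the classical register $C$, yields $D(\sigma^\ok_{BR_1R_2},\cF^\ok_B(\psi_{R_2B})\otimes\psi_{R_1})+D(\sigma^\err_{BR_1R_2},\cF^\err_B(\psi_{R_2B})\otimes\psi_{R_1})\le\eps_{\text{bl}}+\eps_{\text{ind}}$. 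For the $A$-side of the $\ok$-branch, stand\-/alone verifiability (Definition~\ref{def:sa.v}) combined with the orthogonality of $\cU(\psi_{AR_1})$ and $\proj{\err}_A\otimes\psi_{R_1}$ inside $A$ gives $D(\tr_{BR_2}[\xi^\ok_{ABR_1R_2}],\,p^\psi\,\cU(\psi_{AR_1}))\le\eps_{\text{ver}}$.

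The nontrivial step is to lift this $AR_1$-marginal bound to a bound on the full state $\xi^\ok_{ABR_1R_2}$ against a product. Here the unitarity of $\cU$ enters: since $\psi_{AR_1}$ is pure, $\cU(\psi_{AR_1})$ is a pure state on $AR_1$, and an Uhlmann-type argument applied to a (subnormalised) purification of $\xi^\ok$ yields $\xi^\ok_{ABR_1R_2}\close{2\sqrt{2\eps_{\text{ver}}}}\cU(\psi_{AR_1})\otimes\xi^\ok_{BR_2}$. Replacing $\xi^\ok_{BR_2}=\tr_{R_1}[\sigma^\ok_{BR_1R_2}]$ by $\cF^\ok_B(\psi_{R_2B})$ via monotonicity of the partial trace, and handling the $\err$-branch directly (it is already of product form on $A$, so the second paragraph's bound passes through unchanged), the triangle inequality gives the total branch-wise distance $\le 2\sqrt{2\eps_{\text{ver}}}+\eps_{\text{bl}}+\eps_{\text{ind}}=\delta$, as required. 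The main technical obstacle is the subnormalisation in this last step: after normalising $\xi^\ok$ by $\tr\xi^\ok\approx p^\psi$, the marginal distance becomes $O(\eps_{\text{ver}}/p^\psi)$, and one must check that multiplying the resulting $O(\sqrt{\eps_{\text{ver}}/p^\psi})$ joint bound by $p^\psi\le 1$ absorbs the $1/\sqrt{p^\psi}$ factor so the bound survives uniformly in $p^\psi$.
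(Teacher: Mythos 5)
Your proposal is correct and follows essentially the same route as the paper's proof: the same construction of $\cF^{\ok}_B$ and $\cF^{\err}_B$ from $\cF'$, the same orthogonal splitting into accept/reject branches via the additivity of the trace distance on such mixtures, and the same Uhlmann lift of the $AR_1$-marginal verifiability bound to the joint state using purity of $\cU(\psi_{AR_1})$, followed by the identical triangle-inequality chain giving $2\sqrt{2\eps_{\text{ver}}}+\eps_{\text{bl}}+\eps_{\text{ind}}$. The subnormalisation issue you flag at the end as the main technical obstacle is sidestepped in the paper without any renormalisation: it works directly with the generalized trace distance and purified distance for subnormalized states (Lemmas \ref{lem:purified} and \ref{lem:uhlmann}), bounding $\bar{D}\bigl(\phi^{\ok}_{AR_1},p^{\psi}\cU(\psi_{AR_1})\bigr)\leq\bar{\eps}_1+\bar{\eps}_2=\eps_{\text{ver}}$ by absorbing the trace discrepancy $|\tr\phi^{\ok}-p^{\psi}|$ into the distance itself, so no $1/\sqrt{p^{\psi}}$ factor ever appears.
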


\begin{proof}
In this proof, we use several times the following simple equality. For
two states $\rho = \proj{0} \tensor \rho_0 + \proj{1} \tensor \rho_1$
and $\sigma = \proj{0} \tensor \sigma_0 + \proj{1} \tensor \sigma_1$,
we have \begin{equation} \label{eq:sa.lem.tool}
D(\rho,\sigma) = D(\rho_0,\sigma_0) + D(\rho_1,\sigma_1).
\end{equation}

In \remref{rem:sa.ind} we combined the conditions of local\-/blindness
and the new condition of independent local\-/verifiability into one
new formula, \eqnref{eq:sa.ind.rem}. It is thus sufficient to prove
that if \eqnref{eq:sa.ind.rem} and \eqnref{eq:sa.v.td}, are satisfied
for any pure product initial state $\psi_{AR_1} \tensor \psi_{R_2B}$,
then we have local\-/blind\-/verifiability,
i.e., \begin{multline} \label{eq:sa.lem.bv} \rho^\psi_{AR_1R_2B}
  \close{\delta} \left(\cU \tensor \id_{R_1R_2} \tensor \cF^{\ok}
  \right) (\psi_{AR_1} \tensor \psi_{R_2B}) \\ + \proj{\err} \tensor
  \psi_{R_1} \tensor \left(\id_{R_2} \tensor \cF^{\err}
  \right)(\psi_{R_2B}),\end{multline} for some $\cF^{\ok}$ and
$\cF^{\err}$.

Since $\sket{\err}$ is orthogonal to any valid output, both the RHS of
\eqnref{eq:sa.lem.bv} and LHS (given in \eqnref{eq:sa.ind.final}) are
a linear combination of orthogonal states on the same subspaces. And
thus by \eqnref{eq:sa.lem.tool}, to show that \eqnref{eq:sa.lem.bv}
holds for some $\delta$, it is sufficient to find maps $\cF^{\ok}$ and
$\cF^{\err}$, and $\delta_1$ and $\delta_2$ with $\delta_1 + \delta_2
= \delta$, such that
\begin{align}
  \phi^{\ok}_{AR_1R_2B} & \close{\delta_1} \left(\cU \tensor \id_{R_1R_2} \tensor
    \cF^{\ok} \right) (\psi_{AR_1} \tensor \psi_{R_2B}), \label{eq:sa.lem.goal.1} \\
  \phi^{\err}_{R_2B} & \close{\delta_2} \left(\id_{R_2} \tensor \cF^{\err}
  \right)(\psi_{R_2B}). \label{eq:sa.lem.goal.2}
\end{align}

Let $\cF' : \lo{B} \to \lo{B\bar{B}}$ be the map guaranteed to exist
by the combination of local\-/blindness and independent
local\-/verifiability (\eqnref{eq:sa.ind.rem}), and let
$\cP^{\ok}_{\bar{B}}$ and $\cP^{\err}_{\bar{B}}$ be the maps
corresponding to projections on the states $\ket{\ok}$ and
$\ket{\err}$ of the $\bar{B}$ system. We define
\begin{align*}
  \cF^{\ok}_B & \coloneqq \tr_{\bar{B}} \circ \cP^{\ok}_{\bar{B}} \circ \cF',\\
  \cF^{\err}_B & \coloneqq \tr_{\bar{B}} \circ \cP^{\err}_{\bar{B}} \circ \cF'.
\end{align*}
Note that w.l.o.g., we can take $\cF'$ to generate a linear
combination of two orthogonal states, one in the $\cP^{\ok}_{\bar{B}}$
subspace and one in the $\cP^{\err}_{\bar{B}}$. Thus, applying
\eqnref{eq:sa.ind} to the initial state $\psi_{AR_1} \tensor
\psi_{R_2B}$ and using \eqnref{eq:sa.lem.tool}, we find that there
exist $\eps_1$ and $\eps_2$ with $\eps_1 + \eps_2 = \eps_{\text{ind}}
+ \eps_{\text{bl}}$ such that
\begin{align}
  \phi^{\ok}_{R_2B} & \close{\eps_1} \left(\id_{R_2} \tensor \cF^{\ok}_B\right)(\psi_{R_2B}), \label{eq:sa.lem.blind.1} \\
  \phi^{\err}_{R_2B} & \close{\eps_2} \left(\id_{R_2} \tensor
    \cF^{\err}_B\right) (\psi_{R_2B}). \label{eq:sa.lem.blind.2}
\end{align}
Note that \eqnref{eq:sa.lem.blind.2} is exactly one of the conditions
we need to find, namely \eqnref{eq:sa.lem.goal.2}. We now still need
to bound \eqnref{eq:sa.lem.goal.1}.

We take the definition of local\-/verifiability, \eqnref{eq:sa.v.td}; again,
both the RHS and LHS (defined in \eqnref{eq:sa.ind.final}) are linear
combinations of orthogonal states on the same subspaces, hence there
exist $\bar{\eps}_1$ and $\bar{\eps}_2$ with $\bar{\eps}_1 +
\bar{\eps}_2 = \eps_{\text{ver}}$, such that
\begin{align}
  \phi^{\ok}_{AR_1} & \close{\bar{\eps}_1} p^{\psi} \left(\cU \tensor
    \id_{R_1} \right) (\psi_{AR_1}), \label{eq:sa.lem.verif.1} \\
  \trace{\phi^{\err}_{R_2B}} & \close{\bar{\eps}_2}
  1-p^{\psi}. \label{eq:sa.lem.verif.2}
\end{align}

From \eqnref{eq:sa.lem.verif.2} we have that
$\trace{\phi^{\ok}_{AR_1}} = 1 - \trace{\phi^{\err}_{R_2B}}
\close{\bar{\eps}_2} p^{\psi}$. The generalized trace distance
(see \appendixref{app:dist}) between the two states from
\eqnref{eq:sa.lem.verif.1} is thus bounded by
$\bar{D}\left(\phi^{\ok}_{AR_1}, p^{\psi}\cU(\psi_{AR_1})\right) \leq
\bar{\eps}_1 + \bar{\eps}_2 = \eps_{\text{ver}}$. From
\lemref{lem:purified}, we can upper bound the purified distance with
the generalized trace distance, and get $P\left(\phi^{\ok}_{AR_1},
  p^{\psi}\cU(\psi_{AR_1})\right) \leq \sqrt{2\eps_{\text{ver}}}$. We
can now apply Uhlmann's theorem to the purified distance (see
\lemref{lem:uhlmann}) and find that since $\cU(\psi_{AR_1})$ is a pure
state, there exists a $\sigma_{R_2B}$ such that
$P\left(\phi^{\ok}_{AR_1R_2B}, p^{\psi}\cU(\psi_{AR_1}) \tensor
  \sigma_{R_2B}\right) = P\left(\phi^{\ok}_{AR_1},
  p^{\psi}\cU(\psi_{AR_1})\right)$. Hence by \lemref{lem:purified},
\eqnref{eq:sa.lem.blind.1}, and the triangle inequality,
\begin{align*}
  & D\left(\phi^{\ok}_{AR_1R_2B},\cU(\psi_{AR_1}) \tensor
    \cF^{\ok}_B(\psi_{R_2B}) \right) \\
  & \qquad \qquad \leq
  D\left(\phi^{\ok}_{AR_1R_2B},p^{\psi}\cU(\psi_{AR_1}) \tensor
    \sigma_{R_2B}\right) \\ & \qquad \qquad \qquad \qquad +
  D\left(p^{\psi}\cU(\psi_{AR_1}) \tensor \sigma_{R_2B},
    \cU(\psi_{AR_1}) \tensor
    \cF^{\ok}_B(\psi_{R_2B})\right) \\
  & \qquad\qquad \leq \sqrt{2\eps_{\text{ver}}} +
  D\left(p^{\psi}\sigma_{R_2B},\phi^{\ok}_{R_2B}\right)
  + D\left(\phi^{\ok}_{R_2B},\cF^{\ok}_B(\psi_{R_2B})  \right) \\
  & \qquad \qquad \leq 2\sqrt{2\eps_{\text{ver}}} + \eps_1.
\end{align*}
Combining this with our bound for \eqnref{eq:sa.lem.goal.2}, we
prove the lemma.
\end{proof}

We now generalize this lemma to initial states that may be entangled
between Alice and Bob. Since protocols can require part of Alice's
input to be classical, we consider initial states of the form
$\psi_{A_CA_QBR}$, where the register $A_C$ is classical, $A_Q$ is
quantum, and $A_QBR$ may be arbitrarily entangled. We reduce this case
to the separable state case treated in \lemref{lem:sa.comp} with an
increase of the error by a factor of $(\dim \hilbert_{A_Q})^2$.

\begin{thm}\label{thm:sa.comp}  If a DQC protocol implementing a
  unitary transformation provides
  $\eps_{\text{bl}}$\=/local\-/blindness and
  $\eps_{\text{ind}}$\=/independent
  $\eps_{\text{ver}}$\=/local\-/verifiability, then it provides
  $\delta$\=/local\-/blind\-/verifiability with $\delta =
  N^2(2\sqrt{2\eps_{\text{ver}}} + \eps_{\text{bl}} +
  \eps_{\text{ind}})$, for $N = \dim \hilbert_{A_Q}$, the dimension of
  the subsystem of Alice's input which is quantum.
\end{thm}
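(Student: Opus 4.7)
The plan is to reduce the general (entangled) case to the product case already handled by Lemma~\ref{lem:sa.comp}. By purification and linearity, I can assume without loss of generality that the initial state is a pure state $|\psi\rangle_{ABR}$. Let $\Phi \coloneqq \cP_{AB} - \cM_{AB}$, where $\cM_{AB} = \cU_A \tensor \cF_B^{\ok} + \cE_A^{\rr} \tensor \cF_B^{\err}$ is the ideal blind-verifiable map. The goal is to bound $\trnorm{\Phi(|\psi\rangle\langle\psi|_{ABR})}$ by $N^2 \delta_0$, where $\delta_0 = 2\sqrt{2\eps_{\text{ver}}} + \eps_{\text{bl}} + \eps_{\text{ind}}$ is the bound delivered by the lemma.

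First, I would Schmidt-decompose $|\psi\rangle$ across the $A$ versus $BR$ cut, obtaining $|\psi\rangle_{ABR} = \sum_{i=1}^{N} \sqrt{\lambda_i}\, |i\rangle_A |\phi_i\rangle_{BR}$ and the operator expansion $|\psi\rangle\langle\psi|_{ABR} = \sum_{i,j=1}^{N} \sqrt{\lambda_i \lambda_j}\, |i\rangle\langle j|_A \tensor |\phi_i\rangle\langle\phi_j|_{BR}$ with at most $N^2$ terms. Each diagonal term ($i=j$) is a pure state of product form across the $A$--$BR$ cut, so it falls directly under the hypothesis of Lemma~\ref{lem:sa.comp} (taking the Alice-side reference $R_1$ to be empty and $R_2 = R$), giving $\trnorm{\Phi\bigl(|i\rangle\langle i|_A \tensor |\phi_i\rangle\langle\phi_i|_{BR}\bigr)} \leq \delta_0$.

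For the off-diagonal terms ($i \neq j$), I would apply the polarization identity separately to $|i\rangle\langle j|_A$ and to $|\phi_i\rangle\langle\phi_j|_{BR}$, writing each as a signed combination of four rank-one projectors on explicit unit vectors (such as $(|i\rangle \pm |j\rangle)/\sqrt{2}$ and $(|i\rangle \pm i|j\rangle)/\sqrt{2}$) with complex weights of modulus $\tfrac{1}{2}$. Tensoring the two decompositions expresses each off-diagonal operator as a signed sum of at most sixteen pure product states of the form $\psi_A \tensor \psi_{BR}$, all of which fall under the hypothesis of Lemma~\ref{lem:sa.comp}. Applying the lemma term-by-term and using the triangle inequality then bounds the contribution of each off-diagonal summand by a fixed multiple of $\delta_0$.

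Finally, I would combine the at most $N^2$ contributions via the triangle inequality, using the crude estimate $\sqrt{\lambda_i \lambda_j} \leq 1$ to absorb the Schmidt weights; the combinatorial constants from polarization are then absorbed into the $N^2$ factor. The main obstacle is the careful bookkeeping of the polarization expansion and verifying that every piece produced is of the product form needed to invoke Lemma~\ref{lem:sa.comp}; once that is done, the triangle inequality over the $N^2$ Schmidt terms yields the stated bound $\delta = N^2 \delta_0$.
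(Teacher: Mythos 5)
Your reduction rests on the right structural observation: the maps $\cF^{\ok}_B$ and $\cF^{\err}_B$ produced in the proof of \lemref{lem:sa.comp} are built from $\cF'$, which depends only on the adversary's behavior $\{\cF_i\}_i$ and not on the input, so $\Delta \coloneqq \cP_{AB} - \left(\cU_A\tensor\cF_B^{\ok}+\cE_A^{\rr}\tensor\cF_B^{\err}\right)$ is a single fixed linear map that can be controlled term by term on any decomposition of the input into pure product states; and every piece your Schmidt\-/plus\-/polarization expansion produces is indeed of the form $\psi_A\tensor\psi_{R_2B}$ (with $R_1$ trivial), which the lemma covers. This is a genuinely different route from the paper's. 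There, one forms the product state $\psi'_{QABRS}=\proj{\Phi^+}^{\tensor n}_{QA}\tensor\psi'_{BRS}$ (the original state with $A$ relabelled $S$, tensored with $n=\log N$ EPR pairs), uses the exact identity $\cE_{AB}(\psi_{ABR}) = 2^{2n}\trace[QS]{\proj{\Phi^+}^{\tensor n}_{QS}\,\cE_{AB}(\psi'_{QABRS})\,\proj{\Phi^+}^{\tensor n}_{QS}}$ valid for any map acting only on $AB$ (post\-/selected teleportation of $S$ into $A$), and notes that the post\-/selection is trace non\-/increasing; the factor $2^{2n}=N^2$ then comes out in one step with no extra constants.

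The concrete gap in your version is the final sentence: the combinatorial constants from polarization are \emph{not} absorbed into the $N^2$ factor, they multiply it. Each off\-/diagonal term $\sqrt{\lambda_i\lambda_j}\,\op{i}{j}_A\tensor\op{\phi_i}{\phi_j}_{BR}$ splits into $16$ product states with coefficients of modulus $\tfrac14$, each contributing at most $2\delta_0$ in trace norm (where $\delta_0=2\sqrt{2\eps_{\text{ver}}}+\eps_{\text{bl}}+\eps_{\text{ind}}$), so each off\-/diagonal term costs about $4\delta_0$ in trace distance; with your crude bound $\sqrt{\lambda_i\lambda_j}\leq1$ over $N^2-N$ such terms you obtain roughly $4N^2\delta_0$, not the stated $N^2\delta_0$. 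As written the argument therefore proves the theorem only up to a constant factor in $\delta$. Two remedies: to recover the exact constant you essentially need the teleportation identity above; alternatively, if you keep your decomposition but replace $\sqrt{\lambda_i\lambda_j}\leq1$ by $\sum_{i,j}\sqrt{\lambda_i\lambda_j}=\bigl(\sum_i\sqrt{\lambda_i}\bigr)^2\leq N$, your method yields a bound of order $N\delta_0$, which is a \emph{stronger} statement than the theorem for large $N$ (though still with a worse constant for small $N$). Either way, the claim as stated is not what your bookkeeping delivers, and that step needs to be fixed rather than waved through.
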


\begin{proof}
  For any initial state $\psi_{ABR} = \proj{x}_{A_C} \tensor
  \psi_{A_QBR}$ and $n \coloneqq \log \dim \hilbert_{A_Q}$, we define
  the state $\psi'_{ATBRS} \coloneqq \proj{x}_{A_C} \tensor
  \proj{\Phi^+}^{\tensor n}_{A_QT} \tensor \psi'_{BRS}$, where
  $\sket{\Phi^+} = (\sket{00}+\sket{11})/\sqrt{2}$ is an EPR pair and
  $\psi'_{BRS} = \psi_{BRA_Q}$. For any map $\cE_{AB} : \lo{AB} \to
  \lo{AB}$ we have \[ \cE_{AB}(\psi_{ABR}) = 2^{2n}
  \trace[TS]{\proj{\Phi^+}^{\tensor n}_{TS}
    \cE_{AB}(\psi'_{ATBRS}) \proj{\Phi^+}^{\tensor n}_{TS}}.\]
  The projection on $\proj{\Phi^+}^{\tensor n}_{TS}$ can be seen as a
  teleportation of the system $S$ into $A_Q$ with a post\-/selection
  on the branch where no bit or phase corrections are necessary.

  Let $\cQ_{AB} : \lo{AB} \to \lo{AB}$ be the map corresponding to a
  run of the protocol with Alice behaving honestly and Bob using his
  cheating strategy. Furthermore, let $\cF' : \lo{B} \to
  \lo{B\bar{B}}$ be the map guaranteed to exist by the combination of
  local\-/blindness and independent local\-/verifiability
  (\eqnref{eq:sa.ind.rem}), and let $\cP^{\ok}_{\bar{B}}$ and
  $\cP^{\err}_{\bar{B}}$ be the maps corresponding to projections on
  the states $\ket{\ok}$ and $\ket{\err}$ of the $\bar{B}$ system. We
  define
  \begin{align*}
    \cF^{\ok}_B & \coloneqq \tr_{\bar{B}} \circ \cP^{\ok}_{\bar{B}} \circ \cF',\\
    \cF^{\err}_B & \coloneqq \tr_{\bar{B}} \circ \cP^{\err}_{\bar{B}} \circ \cF',\\
    \cR_{AB} & \coloneqq \cU \tensor \cF^{\ok}_B + \cE^{\err}_A \tensor \cF^{\err}_B, 
  \end{align*}
  where $\cU$ is the map implemented by the DQC protocol and
  $\cE^{\err}_A$ deletes the contents of $A$ and outputs the error flag
  $\sket{\err}$.

  We then have, \begin{align*}
      & D\left(\cQ_{AB}(\psi_{ABR}),\cR_{AB}(\psi_{ABR})\right) \\
      & \qquad \qquad  =  2^{2n} D\left(\trace[TS]{\proj{\Phi^+}^{\tensor
      n}_{TS} \cQ_{AB}(\psi'_{ATBRS}) \proj{\Phi^+}^{\tensor
      n}_{TS}}, \right. \\
      & \qquad \qquad \qquad  \qquad \qquad  \hfill \left. \trace[TS]{\proj{\Phi^+}^{\tensor
      n}_{TS} \cR_{AB}(\psi'_{ATBRS}) \proj{\Phi^+}^{\tensor
      n}_{TS}} \right) \\
      & \qquad \qquad  \leq 2^{2n} D\left(\cQ_{AB}(\psi'_{ATBRS}),\cR_{AB}(\psi'_{ATBRS})\right).
    \end{align*}
  
    Note that the state $\psi'_{ATBRS}$ is in product form w.r.t.\ the
    systems $AT$ and $BRS$. This allows us to use
    \lemref{lem:sa.comp}, from which we get
    $$D\left(\cQ_{AB}(\psi'_{ATBRS}),\cR_{AB}(\psi'_{ATBRS})\right)
    \leq   2\sqrt{2\eps}_{\text{ver}} + \eps_{\text{bl}} +
    \eps_{\text{ind}}.$$

    By linearity this applies to any initial state $\psi_{A_CA_QBR}$
    classical on $A_C$.
\end{proof}

\begin{rem} \label{rem:failure} If the input is entirely classical
  (e.g., the client wants to factor a number), the failure $\eps$ is
  polynomial in the error parameters of the different local criteria,
  and the reduction is tight. If the input is quantum, the failure is
  multiplied by the dimension squared of the quantum (sub)system, and
  the errors of the local criteria need to be exponentially small in
  the size of the quantum input to compensate. \end{rem}

\begin{cor}[\thmref{thm:main} restated]
\label{cor:sa.comp}
If a DQC protocol implementing a unitary transformation provides
$\eps_{\text{bl}}$\=/local\-/blindness and
$\eps_{\text{ind}}$\=/independent
$\eps_{\text{ver}}$\=/local\-/verifiability for all inputs
$\psi_{A_CA_Q}$, where $A_C$ is classical and $A_Q$ is quantum, then
it is $\delta N^2$\=/blind\-/verifiable, where $\delta =
4\sqrt{2\eps_{\text{ver}}} + 2\eps_{\text{bl}} + 2\eps_{\text{ind}}$
and $N = \dim \hilbert_{A_Q}$. If additionally it provides
$\eps_{\text{cor}}$\=/local\-/correctness,\footnote{See
  \defref{def:cor.sa} on \pref{def:cor.sa}.} it constructs
$\aS^{\blind}_{\verif}$ from a communication channel within $\eps =
\max\{\delta N^2,\eps_{\text{cor}}\}$.
\end{cor}

\begin{proof}
  Immediate by combining \thmref{thm:verif.sec}, \thmref{thm:sa.comp}
  and \lemref{lem:cor}.
\end{proof}

\section{Blindness without verifiability}
\label{sec:blind}

We prove in this section that two different DQC protocols proposed in
the literature construct the ideal blind quantum computation resource
$\aS^{\blind}$ given in \defref{def:dqc.b}. To show this, we need to
prove that both conditions from \eqnref{eq:dqc.b} are satisfied for
$\eps = 0$. In \appendixref{app:cor} we show that the intuitive notion
of local\-/correctness used in the literature is in fact composable,
and thus the first part of \eqnref{eq:dqc.b} is immediate from
existing literature. In the following sections, we prove that these
protocols also provide perfect blindness. Note that they do not
provide verifiability, we therefore cannot use the generic results
from \secref{sec:standalone} to prove that they are blind.

We start in \secref{sec:blind.bfk} with the DQC protocol of Broadbent,
Fitzsimons and Kashefi~\cite{BFK09}, which we describe in detail in
\secref{sec:blind.bfk.protocol}. In this protocol, Alice hides the
computation by encrypting it with a one-time pad. The core idea used
to construct the simulator can also be used to prove the security of
the one-time pad. In \secref{sec:blind.bfk.otp} we thus first sketch
the security proof of the one-time pad, and in
\secref{sec:blind.bfk.security} we prove that the DQC protocol of
Broadbent, Fitzsimons and Kashefi provides perfect blindness.

Morimae and Fujii~\cite{MF13} proposed a DQC protocol with one-way
communication from Bob to Alice, in which Alice simply measures each
qubit she receives, one at a time. We show in
\secref{sec:blind.oneway} that the general class of protocols with
one-way communication is perfectly blind.

\subsection{DQC protocol of Broadbent, Fitzsimons and Kashefi}
\label{sec:blind.bfk}

\subsubsection{The protocol}
\label{sec:blind.bfk.protocol}

This protocol~\cite{BFK09} was originally called Universal Blind
Quantum Computation (UBQC), and in the following we use this name. For
an overview of the UBQC protocol, we assume familiarity with
measurement\-/based quantum computing, for more details see
\cite{RB01,DKP07}. Suppose Alice has in mind a unitary operator $U$
that is implemented with a measurement pattern on a brickwork state
$\mathcal{G}_{n\times (m+1)}$ (\figref{fig:cluster}) with measurements
given as multiples of~$\pi/4$ in the $(X,Y)$ plane with overall
computation size $S = n \times (m+1).$ Note that measurement based
quantum computation, where the measurements are restricted in the
sense above is approximately universal, so there are no restrictions
imposed on $U$ \cite{BFK09}.

 \begin{figure}[htb] \centering
\includegraphics[scale=0.8]{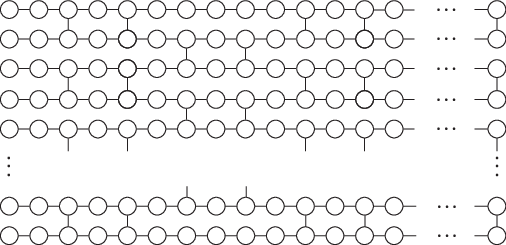}

\caption[Brickwork state]{\label{fig:cluster}The \emph{brickwork
    state}, $\mathcal{G}_{n \times m}$, a universal resource state for
  measurement-based quantum computing requiring only single qubit
  measurement in the $(X,Y)$ plane \cite{BFK09}.  Qubits
  $\sket{\psi_{x,y}}$ $(x=1, \ldots, n, y=1,\ldots,m)$ are arranged
  according to layer $x$ and row $y$, corresponding to the vertices in
  the above graph, and are originally in the~$\sket{+} =
  \frac{1}{\sqrt{2}}\left(\sket{0} +\sket{1}\right)$ state.
  Controlled-$Z$ gates are then performed between qubits which are
  joined by an edge.  The rule determining which qubits are joined by
  an edge is as follows: 1) Neighboring qubits of the same row are
  joined; 2) For each column $j = 3 \mod\, 8$ and each odd row $i$,
  the qubits at positions $(i, j)$ and $(i + 1, j)$ and also on
  positions $(i, j + 2)$ and $(i + 1, j + 2)$ are joined; 3) For each
  column $j = 7 \mod\, 8$ and each even row $i$, the qubits at
  positions $(i, j)$ and $(i + 1, j)$ and also on positions $(i, j +
  2)$ and $(i + 1, j + 2)$ are joined.  The quantum input is usually
  placed in the leftmost column of the brickwork state, whereas the
  output is generated in the rightmost column by sequential single
  qubit measurements. The qubits are usually measured from top to
  bottom per column, where the order of columns is from left to
  right.}
 \end{figure}

 This pattern could have been designed either directly in MBQC or
 generated from a circuit construction.  Each qubit in $
 \mathcal{G}_{n \times (m+1)}$ is indexed by a \emph{column} $y \in
 \{0, \ldots ,m\}$ and \emph{row}~$x \in \{1, \ldots ,n\} = [n]$. Thus
 each qubit is assigned a measurement angle~$\phi_{x,y}$, and two sets
 $D_{x,y},D'_{x,y} \subseteq [n] \times \{0,\ldots, y-1\}$ which we call
 $X$\=/dependencies and $Z$\=/dependencies, respectively.

The dependency sets comprise subsets of the set of the two-coordinate
indices. They reflect the fact that in measurement-based quantum
computation, to ensure a correct and deterministic computation, the
measurement angles which define the computation may have to be
modified for each qubit depending on some of the prior measurement
outcomes.  In particular, here we assume that the dependency
sets~$D_{x,y}$ and~$D_{x,y}'$ are obtained via the flow
construction~\cite{DK06}.

During the execution of the computation, the adapted measurement angle
$\phi'_{x,y}$ is computed from $\phi_{x,y}$ and the previous
measurement outcomes in the following way: let $s^X_{x,y} =
\oplus_{i\in D_{x,y}}{s_i}$ be the parity of all measurement outcomes
for qubits in $D_{x,y}$ and similarly, $s^Z_{x,y} = \oplus_{i\in
  D'_{x,y}}{s_i}$ be the parity of all measurement outcomes for qubits
in~$D'_{x,y}$ (the index $i$ here is a two coordinate index, an
element of $[n] \times \{0,\dotsc,m\}$). 
Then, \begin{equation} \label{PhiEq} \phi'_{x,y} =
  (-1)^{s^X_{x,y}} \phi_{x,y} + s^Z_{x,y} \pi.\end{equation}

This will be used in a protocol, where the first column of the
brickwork state is a one-time pad encryption of the input.\footnote{In
  UBQC with a quantum input, the input is initially encoded with a
  variant of the quantum one-time pad by Alice, to preserve her
  privacy.  The operators implementing the one-time pad that Alice
  applies to the input may include an arbitrary rotation within the
  $XY$ plane of the Bloch sphere (a $Z_{\theta}$ rotation), and a
  Pauli-X operator.  Because of the commutation relation
  $(X\otimes\id) \ctrlZ = \ctrlZ (X\otimes Z)$ between the Pauli-X
  operator and the controlled $Z$ entangling operation, this component
  of the one-time pad must be accounted for in the measurement angles
  for the neighbors of the input layer, as in
  \eqnref{PhiEqOTP}.} The measurement angles of the first two columns
then have to be updated to compensate for (bit) flips $i_x$ performed
by the encryption, namely
\begin{equation} \label{PhiEqOTP}
\phi'_{x,0}=(-1)^{i_{x}}\phi_{x,0} \quad \text{ and } \quad \phi'_{x,1}=\phi_{x,1} + {i_{x}} \pi.
\end{equation}

\begin{algorithm}[htbp]
\caption[UBQC protocol]{Universal Blind Quantum Computation}

\begin{flushleft}
\textbf{Alice's input:}
\vspace{-7pt}
\end{flushleft}
\begin{itemize}
 \setlength{\itemsep}{-1pt}
 \item An $n-$qubit unitary map $U$, represented as a sequence of measurement angles  $\{\phi_{x,y}\}$ of a one-way quantum computation over a brickwork state of the size
 $n \times (m+1)$, along with the $X$ and $Z$ dependency sets 
 $D_{x,y}, D_{x,y}^{'}$, respectively.
\item An $n$-qubit input state $\rho_{in}$
\end{itemize}
\begin{flushleft}
\textbf{Alice's output (for an honest Bob):}
\vspace{-7pt}
\end{flushleft}
\begin{itemize}
 \setlength{\itemsep}{-1pt}
 \item The $n-$qubit quantum state $\rho_{out} = U \rho_{in} U^\dagger$
\end{itemize}
\begin{flushleft}
\textbf{The protocol}
\vspace{-7pt}
\end{flushleft}
\begin{enumerate}
 \setlength{\itemsep}{-1pt}
\item  \label{step:client-prep} \textbf{State preparation} 
\vspace{-7pt}
\begin{enumerate}
 \setlength{\itemsep}{-1pt}

\item \label{step:otp} For each $x \in [n]$, Alice applies
  $X^{i_{x}} Z_{\theta_{x,0}}$ to the $\ith{x}$ qubit of the input
  $\rho_{in}$, where the binary values $i_{x}$ and the angles
  $\theta_{x,0} \in \{k \pi / 4\}_{k=0}^{7}$ are chosen uniformly at
  random for each $x$. This is equivalent to encrypting it with a
  quantum one-time pad. The result is sent to Bob.
\item If $i_x = 1$, Alice updates the measurement angles $\phi_{x,0}$
  and $\phi_{x,1}$ to compensate for the introduced bit flip (see
  \eqnref{PhiEqOTP}).
\item \label{step:theta} For each column $y \in [m-1]$, and each row
  $x \in [n]$, Alice prepares the state $ \sket{+_{\theta_{x,y}}}:=
  \frac{1}{\sqrt{2}}(\sket{0} + e^{i\theta_{x,y}}\sket{1})$, where the
  defining angle $\theta_{x,y}\in \{k \pi / 4\}_{k=0}^{7}$ is chosen
  uniformly at random, and sends the qubits to Bob.
\item Bob creates $n$ qubits in the $\sket{+}$ state, which are used
  as the final output layer, and entangles the qubits received from
  Alice and this final layer by applying $\ctrlZ$ operators
  between the pairs of qubits specified by the pattern of the brickwork
  state $\mathcal{G}_{n \times (m+1)}$.
\end{enumerate}
%
\item  \label{step:computation-interaction}
 \textbf{Interaction and measurement}
 \vspace{-3pt}

 For $y = 0, \ldots , m-1$, repeat\\
 \hspace*{\parindent}\hspace*{\parindent}  For $x = 1, \ldots, n$, repeat
\begin{enumerate}
 \setlength{\itemsep}{-1pt}
\item Alice computes the updated measurement angle $\phi'_{x,y}$ (see
  \eqnref{PhiEq}), to take previous measurement outcomes received from
  Bob into account.
\item  \label{step:delta}
Alice chooses a binary digit $r_{x,y} \in \{0,1\}$ uniformly at random, and computes $\delta_{x,y} = \phi'_{x,y}  + \theta_{x,y} + \pi r_{x,y}$.
\item Alice  transmits $\delta_{x,y}$ to Bob, who performs a measurement in the basis $\{ \sket{+_{\delta_{x,y}}},
\sket{-_{\delta_{x,y}}} \}$.
\item Bob transmits the result $s_{x,y} \in \{0,1\}$ to Alice.
\item If $r_{x,y} = 1$, Alice flips $s_{x,y}$;
otherwise she does nothing.
\end{enumerate}
 \setlength{\itemsep}{-1pt}
 \item  \label{OutputCorrection} \textbf{Output Correction}
 \vspace{-7pt}
\begin{enumerate}
 \setlength{\itemsep}{-1pt}
\item Bob sends to Alice all qubits in the last (output) layer.
\item Alice performs the final Pauli corrections $ \{Z^{s_{x,m}^{Z} }X^{s_{x,m}^{X}} \}_{x=1}^{n}$ on the received output qubits.
\end{enumerate}

\end{enumerate}

\label{prot:UBQC}
\end{algorithm}

%


Protocol \ref{prot:UBQC} implements a blind quantum computation for an
input $\psi_A = \rho_{in} \tensor \proj{U}$.\footnote{The particular
  variant of the UBQC protocol we present assumes a quantum input and
  a quantum output, however the protocol is easily modified to take
  classical inputs and/or produces classical outputs, see
  \cite{BFK09}. In the classical input case, the quantum input is
  simply not sent, and the preparation of the classical input is
  assumed to be encoded in the computation itself. For the classical
  output, the server would simply measure out the final column of
  qubits as well, which produces a one-time padded version of the
  computation result. The quantum input-output setting is more general
  than other variants, and the security of this variant implies the
  security of the classical input/output versions. Also, the quantum
  one-time pad of the input states used in this protocol could be
  replaced with a standard quantum one-time pad which uses only the
  local $X$ and $Z$ gates, instead of the $X$ and the parametrized
  $Z_\theta$ gate, as presented here. In this case Bob would teleport
  the input state onto the brickwork state built out of the
  pre-rotated $\sket{+_\theta}$ qubits, and the protocol would
  continue as we have presented (but taking into account the
  teleportation outcomes reported by Bob).
}  It
was shown in \cite{BFK09} that this protocol is correct, i.e., if both
Alice and Bob follow the steps of the protocol then the final output
state is $\rho_{out} = U \rho_{in} U^\dagger$.

\subsubsection{One-time pad proof sketch}
\label{sec:blind.bfk.otp}

The basic idea behind the construction of the simulator required for
the proof of composable security of the UBQC protocol can be used in
the case of a simpler protocol \--- the Quantum One-Time Pad
(QOTP). The QOTP ensures confidentiality, but not authenticity, of the
exchange of quantum messages over an untrusted quantum channel.

The ideal confidentiality resource $\aS$, which we wish to construct,
has three interfaces, $A$ (Alice, the sender), $B$ (Bob, the receiver)
and $E$ (Eve, the eavesdropper). Alice inputs a message
$\rho^{\inn}_A$, Eve only learns the message size \--- though for
simplicity, we assume that the message size is fixed, and do not model
it explicitly in the following \--- but can arbitrarily modify or
replace the message. Similarly to the blind DQC ideal resource
(\defref{def:dqc.b}), the eavesdropper's capacity to arbitrarily
manipulate the message is captured by allowing some arbitrary state
$\rho^{\inn}_E$ and a description of a map $\cE : \lo{AE} \to \lo{B}$ to be
input at the $E$\=/interface of the ideal resource, which then outputs
$\cE(\rho_{AE})$ at the $B$\=/interface. This is depicted in
\figref{fig:qotp.ideal} with Eve's functionalities grayed to signify
that they are only accessible to a cheating player.

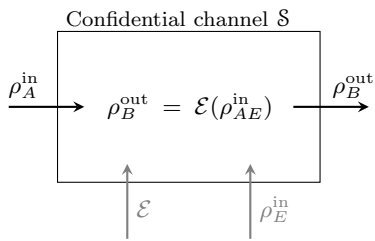
\begin{figure}[htb]
\begin{centering}

\begin{tikzpicture}\small

\def\t{2.368} 
\def\u{-1.75}
\def\v{.809}

\node[draw,text width=3.236cm,minimum height=2cm,text centered] (channel) at (0,0)
{$\rho^{\out}_B = \cE(\rho^{\inn}_{AE})$};
\node[yshift=-1.5,above right] at (channel.north west) {{\footnotesize
    Confidential channel} $\aS$};
\node[minimum width=1.118cm,minimum height=1.5cm] (innerl) at (-\v,0) {};
\node[minimum width=1.118cm,minimum height=1.5cm] (innerr) at (\v,0) {};
\node (alice) at (-\t,0) {};
\node (bob) at (\t,0) {};

\draw[sArrow] (alice.center) to node[pos=.2,auto] {$\rho^{\inn}_A$} (innerl);
\draw[sArrow] (innerr) to node[pos=.8,auto] {$\rho^{\out}_B$} (bob.center);
\node (eleft) at (-\v,\u) {};
\node (eright) at (\v,\u) {};
\draw[gArrow] (eleft.center) to node[auto,pos=.4,swap] {$\cE$} (innerl);
\draw[gArrow] (eright.center) to node[auto,pos=.4,swap] {$\rho^{\inn}_E$}  (innerr);

\end{tikzpicture}

\end{centering}
\caption[Quantum confidential channel]{\label{fig:qotp.ideal}A
  confidential channel. Alice and Bob have access to the left and
  right interface, respectively, and Eve accesses the lower
  interface. This channel guarantees that Eve does not learn Alice's
  input $\rho^{\inn}_A$, but allows her to modify what Bob
  receives. If Eve does not activate her cheating interface, the state
  $\rho^{\inn}_A$ is output at Bob's interface.}
\end{figure}

The resources $\aR$ available to the QOTP protocol $(\pi_A,\pi_B)$ are
a shared secret key and an insecure quantum channel, which simply
outputs at the $E$\=/interface anything which Alice inputs, and
forwards to the $B$\=/interface anything which Eve inputs. $\pi_A$
applies bit and phase flips (conditioned on the bits of the secret
key) to Alice's input and sends the result down the insecure channel,
and $\pi_B$ decrypts by applying the same flips to whatever it
receives. This is illustrated in \figref{fig:qotp.real}.

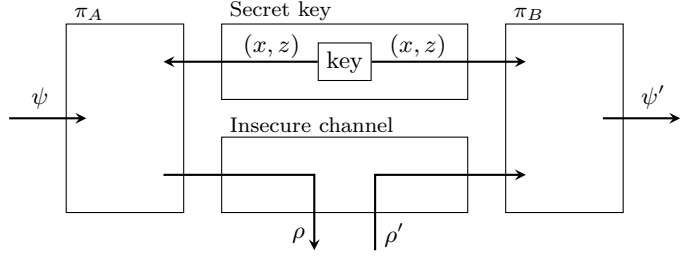
\begin{figure}[htb]
\begin{centering}

\begin{tikzpicture}\small

\def\t{4.413} 
\def\u{2.89} 
\def\v{.75}

\node[pnode] (a1) at (-\u,\v) {};
\node[pnode] (a2) at (-\u,0) {};
\node[pnode] (a3) at (-\u,-\v) {};
\node[protocol] (a) at (-\u,0) {};
\node[yshift=-2,above right] at (a.north west) {\footnotesize
  $\pi_A$};
\node (alice) at (-\t,0) {};

\node[pnode] (b1) at (\u,\v) {};
\node[pnode] (b2) at (\u,0) {};
\node[pnode] (b3) at (\u,-\v) {};
\node[protocol] (b) at (\u,0) {};
\node[yshift=-2,above right] at (b.north west) {\footnotesize $\pi_B$};
\node (bob) at (\t,0) {};

\node[thinResource] (keyBox) at (0,\v) {};
\node[draw] (key) at (0,\v) {key};
\node[yshift=-2,above right] at (keyBox.north west) {\footnotesize Secret key};
\node[thinResource] (channel) at (0,-\v) {};
\node[yshift=-1.5,above right] at (channel.north west) {\footnotesize
  Insecure channel};
\node (eveleft) at (-.4,-1.75) {};
\node (everight) at (.4,-1.75) {};
\node (ajunc) at (eveleft |- a3) {};
\node (bjunc) at (everight |- b3) {};

\draw[sArrow] (key) to node[auto,swap,pos=.3,yshift=-2] {$(x,z)$} (a1);
\draw[sArrow] (key) to node[auto,pos=.3,yshift=-2] {$(x,z)$} (b1);

\draw[sArrow] (alice.center) to  node[auto,pos=.4] {$\psi$} (a2);
\draw[sArrow] (b2) to node[auto,pos=.65] {$\psi'$} (bob.center);

\draw[sArrow] (a3) to (ajunc.center)
to node[pos=.8,auto,swap] {$\rho$} (eveleft.center);
\draw[sArrow] (everight.center) to node[pos=.2,auto,swap] {$\rho'$}
(bjunc.center) to (b3);

\end{tikzpicture}

\end{centering}
\caption[Quantum one-time pad]{\label{fig:qotp.real}The concrete
  setting of the QOTP, with Alice accessing the left interface, Bob
  the right one and Eve the lower interface. The QOTP encrypts a
  message $\psi$ by applying bit and phase flips, $\rho \coloneqq Z^z
  X^x \psi X^x Z^z$, and decrypts by applying the reverse operation,
  $\psi' \coloneqq X^x Z^z \rho' Z^z X^x$.}
\end{figure}

To prove that this protocol constructs the ideal confidentiality
resource, we need to find a simulator $\sigma_E$ that, when plugged
into the $E$\=/interface of the ideal resource, emulates the
communication on the insecure quantum channel and finds the
appropriate inputs $\rho^{\inn}_E$ and $\cE$ that correspond to Eve's
tampering, so that ideal and concrete cases are indistinguishable. In
other words, we need to find a $\sigma_E$ such that
\begin{equation} \label{eq:QOTPconfidentiality}
  \pi_A\aR\pi_B = \aS\sigma_E.
\end{equation}

In the concrete setting, the distinguisher accessing $\pi_A\aR\pi_B$
can choose an arbitrary input $\rho^{\inn}_{AR}$, apply an arbitrary
map $\cD$ to the state on the quantum channel (output at the
$E$\=/interface) and its own system $R$, and put the result back on
the quantum channel. After decryption by $\pi_B$, it ends up with the
final state $\rho^{\out}_{BR}$. We depict this for one-qubit messages
in \figref{QOTP:fig}, by rearranging \figref{fig:qotp.real} as a
circuit with the addition of the purifying system $R$ and map $\cD$.

  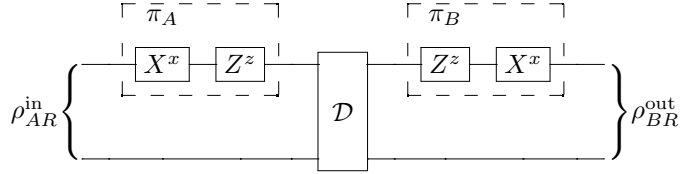
\begin{figure}[htb]
  \[\hspace{0.3cm}
  \Qcircuit @C=1em @R=0.5em @!R {
 &&&&&&& \pi_A&&&&&\pi_B&&&&&&&&&\\
 &&&&&   &\qw &\gate{X^x} & \gate{Z^z} & \qw & \multigate{2}{\cD} & \qw &\gate{Z^{z}} & \gate{X^{x}}& \qw &\qw &\\
  &&&&\rho_{AR}^{\inn}   \vast\lbrace\ \ &&     &  & &                 &  &  && & &&\ \ \ \vast\rbrace \rho_{BR}^{\out}  \\
  &&&&& & \qw & \qw &\qw&\qw& \ghost{D} &\qw&\qw&\qw&\qw&\qw\gategroup{1}{8}{2}{9}{1em}{--}\gategroup{1}{13}{2}{14}{1em}{--}&
  }
  \] 
  \caption[Quantum one-time pad as circuit]{\label{QOTP:fig}
    Interaction of the distinguisher and the QOTP.}
  \end{figure}

  In the ideal setting, the simulator $\sigma_E$ needs to simulate the
  quantum channel and provide the ideal resource $\aS$ with
  information allowing it to generate the same output
  $\rho^{\out}_{BR}$ as in the concrete case. It does this by
  outputting half an EPR pair (for every qubit of the message) at its
  outer interface, and transmitting the other half along with any
  state it received at its outer interface to the ideal resource. It
  also provides the ideal resource with the ``instructions'' $\cE$ to
  gate teleport the real input through the map $\cD$ of the
  distinguisher, i.e., it teleports the input using the EPR half,
  registers the possible bit and phase flips, and outputs the second
  state received after having corrected the bit and phase flips from
  the teleportation. Plugging this simulator into the $E$\=/interface
  of \figref{fig:qotp.ideal} along with the distinguisher's input
  $\rho^{\inn}_{AE}$ and map $\cD$, and rewriting it as a circuit for
  one-qubit messages results in \figref{QOTP.ideal:fig}.

 \begin{figure}[htb]
  \begin{centering}
  \includegraphics[clip,scale=1.05]{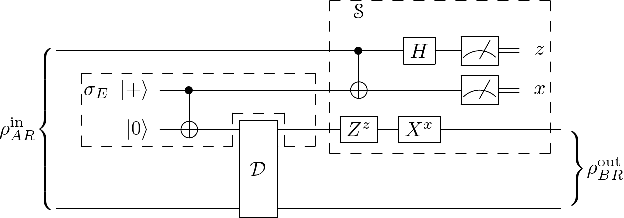} 

  \end{centering}
  \caption[Confidential channel and simulator as
  circuit]{\label{QOTP.ideal:fig}Interaction of the ideal
    confidentiality resource $\aS$ and the simulator $\sigma_E$ with
    the distinguisher. $\aS$ does not leak any information to the
    adversary, it receives inputs from Alice ($\rho_A^{\inn}$) and the
    simulator, and transmits some state to Bob. $\sigma_E$ \--- which
    does give information to the adversary \--- has no access to the
    confidential message $\rho_A^{\inn}$.}
\end{figure}

We now show that the circuits from \figref{QOTP:fig} and
\figref{QOTP.ideal:fig} are indistinguishable, hence
\eqnref{eq:QOTPconfidentiality} holds. The argument generalizes
straightforwardly to multiple qubit messages. We first rearrange
\figref{QOTP.ideal:fig} by grouping the state preparation (performed
by $\sigma_E$) and the actual teleportation (performed by $\aS$). This
results in \figref{QOTP.intermediate:fig}.

 \begin{figure}[htb]
  \[\hspace{-0.2cm}
  \Qcircuit @C=1em @R=0.5em @!R {
  &&&&&&&&&&&&&&&&\\
  &&\qw &\qw&\qw&\qw&\qw& \qw    & \ctrl{1}& \gate{H} & \meter &\cw & \hspace{-0.cm}z\\
  &&&&&&\lstick{\ \ \, \ket{+}} &\ctrl{1}&   \targ  & \qw & \meter&\cw &\hspace{-0.cm}x\\
 & \rho_{AR}^{\inn} \Vast\lbrace\hspace{10mm}&&&&&\lstick{\ket{0}}  &\targ   &\qw &\qw & \qw&\qw& \qw& \multigate{2}{\cD}& \qw &\gate{Z^{z}} & \gate{X^{x}} &\qw \\
 & &&&&&&& &&&&&&&&&&\hspace{5mm}\vast\rbrace\rho^{\out}_{BR}\\
   & &\qw &\qw&\qw&\qw&\qw   & \qw &\qw &\qw &\qw & \qw &\qw& \ghost{B}&\qw&\qw &\qw &\qw  \gategroup{2}{6}{4}{12}{3em}{--}
  }
  \]
  \caption[Quantum one-time pad and confidential channel
  hybrid]{\label{QOTP.intermediate:fig} Reformulation of
    \figref{QOTP.ideal:fig} by grouping the simulator and the
    teleportation step of the ideal confidentiality resource. The
    circuit in the dashed box simply encrypts the input with a random
    bit and phase flip, and therefore corresponds to $\pi_A$.}
  \end{figure}
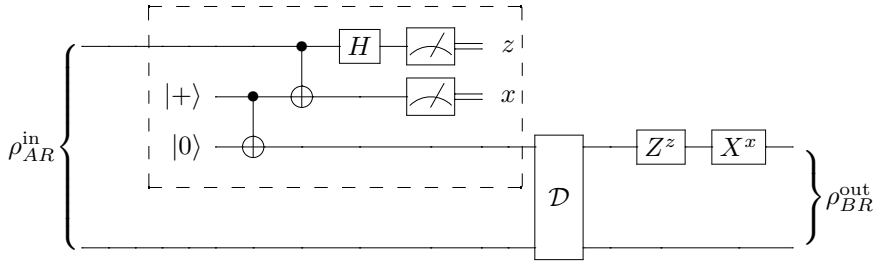
  
The circuit in the dashed box of \figref{QOTP.intermediate:fig}
teleports the input from the first wire to the third wire (without
correcting the random flips). This is equivalent to simply
performing a random bit and phase flip on the input, which is
exactly what is done by the QOTP in \figref{QOTP:fig}.

\subsubsection{Security}
\label{sec:blind.bfk.security}

In this section we prove that the UBQC protocol
(Protocol~\ref{prot:UBQC}) provides perfect blindness, i.e., we find a
simulator $\sigma_B$ such that the two interactive boxes in
\figref{fig:dqc.sec} are indistinguishable. Similarly to the one-time
pad proof sketch from \secref{sec:blind.bfk.otp}, we construct a
simulator which sends only EPR pair halves and random strings, then
transmits the other halves and the transcript to the ideal blind DQC
resource. Whenever a one-time padded quantum state should have been
sent, the ideal resource teleports it using the EPR half, and uses the
bit and phase flips of the teleportation as one-time pad key. And
whenever a random string $r$ was sent instead of some one-time padded
string $s$, the ideal resources sets $r \oplus s$ as the random key
used to encrypt and send $s$.

To prove that the real and ideal settings are identical, we replace
steps of the protocol by equivalent steps, until we end up with the
desired simulator and ideal resource.

Protocol~\ref{prot:UBQC} does not explicitly model the information
that is intentionally allowed to leak. This information consists in
the size of the brickwork state (which leaks upper bounds on the input
state size and computation size), and whether the last column
of the brickwork state should be measured, i.e., whether the output of
the protocol is classical or quantum. It is simply assumed that this
information is known by the server (Bob), otherwise it could not
perform the desired computation. For simplicity we also avoid modeling
this information in the following. The protocol and proof can however
be trivially changed to include it.

\begin{thm}
  \label{thm:bfk}
  The DQC protocol described in Protocol \ref{prot:UBQC} provides perfect blindness.
\end{thm}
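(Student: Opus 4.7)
The plan is to construct an explicit simulator $\sigma_B$ for which $\pi_A\aR = \aS^b\sigma_B$ exactly, so that perfect blindness follows from the second condition of \eqnref{eq:dqc.b}. The construction closely mimics the QOTP proof sketch of \secref{sec:blind.bfk.otp}: every qubit that Alice would send in Protocol~\ref{prot:UBQC} is replaced by one half of a fresh EPR pair, every secret angle by a uniformly random one, and Alice's actual preparation, encryption and decryption are performed post hoc, via gate teleportation, by a map $\cE$ that $\sigma_B$ hands to the ideal resource $\aS^b$.

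Concretely, I would define $\sigma_B$ as follows. Whenever Alice would send a qubit -- either a one-time padded input qubit (step~\ref{step:otp}) or a brickwork qubit $\sket{+_{\theta_{x,y}}}$ (step~\ref{step:theta}) -- $\sigma_B$ prepares a fresh EPR pair, sends one half to Bob and stores the other. Whenever Alice would send a classical angle $\delta_{x,y}$, $\sigma_B$ draws $\delta_{x,y}$ uniformly from $\{k\pi/4\}_{k=0,\dots,7}$ and forwards it. All of Bob's replies $s_{x,y}$ and his returned output qubits are recorded, and at the end $\sigma_B$ flips the filtered bit to $b=1$ and invokes $\aS^b$ with $(\psi_B,\cE)$, where $\psi_B$ bundles the stored EPR halves with Bob's returned output and $\cE$ is the map described next. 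The map $\cE$ first gate-teleports $\psi_A$ into Bob's system through the stored input EPR halves, recording the Pauli measurement outcomes; it then, for each $x$ and $y$, samples $(i_x,\theta_{x,y},r_{x,y})$ uniformly at random subject to both the constraints \eqnref{PhiEq} and \eqnref{PhiEqOTP} and the already-committed values of $\delta_{x,y}$ and of the teleportation outcomes (with Pauli corrections absorbed into $\theta$ via the identity $Z_\theta Z^b = Z_{\theta+b\pi}$); it finishes by applying the measurement-basis commitments that fix the brickwork EPR halves to $\sket{+_{\theta_{x,y}}}$ and then Alice's final decoding of step~\ref{OutputCorrection} on Bob's returned output qubits.

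The equivalence $\pi_A\aR = \aS^b\sigma_B$ would then be established by a hybrid circuit rewrite in the style of Figures~\ref{QOTP:fig}--\ref{QOTP.intermediate:fig}: starting from the ideal side, each ``send an EPR half now, gate-teleport the true state later'' pair is grouped into a single gadget, which by the standard gate-teleportation identity is indistinguishable from Alice's direct preparation of $X^{i_x}Z_{\theta_{x,0}}\psi_A$ (respectively $\sket{+_{\theta_{x,y}}}$) with uniformly random parameters. Once all the gadgets collapse, the resulting circuit is step for step identical to the honest execution $\pi_A\aR$. The main obstacle I anticipate is justifying that the post-hoc choice of $(\theta_{x,y},r_{x,y})$ compatible with the already-committed $\delta_{x,y}$ is always well-defined and yields the correct joint distribution. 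This will follow from the fact that in the real protocol $\theta_{x,y}$ is uniform on $\{k\pi/4\}$ and $r_{x,y}$ is uniform on $\{0,1\}$, so that $\delta_{x,y}$ is uniform on $\{k\pi/4\}$ independently of $\phi'_{x,y}$; consequently $\sigma_B$ may commit $\delta_{x,y}$ first and $\cE$ may solve for $(\theta_{x,y},r_{x,y})$ consistently. The bulk of the remaining work is propagating this deferred choice through the entire brickwork so that the joint state is identical -- not merely $\eps$-close -- to that of the real protocol, yielding perfect blindness.
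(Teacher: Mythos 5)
Your proposal is correct and follows essentially the same route as the paper's proof: the paper likewise replaces each qubit Alice sends by an EPR half, replaces each $\delta_{x,y}$ by a uniformly random angle (solving $\theta'_{x,y}=\delta_{x,y}-\phi'_{x,y}$ post hoc, with the deferred rotated-basis measurement determining $r_{x,y}$), and then regroups the resulting steps into a simulator and an ideal-resource map, exactly as in your gate-teleportation gadget collapse. The only presentational difference is that the paper stages this as a chain of explicitly written equivalent protocols rather than a single hybrid rewrite.
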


\begin{proof}
  To prove that $\pi_A\aR = \aS^{\blind}\sigma_B$, we successively
  modify the protocol $\pi_A$, replacing some steps with equivalent
  steps that implement the same map, resulting in several intermediary
  protocols, until we achieve a version which corresponds to
  $\aS^{\blind}\sigma_B$.

\begin{algorithm}[htb]

\caption[Equivalent UBQC protocol I]{UBQC, equivalent protocol for Alice, first version}

\begin{flushleft}
\textbf{The protocol}
\vspace{-7pt}
\end{flushleft}
\begin{enumerate}
 \setlength{\itemsep}{-1pt}
\item \textbf{State preparation} 
\vspace{-7pt}
\begin{enumerate}
 \setlength{\itemsep}{-1pt}

\item \label{step:otp2} For each $x \in [n]$, Alice prepares
  an EPR pair $(\sket{00}+\sket{11})/\sqrt{2}$ and sends half to
  Bob. She picks an angle $\theta'_{x,0} \in \{k \pi / 4\}_{k=0}^{7}$ uniformly at
  random, and applies $Z_{\theta'_{x,0}}$ to the $\ith{x}$ qubit of the input
  $\rho_{in}$. She then teleports the resulting qubit using her half
  of the EPR pair, and registers the values of the bit and phase flips
  resulting from the teleportation in $i_{x}$ and $r_{x,0}$,
  respectively.

\item If $i_x = 1$, Alice updates the measurement angles $\phi_{x,0}$
  and $\phi_{x,1}$ (see \eqnref{PhiEqOTP}).

\item \label{step:theta2} For each column $y \in [m-1]$, and
  each row $x \in [n]$, Alice prepares an EPR pair
  $(\sket{00}+\sket{11})/\sqrt{2}$ and sends half to Bob. She then
  picks an angle $\theta'_{x,y}\in \{k \pi / 4\}_{k=0}^{7}$ uniformly
  at random, performs a $Z_{\theta'_{x,y}}$ rotation followed by a
  Hadamard $H$ on her half of the pair, and measures it in the
  computational basis. She stores the result in $r_{x,y}$.
\end{enumerate}

\item \textbf{Interaction and measurement}
 \vspace{-3pt}

 For $y = 0, \ldots , m-1$, repeat\\
 \hspace*{\parindent}\hspace*{\parindent}  For $x = 1, \ldots, n$, repeat
\begin{enumerate}
 \setlength{\itemsep}{-1pt}
\item Alice computes the updated measurement angle $\phi'_{x,y}$ (see
  \eqnref{PhiEq}).
\item \label{step:delta2} Alice computes $\delta_{x,y} = \phi'_{x,y} +
  \theta'_{x,y}$ and transmits this to Bob.
\item Alice receives a bit $s_{x,y} \in \{0,1\}$ from Bob.
\item If $r_{x,y} = 1$, Alice flips $s_{x,y}$;
otherwise she does nothing.
\end{enumerate}
 \setlength{\itemsep}{-1pt}
 \item \textbf{Output Correction}
 \vspace{-7pt}
\begin{enumerate}
 \setlength{\itemsep}{-1pt}
\item Alice receives $n$ qubits from Bob, and performs the final Pauli
  corrections $\{Z^{s_{x,m}^{Z} }X^{s_{x,m}^{X}} \}_{x=1}^{n}$ on
  these qubits.
\end{enumerate}

\end{enumerate}

\label{prot:UBQC2}
\end{algorithm}

The first intermediary protocol is given by
Protocol~\ref{prot:UBQC2}. Compare Step~\ref{step:otp} of
Protocol~\ref{prot:UBQC} and Step~\ref{step:otp2} of
Protocol~\ref{prot:UBQC2}. In the former, Alice picks random values
$\theta_{x,0}$ and $i_x$ and performs corresponding phase and bit
rotations on the $\ith{x}$ input qubit. In the latter, she performs a
random $\theta'_{x,0}$ phase rotation, and teleports the resulting
state. For teleportation outcomes $i_x$ and $r_{x,0}$, and setting
$\theta_{x,0} \coloneqq \theta'_{x,0} + \pi r_{x,0}$, Bob holds
exactly the same state. Since the different values of $i_x$ and
$\theta_{x,0}$ occur with the same (uniform) probabilities in both
protocols, these implement identical maps.

Likewise, compare Step~\ref{step:theta} of Protocol~\ref{prot:UBQC}
and Step~\ref{step:theta2} of Protocol~\ref{prot:UBQC2}. In the former
Alice sends a state $\sket{+_{\theta_{x,y}}}$ to Bob; in the latter
Bob ends up holding the state $\sket{+_{\theta'_{x,y}+\pi
    r_{x,y}}}$. If Alice sets $\theta_{x,0} \coloneqq \theta'_{x,0} +
\pi r_{x,0}$ in her internal memory, all states of the systems are
identical for both protocols.

Finally, the only other difference between these protocols is in
Steps~\ref{step:delta} and \ref{step:delta2} of the two protocols,
respectively.  In the former, Alice sends Bob the angle $\phi'_{x,0} +
\theta_{x,0} + \pi r_{x,0}$, for some randomly picked bit $r_{x,0}$;
in the latter, she sends $\phi'_{x,0} + \theta'_{x,0}$. But as we've
already established, these two angles are identical, and occur with
the same (uniform) probabilities.

\begin{algorithm}[htb]

  \caption[Equivalent UBQC protocol II]{UBQC, equivalent protocol for
    Alice, second version}

\begin{flushleft}
\textbf{The protocol}
\vspace{-7pt}
\end{flushleft}
\begin{enumerate}
 \setlength{\itemsep}{-1pt}
\item \textbf{State preparation} 
\vspace{-7pt}
\begin{enumerate}
 \setlength{\itemsep}{-1pt}

\item \label{step:otp3} For each $x \in [n]$, Alice prepares
  an EPR pair $(\sket{00}+\sket{11})/\sqrt{2}$ and sends half to
  Bob. She performs the first measurement of a teleportation that
  determines the bit flip, i.e., for each $x$ she performs a CNOT on
  the corresponding EPR half using the input qubit as control, and measures
  the EPR half in the computational basis. She records the outcome in
  $i_{x}$.
\item If $i_x = 1$, Alice updates the measurement angles $\phi_{x,0}$
  and $\phi_{x,1}$ (see \eqnref{PhiEqOTP}).
\item \label{step:theta31} For each column $y \in [m-1]$, and each
  row $x \in [n]$, Alice prepares an EPR pair
  $(\sket{00}+\sket{11})/\sqrt{2}$ and sends half to Bob.
\end{enumerate}

\item \textbf{Interaction and measurement}
 \vspace{-3pt}

 For $y = 0, \ldots , m-1$, repeat\\
 \hspace*{\parindent}\hspace*{\parindent}  For $x = 1, \ldots, n$, repeat
\begin{enumerate}
 \setlength{\itemsep}{-1pt}
\item Alice computes the updated measurement
  angle $\phi'_{x,y}$ (see \eqnref{PhiEq}).
\item \label{step:delta3}
Alice picks an angle $\delta_{x,y}\in \{k \pi /
  4\}_{k=0}^{7}$ uniformly at random, and sends it to Bob.
\item Alice receives a bit $s_{x,y} \in \{0,1\}$ from Bob.
\item \label{step:theta32} Alice computes $\theta'_{x,y} = \delta_{x,y} - \phi'_{x,y}$. She
  then applies $Z_{\theta'_{x,y}}$, followed by a Hadamard $H$ and a
  measurement in the computational basis to the $\ith{x}$ qubit of
  the input $\rho_{in}$ if $y = 0$, and to the corresponding EPR half
  if $y > 0$. She stores the result in $r_{x,y}$.
\item If $r_{x,y} = 1$, Alice flips $s_{x,y}$;
otherwise she does nothing.
\end{enumerate}
 \setlength{\itemsep}{-1pt}
 \item \textbf{Output Correction}
 \vspace{-7pt}
\begin{enumerate}
 \setlength{\itemsep}{-1pt}
\item Alice receives $n$ qubits from Bob, and performs the final Pauli
  corrections $ \{Z^{s_{x,m}^{Z} }X^{s_{x,m}^{X}} \}_{x=1}^{n}$ on
  these qubits.
\end{enumerate}

\end{enumerate}

\label{prot:UBQC3}
\end{algorithm}

Now, compare Protocol~\ref{prot:UBQC2} and
Protocol~\ref{prot:UBQC3}. The main difference is between
Step~\ref{step:delta2} of Protocol~\ref{prot:UBQC2} and
Step~\ref{step:delta3} of Protocol~\ref{prot:UBQC3}. In the former,
Alice had picked $\theta'_{x,y}$ uniformly at random, and sends Bob
$\delta_{x,y}$, a one-time padded version of $\phi'_{x,y}$ with
$\theta'_{x,y}$ as the key; hence $\delta_{x,y}$ is uniformly
distributed. In the latter protocol, Alice instead picks
$\delta_{x,y}$ uniformly at random (in Step~\ref{step:delta3}), then
computes $\theta'_{x,y} \coloneqq \delta_{x,y} - \phi'_{x,y}$ (in
Step~\ref{step:theta32}) to get the value of the uniform key used to
encrypt $\phi'_{x,y}$.

In Protocol~\ref{prot:UBQC2}, Alice used the value of $\theta'_{x,y}$
in Steps~\ref{step:otp2} and \ref{step:theta2}. Since $\theta'_{x,y}$
is not available at those stages of Protocol~\ref{prot:UBQC3}, the
corresponding steps are delayed until this value is available. Hence
Step~\ref{step:otp3} of Protocol~\ref{prot:UBQC3} only consists in
performing the first part of the teleportation (which commutes with
the $Z_{\theta'_{x,0}}$ rotation) and in Step~\ref{step:theta31} Alice
only sends half an EPR pair. In Step~\ref{step:theta32}, after
computing $\theta'_{x,y}$, Alice completes those two steps by
performing the missing operations.

\begin{algorithm}[htbp]

\caption[UBQC resource and simulator]{UBQC, simulator and ideal resource}

\begin{flushleft}
\textbf{The simulator}
\vspace{-7pt}
\end{flushleft}
\begin{enumerate}[label=\arabic*., ref=\arabic*]
 \setlength{\itemsep}{-1pt}

\item For each column $y \in \{0, \ldots ,
  m-1\}$, and each row $x \in [n]$, the simulator prepares an EPR pair
  $(\sket{00}+\sket{11})/\sqrt{2}$ and outputs half at its outer
  interface.

\item For each column $y \in \{0, \ldots ,
  m-1\}$, and each row $x \in [n]$, the simulator picks an angle
  $\delta_{x,y}\in \{k \pi / 4\}_{k=0}^{7}$ uniformly at random, and
  outputs it at its outer interface. It receives some response
  $s_{x,y} \in \{0,1\}$.

\item The simulator receives $n$ qubits, which correspond to the last
  (output) layer.

\item \label{step:IR-message} The simulator transmits all EPR pair half, all angles
  $\delta_{x,y}$, bits $s_{x,y}$ and output qubits to the ideal blind delegated
  quantum computation resource, along with instructions to perform the
  operations described hereafter.
\end{enumerate}

\begin{flushleft}
\textbf{The ideal blind DQC resource}
\vspace{-7pt}
\end{flushleft}
\begin{enumerate}[label=\arabic*., ref=\arabic*]
 \setlength{\itemsep}{-1pt}

\item The blind DQC resource receives the input $\rho_{in}$ and
  a description  of the computation given by angles $\phi_{x,y}$ at its
  $A$\=/interface, and all the information described in
  Step~\ref{step:IR-message} above at its $B$\=/interface.
\item For each $x \in [n]$, it performs the first
  measurement of a teleportation of the input, i.e.,
  for each $x$ it performs a CNOT on the corresponding EPR half using
  the input qubit as control, and measures the EPR half in the
  computational basis. It records the outcome in $i_{x}$.
\item If $i_x = 1$, it updates the measurement angles $\phi_{x,0}$
  and $\phi_{x,1}$ (see \eqnref{PhiEqOTP}).

\item  For $y = 0, \ldots , m-1$, repeat\\
 \hspace*{\parindent}\hspace*{\parindent}  For $x = 1, \ldots, n$, repeat
\begin{enumerate}
\item It computes the updated measurement
  angle $\phi'_{x,y}$ (see \eqnref{PhiEq}).
\item It computes $\theta'_{x,y} = \delta_{x,y} - \phi'_{x,y}$. It
  then applies $Z_{\theta'_{x,y}}$, followed by a Hadamard $H$ and a
  measurement in the computational basis to the $\ith{x}$ qubit of
  the input $\rho_{in}$ if $y = 0$, and to the corresponding EPR half
  if $y > 0$. It stores the result in $r_{x,y}$.
\item If $r_{x,y} = 1$, it flips $s_{x,y}$;
otherwise it does nothing.
\end{enumerate}

\item The ideal blind DQC resource performs the final Pauli
  corrections $ \{Z^{s_{x,m}^{Z} }X^{s_{x,m}^{X}} \}_{x=1}^{n}$ on the
  received output qubits, and outputs the result at its $A$\=/interface.
\end{enumerate}

\label{prot:UBQC4}
\end{algorithm}

Protocol~\ref{prot:UBQC4} consists in exactly the same steps as
Protocol~\ref{prot:UBQC3}, but their order has been rearranged, and
the different parts have been renamed ``simulator'' and ``ideal
resource''. The ideal blind DQC resource constructed meets the
requirements of \defref{def:dqc.b}, we have $\pi_A\aR = \aS^{\blind}\sigma_B$
and conclude the proof.
\end{proof}

\subsection{One-way communication}
\label{sec:blind.oneway}

If a protocol only requires one-way communication from Bob to Alice,
the protocol model described in \secref{sec:dqc.model.concrete} can be
simplified: it only consists in two operations. Bob generates a state
$\tau$, which he sends to Alice on the channel $C$. She then applies
some operation $\cE : \lo{AC} \to \lo{A}$ to her input and $\tau$, and
outputs the contents of her system $A$.

\begin{thm}
  \label{thm:oneway}
  Any DQC protocol $\pi$ with one-way communication from Bob
  to Alice provides perfect blindness.
\end{thm}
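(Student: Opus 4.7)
The plan is to exploit the fact that, in a one-way protocol, Bob's interaction with Alice consists of a single outgoing message followed by nothing. Consequently, whatever a dishonest Bob does can be summarised by the (possibly entangled) quantum state he chooses to send on the channel $C$, and by using the filtered $\cE$-input of $\aS^b$ the simulator can simply forward that state together with a description of Alice's local decoding operation.

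Concretely, I would define the simulator $\sigma_B$ as follows: it flips the activation bit $b$ to $1$ on the ideal resource; it receives a state on system $C$ at its outer interface (the message that a dishonest Bob would send in the real protocol); it then inputs this state as $\psi_B$ to $\aS^b$ on its inner interface, together with the classical description of the fixed CPTP map $\cE:\lo{AC}\to\lo{A}$ implemented by $\pi_A$ after receiving Bob's message. By \defref{def:dqc.b}, $\aS^b$ thereupon outputs $\cE(\psi_A\tensor\psi_B)$ at Alice's interface.

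To verify $\pi_A\aR = \aS^b\sigma_B$, I would use the distinguisher characterization from \secref{sec:systems.dist}. An arbitrary distinguisher chooses an input $\psi_{AR}$, feeds $\psi_A$ into the system it interacts with, and eventually produces some $\tau_{CR'}$ to be sent on the channel $C$ (possibly with an internal reference $R'$). In the real setting, Alice's protocol then produces the final state $(\cE\tensor\id_{RR'})(\psi_{AR}\tensor\tau_{CR'})$ at Alice's outer interface, while the distinguisher retains $R'$. In the ideal setting, $\sigma_B$ relays $\tau_C$ and the description of $\cE$ to $\aS^b$, which outputs exactly $(\cE\tensor\id_{RR'})(\psi_{AR}\tensor\tau_{CR'})$ at Alice's interface, and again the distinguisher retains $R'$. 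The two resources therefore induce identical joint states, so their distinguishing advantage is zero.

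There is really no hard step here: the whole argument rests on the observation that without any Alice-to-Bob communication, Bob cannot accumulate any information depending on $\psi_A$, so his entire ``view'' in both worlds is what he himself prepared. The only point that needs to be checked carefully is that the cheating interface of $\aS^b$ given in \defref{def:dqc.b} is expressive enough to accommodate an arbitrary $\tau_{CR'}$ entangled with the distinguisher's reference, which it is, because the map $\cE$ acts only on the $A$ and $B=C$ registers held by $\aS^b$, and any entanglement with the external system $R'$ is preserved by tensoring with $\id_{R'}$.
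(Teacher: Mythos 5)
Your proof is correct and follows essentially the same route as the paper: the simulator forwards the single message it receives as the $\psi_B$ input to $\aS^b$ together with a description of Alice's decoding map $\cE$, making the real and ideal resources identical. Your additional verification that entanglement with the distinguisher's reference is handled correctly is a reasonable elaboration of what the paper leaves implicit.
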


\begin{proof}
  The simulator $\sigma_B$ works as follows. It receives some state
  $\psi_C$ from the distinguisher, and provides it to the ideal
  resource $\aS^{\blind}$ along with a description of the map $\cE$
  that is used by $\pi_A$. Alice's output is thus $\cE(\psi_{AC})$,
  and we immediately have $d(\pi_AR,S\sigma_B) = 0$.
\end{proof}

This proof does not mention the permitted leaks at the
$B$\=/interface. This is because protocols with one-way communication
make the (implicit) assumption that this information is known to the
server. Alternatively, one could include a single message from Alice
to Bob containing this information, and adapt the proof above
accordingly.

\appendix
\appendixpage

\section{Distance measures for subnormalized states}
\label{app:dist}

In \secref{sec:systems.dist} we introduced the trace distance
$D(\rho,\sigma)$ between two quantum states. Another widely used
measure is the fidelity, defined as
\[ F(\rho,\sigma) \coloneqq \trace{\sqrt{\rho^{1/2}\sigma\rho^{1/2}}}.\]

When dealing with subnormalized states, we need to generalize these
measures to retain their properties. The following distance notions
are treated in detail in \cite{TCR10}, and we refer to that work for
more information.

For any two subnormalized states $\rho,\sigma \in \sno{}$, we define
the generalized trace distance as
\[\bar{D}(\rho,\sigma) \coloneqq D(\rho,\sigma) +
\frac{1}{2}|\tr \rho - \tr \sigma|,\] and the generalized fidelity as
\[\bar{F}(\rho,\sigma) \coloneqq F(\rho,\sigma) +
\sqrt{(1-\tr \rho)(1 - \tr \sigma)}.\]

The (generalized) fidelity has a useful property, known as Uhlmann's
theorem (see \cite{NC00} or \lemref{lem:uhlmann} here below), which
states that for any two states $\rho,\sigma$, there exist
purifications of these states which have the same fidelity. We define
a metric, the \emph{purified distance}, based on the fidelity, so as
to retain this property:
 \[P(\rho,\sigma) \coloneqq \sqrt{1-\bar{F}^2(\rho,\sigma)}.\]

This metric coincides with the generalized distance for pure states,
and is larger otherwise.
\begin{lem}[See \protect{\cite[Lemma 6]{TCR10}}]
\label{lem:purified}
Let $\rho,\sigma \in \sno{}$. Then \[\bar{D}(\rho,\sigma) \leq
P(\rho,\sigma) \leq \sqrt{2\bar{D}(\rho,\sigma)}.\]\end{lem}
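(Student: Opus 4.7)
The plan is to reduce both inequalities to their normalized counterparts---the well-known Fuchs--van de Graaf inequalities---by means of the standard embedding trick for subnormalized states. Concretely, I would enlarge the Hilbert space by one dimension and set $\tilde\rho \coloneqq \rho \oplus (1-\tr\rho)\proj{e}$ and $\tilde\sigma \coloneqq \sigma \oplus (1-\tr\sigma)\proj{e}$, where $\sket{e}$ is a unit vector orthogonal to $\hilbert$. These are genuine normalized density operators on $\hilbert \oplus \mathbb{C}$.

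The first key step is to verify that the generalized trace distance and generalized fidelity of $\rho,\sigma$ agree with the ordinary trace distance and fidelity of $\tilde\rho,\tilde\sigma$. Since $\tilde\rho - \tilde\sigma$ is block diagonal with blocks $\rho-\sigma$ and $(\tr\sigma - \tr\rho)\proj{e}$, the trace norm splits as a sum, giving $D(\tilde\rho,\tilde\sigma) = D(\rho,\sigma) + \tfrac{1}{2}|\tr\rho - \tr\sigma| = \bar D(\rho,\sigma)$. Likewise, because the one-dimensional block of $\tilde\rho^{1/2}\tilde\sigma\tilde\rho^{1/2}$ decouples from the $\hilbert$-block, one obtains $F(\tilde\rho,\tilde\sigma) = F(\rho,\sigma) + \sqrt{(1-\tr\rho)(1-\tr\sigma)} = \bar F(\rho,\sigma)$. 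Consequently $P(\rho,\sigma) = \sqrt{1-F(\tilde\rho,\tilde\sigma)^2}$, so the claim reduces to showing, for normalized states $\tilde\rho,\tilde\sigma$,
\[ D(\tilde\rho,\tilde\sigma) \;\le\; \sqrt{1-F(\tilde\rho,\tilde\sigma)^2} \;\le\; \sqrt{2\,D(\tilde\rho,\tilde\sigma)}. \]

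Both of these follow from the Fuchs--van de Graaf inequalities $1-F \le D \le \sqrt{1-F^2}$, which I would either invoke as a known result or sketch via Uhlmann's theorem (for the right inequality, pick optimal purifications and use that the trace distance between pure states equals $\sqrt{1-|\langle\phi|\psi\rangle|^2}$; for the left, use the variational characterization of $F$ in terms of measurements). Given these, the right inequality of the target statement is immediate, while the left follows from the factorization $1-F^2 = (1-F)(1+F) \le 2(1-F) \le 2D$, applied in the tilde system.

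The only genuinely delicate step is the pair of identities $D(\tilde\rho,\tilde\sigma)=\bar D(\rho,\sigma)$ and $F(\tilde\rho,\tilde\sigma)=\bar F(\rho,\sigma)$; once these are established, everything else is bookkeeping on top of the Fuchs--van de Graaf bounds. I would expect this verification, and in particular the fidelity identity (which relies on commuting supports across the direct sum so that square roots distribute blockwise), to be the main technical point, but it is still essentially a one-line computation.
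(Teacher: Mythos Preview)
Your proposal is correct. The paper does not actually prove this lemma; it merely cites \cite[Lemma~6]{TCR10}, and your embedding argument---reducing to the normalized case via $\tilde\rho=\rho\oplus(1-\tr\rho)\proj{e}$ and then invoking the Fuchs--van de Graaf inequalities---is precisely the route taken in that reference, so there is nothing to compare.
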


Uhlmann's theorem restated for the purified distance is as follows.
\begin{lem}[See \protect{\cite[Lemma 8]{TCR10}}]
\label{lem:uhlmann}
Let $\rho,\sigma \in \sno{A}$ and $\varphi \in \sno{AR}$ be a
purification of $\rho$. Then there exists a purification $\psi \in
\sno{AR}$ of $\sigma$ such that $P(\rho,\sigma) =
P(\varphi,\psi)$.\end{lem}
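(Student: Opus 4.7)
The plan is to reduce the subnormalized Uhlmann statement to the standard Uhlmann theorem for normalized states by a rescaling argument, and then observe that the extra ``trace-deficit'' terms in the definitions of $\bar F$ and $P$ match automatically because $\varphi$ and $\psi$ have the same traces as $\rho$ and $\sigma$, respectively.

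First I would record a pure-state identity: for any subnormalized pure states $\varphi = |\varphi\rangle\!\langle\varphi|$ and $\psi = |\psi\rangle\!\langle\psi|$ in $\sno{AR}$, the fidelity is $F(\varphi,\psi) = |\langle\varphi|\psi\rangle|$. This follows from a direct computation starting from $F(\rho,\sigma) = \tr\sqrt{\sqrt{\rho}\,\sigma\,\sqrt{\rho}}$, noting that for a rank-one $\varphi$ one has $\sqrt{\varphi} = |\varphi\rangle\!\langle\varphi|/\sqrt{\tr\varphi}$, whence $\sqrt{\varphi}\,\psi\,\sqrt{\varphi} = \tfrac{|\langle\varphi|\psi\rangle|^2}{\tr\varphi}|\varphi\rangle\!\langle\varphi|$ is rank one and its trace square root equals $|\langle\varphi|\psi\rangle|$.

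Next I would lift standard Uhlmann to the subnormalized setting by rescaling. Assuming $\tr\rho,\tr\sigma>0$, set $\rho' \coloneqq \rho/\tr\rho$ and $\sigma' \coloneqq \sigma/\tr\sigma$, and write $|\varphi\rangle = \sqrt{\tr\rho}\,|\varphi'\rangle$, so that $|\varphi'\rangle$ is a \emph{normalized} purification of $\rho'$ on $AR$. The standard (normalized) Uhlmann theorem then yields a normalized purification $|\psi'\rangle \in \no{AR}$ of $\sigma'$ with $|\langle\varphi'|\psi'\rangle| = F(\rho',\sigma')$. I would then \emph{define} $|\psi\rangle \coloneqq \sqrt{\tr\sigma}\,|\psi'\rangle$, so that $\psi \coloneqq |\psi\rangle\!\langle\psi| \in \sno{AR}$ is a purification of $\sigma$. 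By the homogeneity $F(a\rho,b\sigma) = \sqrt{ab}\,F(\rho,\sigma)$, the inner product scales identically, giving
\begin{equation*}
  |\langle\varphi|\psi\rangle| = \sqrt{\tr\rho\cdot\tr\sigma}\,|\langle\varphi'|\psi'\rangle| = \sqrt{\tr\rho\cdot\tr\sigma}\,F(\rho',\sigma') = F(\rho,\sigma).
\end{equation*}

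The closing step is a short calculation: since $\tr\varphi = \tr\rho$ and $\tr\psi = \tr\sigma$ (purifications preserve the trace), combining the two preceding steps gives
\begin{equation*}
  \bar F(\varphi,\psi) = |\langle\varphi|\psi\rangle| + \sqrt{(1-\tr\varphi)(1-\tr\psi)} = F(\rho,\sigma) + \sqrt{(1-\tr\rho)(1-\tr\sigma)} = \bar F(\rho,\sigma),
\end{equation*}
and hence $P(\varphi,\psi) = \sqrt{1-\bar F(\varphi,\psi)^2} = P(\rho,\sigma)$. I do not anticipate a substantive obstacle. The only care points are (i) the degenerate cases $\tr\rho=0$ or $\tr\sigma=0$, where the rescaling breaks down but the equality is immediate by choosing $|\psi\rangle = 0$ (respectively, any normalized purification of $\sigma$) and checking $\bar F$ by inspection; and (ii) the standing assumption, implicit in the statement, that $\dim\mathcal H_R$ is large enough to admit a purification of $\sigma$ on $AR$ --- the usual prerequisite of Uhlmann's theorem.
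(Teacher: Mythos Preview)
The paper does not prove this lemma; it simply cites \cite[Lemma~8]{TCR10}. Your argument is correct and self-contained: the rescaling reduction to the normalized Uhlmann theorem, together with the observation that the trace-deficit term in $\bar F$ is preserved under purification, is exactly what is needed. One small slip: in your handling of the degenerate cases you have the two branches swapped. When $\tr\sigma=0$ one must take $|\psi\rangle=0$ (the only purification of $\sigma=0$), and when $\tr\rho=0$ but $\tr\sigma>0$ one takes any purification $|\psi\rangle$ of $\sigma$ (which then has $\||\psi\rangle\|^2=\tr\sigma$, not $1$). Either way $F(\varphi,\psi)=0=F(\rho,\sigma)$ and the $\bar F$ correction terms match, so the conclusion stands.

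For comparison, the proof in \cite{TCR10} proceeds by \emph{embedding} rather than rescaling: one extends $\rho\in\sno{A}$ to a normalized state $\hat\rho \coloneqq \rho \oplus (1-\tr\rho)\proj{0}$ on $\hilbert_A\oplus\mathbb{C}$, so that $\bar F(\rho,\sigma)=F(\hat\rho,\hat\sigma)$ by construction, and then applies the ordinary Uhlmann theorem directly. That route avoids the degenerate-trace case split entirely and makes the role of the extra $\sqrt{(1-\tr\rho)(1-\tr\sigma)}$ term transparent as a genuine fidelity contribution from the appended dimension. Your rescaling approach is equally valid and arguably more elementary, since it does not require enlarging the Hilbert space; the embedding approach, on the other hand, generalizes more smoothly to other statements about the purified distance (monotonicity, triangle inequality) by reducing them wholesale to their normalized counterparts.
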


\section{Correctness}
\label{app:cor}

Intuitively, a protocol is correct if, when Bob behaves honestly,
Alice ends up with the correct output. This must also hold with
respect to a purification of the input.

\begin{deff} \label{def:cor.sa} A DQC protocol provides
  $\eps$\=/local\-/correctness, if, when both parties behave honestly,
  for all initial states $\psi_{AR}$, the map implemented by the
  protocol on Alice's input, $\cP_A : \lo{A} \to \lo{A}$ is 
  \begin{equation}
    \label{eq:cor.sa}
    \cP_A \close{\eps} \cU.\end{equation}
\end{deff}

It is straightforward, that this is equivalent to the composable
notion defined in \eqnsref{eq:dqc.b} and \eqref{eq:dqc.v} in
\secref{sec:dqc.security}.

\begin{lem}
\label{lem:cor}
A DQC protocol which provides $\eps$\=/local\-/correctness is also
$\eps$\=/correct.
\end{lem}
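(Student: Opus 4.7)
The plan is to unfold both sides of the composable correctness inequality into maps on a single input register, then observe that the diamond distance between them reduces directly to the trace distance bound that stand-alone correctness provides.

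First, I would identify what kind of object each side of $\pi_A\aR\pi_B \close{\eps} \aS^{b}\bot_B$ is. With Bob's honest protocol $\pi_B$ plugged in on one side, and Bob's filter $\bot_B$ obstructing the cheating interface on the other, neither resource has an open $B$-interface; both are one-sided resources that take an input $\psi_A$ and produce an output $\rho_A$. They therefore implement CPTP maps $\cP_A : \lo{A} \to \lo{A}$ and $\cU : \lo{A} \to \lo{A}$, respectively. The first equality follows from the description of the concrete setting in \secref{sec:dqc.model.concrete} (the composition of all the $\cE_i$ and $\cF_i$ maps together with the channel $\aR$ yields a single CPTP map), and the second from \defref{def:dqc.b}, which stipulates that when $b = 0$ the ideal resource $\aS^b$ outputs $\cU(\psi_A)$.

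Next, I would invoke the distinguishing metric identified in \secref{sec:systems.dist} for this class of resources: since both sides are maps from $\lo{A}$ to $\lo{A}$, the distinguishing advantage equals the diamond distance $\diamond(\cP_A, \cU) = \max_{\psi_{AR} \in \no{AR}} D\bigl(\cP_A(\psi_{AR}), (\cU \tensor \id_R)(\psi_{AR})\bigr)$, where the maximum can be restricted to $\dim\hilbert_R = \dim\hilbert_A$. But by \defref{def:cor.sa}, the state $\rho^{\psi}_{AR} = \cP_A(\psi_{AR})$ produced by an honest run of the protocol satisfies $D(\rho^{\psi}_{AR}, (\cU \tensor \id_R)(\psi_{AR})) \leq \eps$ for every $\psi_{AR}$. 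Taking the maximum over $\psi_{AR}$ yields $\diamond(\cP_A,\cU) \leq \eps$, which is exactly composable $\eps$-correctness.

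There is essentially no obstacle here; the only thing to be careful about is matching the quantifier over purifications in \defref{def:cor.sa} to the supremum defining the diamond norm, and verifying that the reference system $R$ plays the same role in both (as an untouched side system), which is immediate since both the real protocol and the ideal resource act as the identity on $R$.
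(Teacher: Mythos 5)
Your proof is correct and follows essentially the same route as the paper's: both identify $\pi_A\aR\pi_B$ and $\aS^b\bot_B$ as single-input, single-output resources at the $A$\=/interface implementing the maps $\cP_A$ and $\cU$ respectively, and then observe that the for-all quantifier over $\psi_{AR}$ in \defref{def:cor.sa} directly bounds the maximization defining the distinguishing advantage. The paper's version is merely more terse, leaving the diamond-norm bookkeeping implicit.
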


\begin{proof}
  The resources $\pi_A\aR\pi_B$ and $\aS\bot_B$ have only one input
  and output, both on the $A$\=/interface, they are therefore maps
  $\lo{A} \to \lo{A}$. In fact, $\pi_A\aR\pi_B = \cP_A$ and $\aS\bot_B
  = \cU$. So from \defref{def:cor.sa}, $\pi_A\aR\pi_B \close{\eps}
  \aS\bot_B$.
\end{proof}

\section{Applying the reduction}
\label{app:app}

The definitions of local\-/blindness and local\-/verifiability used in
this work are equivalent to those used to prove local\-/security for
most protocols in the literature, e.g., by Fitzsimons and
Kashefi~\cite{FK12} and Morimae~\cite{Mor14}. To prove that such
protocols are secure, it remains to show that they satisfy the
stronger definition of \emph{independent} local\-/verifiability
introduced in this work. We sketch in this section that this is the
case for \cite{FK12} and \cite{Mor14}, and leave it open to prove this
formally.

\subsection{DQC protocol of Fitzsimons and Kashefi}
\label{app:app.FK}

Fitzsimons and Kashefi~\cite{FK12} extend the DQC protocol of
\cite{BFK09} to include a new approach which allows for verifiability
as well. They do this by suggesting a novel resource\-/state for
measurement\-/based quantum computing, the geometry of which allows
the random positioning of \emph{trap qubits} (the number of which can
be a fraction of the overall computation size). To achieve this, Alice
is additionally empowered to produce the Z observable eigenstates
$\ket{0}, \ket{1}$ along with the 8 symmetric states from the XY plane
of the Bloch sphere. They prove that if the measurement results of
these trap qubits are not what the client Alice expects, she knows
that the server is cheating, and if no traps are triggered, Alice can
be sure (up to some error $\eps$) that the server is running the
correct protocol.


\begin{lem}
\label{lem:FK}
If the protocol of \cite{FK12} is run with parameters such that it has
error $\eps$, then it is
$4\sqrt{2}\eps^{1/4}N^2$\=/blind\-/verifiable, where $N$ is the
dimension of the subsystem of Alice's which is quantum.
\end{lem}

\begin{proof}[Proof sketch.]
The protocol of \cite{FK12} is an extension of the UBQC protocol of
\cite{BFK09} analyzed in \secref{sec:blind.bfk}, and also provides
perfect blindness.

The verifiability definition used in \cite{FK12} is expressed
differently from that of \defref{def:sa.v}. For a pure input
$\sket{\psi_{AR}}$, the correct output is $\sket{\cU\psi_{AR}}
\coloneqq \cU \tensor \id_R \sket{\psi_{AR}}$. The projector
\[\Pi \coloneqq \id_{AR} - \proj{\cU\psi_{AR}} - \proj{\err} \tensor \id_{R}\] defines
the space where an erroneous output is accepted, and the verifiability
criterion of \cite{FK12} can be reduced to
\begin{equation} \label{eq:FKverif} \trace{\Pi\rho_{AR}} \leq \eps,
\end{equation}
where $\rho_{AR}$ is the state of Alice and the reference system at
the end of the protocol. Note that the output can always be written as
a linear combination of the error flag and some accepted output,
\[\rho_{AR} = p \sigma_{AR} + (1-p) \proj{\err} \tensor \psi_R.\]
Plugging this in the two definitions of local\-/verifiability we find
that \defref{def:sa.v} is equivalent to requiring
$pD(\sigma_{AR},\sket{\cU\psi_{AR}}) \leq \eps$ and
\eqnref{eq:FKverif} is equivalent to having
$p(1-F^2(\sigma_{AR},\sket{\cU\psi_{AR}}) \leq \eps$, where
$D(\cdot,\cdot)$ is the trace distance (\secref{sec:systems.dist}) and
$F(\cdot,\cdot)$ is the fidelity (\appendixref{app:dist}). Using
standard bounds between the trace distance and fidelity, we find that
any protocol which respects \eqnref{eq:FKverif} for all pure $AR$
inputs provides $\sqrt{\eps}$\=/local\-/verifiability.

To prove that the protocol satisfies perfectly independent
$\sqrt{\eps}$\=/local\-/verifiability, consider the proof technique
for the security of the UBQC protocol~\cite{BFK09} analyzed in
\secref{sec:blind.bfk}. There, we showed that instead of running the
correct protocol with Bob, Alice could equivalently run it using EPR
pairs instead of her quantum input. And once the interaction with Bob
is over, she finishes the computation locally by gate teleporting her
input through Bob's operations and obtains the same final
output. Since the protocol of \cite{FK12} is an extension of the UBQC
protocol of \cite{BFK09}, the same technique can be applied. However
instead of gate teleporting the input, we are interested here in
measuring the trap qubits (and ignore the other EPR pairs that could
be used for the gate teleportation and computation of the final
output). By doing this, Alice can determine if Bob is cheating,
without needing to have any input, and the verification mechanism is
thus clearly independent of the input. To formally prove that it
provides perfect independence, we need to find alternative maps that
Bob can apply, and which result in him holding $\rho_{\bar{A}B}$, the
joint system of Alice's decision to accept or reject his input,
$\bar{A}$, and his side information, $B$ (see
\remref{rem:sa.ind}). This can be done by disregarding the
communication with Alice, and running this alternative protocol with
EPR pairs on his own.

Putting this together with the fact that \cite{FK12} satisfies the
local\-/correctness condition and \corref{cor:sa.comp} concludes this
proof.
\end{proof}

\subsection{DQC protocol of Morimae}
\label{app:app.Mor}

Morimae~\cite{Mor14} generalizes the protocol of \cite{MF13} with
one-way communication from Bob to Alice (in which Alice measures the
individual qubits of the resource state sent to her by the sever Bob)
to include a notion of verifiability. In a first step, Alice runs the
same protocol as \cite{MF13}, but instead of computing the task
received as input, she runs an alternative computation that generates
in the last layer a new resource state with randomly positioned trap
qubits. In a second step, Alice measures the individual qubits of this
new resource, but this time with the goal of running the computation
provided as input. If no traps are triggered, she can be sure (up to
some error $\eps$), that the server is behaving honestly and her
outcome is correct.

Morimae discusses the local\-/blindness and local\-/verifiability of
this protocol.  Given these two properties, we only need to show that
this protocol is independent local-verifiable for our
\corref{cor:sa.comp} to be applicable. The argument is similar to the
proof sketch of \lemref{lem:FK}: Alice does not need to know the input
to measure the trap qubits and decide if Bob is cheating. Thus, Bob
could run the protocol on his own \--- without knowing Alice's input
and choosing himself the position of the trap qubits \--- measuring
only the trap qubits in the last layer, not those used for
computation. At the end of which, he would hold exactly the same bit
as Alice that decides if the output is accepted or rejected.

\section*{Acknowledgments}
\addcontentsline{toc}{section}{Acknowledgments}

This material is based on research supported in part by the Singapore
National Research Foundation under NRF Award No.~NRF-NRFF2013-01. VD
acknowledges the support of the EPSRC Doctoral Prize
Fellowship. Initial part of this work was performed while VD was at
Heriot-Watt University, Edinburgh, supported by EPSRC (grant
EP/E059600/1). CP and RR are supported by the Swiss National Science
Foundation (via grant No.~200020-135048 and the National Centre of
Competence in Research `Quantum Science and Technology') and the
European Research Council -- ERC (grant No.~258932).


\newcommand{\etalchar}[1]{$^{#1}$}
\providecommand{\bibhead}[1]{}
\expandafter\ifx\csname pdfbookmark\endcsname\relax%
  \providecommand{\tocrefpdfbookmark}{}
\else\providecommand{\tocrefpdfbookmark}{%
   \phantomsection%
   \addcontentsline{toc}{section}{\refname}}%
\fi

\tocrefpdfbookmark

\end{document}